\documentclass{lmcs} 
\pdfoutput=1

\usepackage{lastpage}
\lmcsdoi{19}{1}{10}
\lmcsheading{}{\pageref{LastPage}}{}{}%
{Mar.~18,~2022}{Feb.~08,~2023}{}

\keywords{Active Learning, Learning Theory, Grammatical Inference, Multiplicity Automata, Interpretability \& Analysis of NLP Models}

\usepackage{hyperref}
\usepackage{amsmath}
\usepackage{amssymb}
\usepackage{amsthm}
\usepackage{amscd}
\usepackage{amsfonts}
\usepackage{float}
\usepackage{mathtools}
\usepackage{xspace}
\usepackage[utf8]{inputenc}
\usepackage{color, colortbl}
\usepackage{algorithm}
\usepackage[noend]{algpseudocode}
\usepackage{graphicx}
\usepackage{qtree}
\usepackage{multicol}
\usepackage{booktabs}
\usepackage{wrapfig}
\usepackage{tikz}
\usetikzlibrary{calc}
\usetikzlibrary{automata, positioning, arrows, shapes.geometric}
\usepackage[font=small,labelfont=bf]{caption}
\usepackage{stmaryrd} 

\newcommand\diagfil[4]{%
  \multicolumn{1}{p{#1}}{\hskip-\tabcolsep
  $\vcenter{\begin{tikzpicture}[baseline=0,anchor=south west,inner sep=0pt,outer sep=0pt]
  \path[use as bounding box] (0,0) rectangle (#1+2\tabcolsep,1.23\baselineskip);
  \node[minimum width={#1+2\tabcolsep},minimum height=\baselineskip+\extrarowheight+\belowrulesep+\aboverulesep,fill=#2] (box)at(0,-\aboverulesep) {};
  \fill [#3] (box.south west)--(box.north east)|- cycle;
  \node[anchor=center] at (box.center) {#4};
  \draw (#1+1.97\tabcolsep,0) -- (#1+1.97\tabcolsep,1.23\baselineskip);
  \end{tikzpicture}}$\hskip-\tabcolsep}}

\newcommand{\commentout}[1]{}

\newcommand{\dana}[1]{\colorbox{yellow}{......}\footnote{\colorbox{yellow}{\textsc{df: [}}#1{\colorbox{yellow}{]}}}}

\newcommand{\reals}{\mathbb{R}}

\newcommand{\biword}[2]{\ensuremath{\left(\begin{smallmatrix} #1 \\ #2 \end{smallmatrix}\right)}}

\newcommand{\grmrwgt}[2]{\ensuremath{\mathcal{W}_{#1}(#2)}}

\newcommand{\query}[1]{\textsc{#1}}
\newcommand{\mq}{\query{mq}}
\newcommand{\smq}{\query{smq}}
\newcommand{\eq}{\query{eq}}
\newcommand{\seq}{\query{seq}}

\newcommand{\complexityclass}[1]{\textbf{#1}}
\newcommand{\RP}{\complexityclass{RP}}
\newcommand{\NP}{\complexityclass{NP}}

\newcommand{\staralg}[1]{\ensuremath{{\textbf{#1}^*}}}
\newcommand{\lstar}{\staralg{L}}

\newcommand{\gstar}{\staralg{G}}
\newcommand{\mstar}{\staralg{M}}

\newcommand{\cstar}{\staralg{C}}

\newcommand{\deriv}{\ensuremath{\mathsf{T}}}
\newcommand{\struct}{\ensuremath{\mathsf{S}}}
\newcommand{\skels}{\ensuremath{\mathsf{S}}}
\newcommand{\skel}{\ensuremath{\mathsf{S}}}
\newcommand{\skelstotrees}{\ensuremath\mathsf{T}}

\newcommand{\aut}[1]{\mathcal{#1}}
\newcommand{\grmr}[1]{\ensuremath{\mathcal{#1}}}
\newcommand{\treesrs}[1]{\ensuremath{\mathcal{#1}}}

\newcommand{\treesEquiv}[1]{\equiv_{#1}}
\newcommand{\treesColin}[2]{\ltimes^{#1}_{#2}}

\newcommand{\classrepr}[2]{[#1]}
\newcommand{\classcoeff}[2]{\alpha[#1]}
\newcommand{\classind}[2]{\iota[#1]}

\newcommand{\tuple}[1]{\ensuremath\langle #1 \rangle}

\newcommand{\derives}{\xrightarrow{}}
\newcommand{\transderives}[1]{\ensuremath{\Rightarrow_{#1}}}
\newcommand{\Vars}{\ensuremath{\mathcal{V}}}

\newcommand{\Prob}{\mathbb{P}}

\theoremstyle{plain} \numberwithin{equation}{section}
\newtheorem{theorem}[thm]{Theorem}
\newtheorem{proposition}[theorem]{Proposition}
\newtheorem{corollary}[theorem]{Corrolary} 
\newtheorem{lemma}[theorem]{Lemma} 

\newcommand{\V}{\mathbb{V}}
\newcommand{\K}{\mathbb{K}}
\newcommand{\R}{\mathbb{R}}

\newcommand{\hm}[1]{\ensuremath{{#1}}}

\newcommand{\crank}{\ensuremath{\textsl{c-rank}_{+}}}
\newcommand{\rrank}{\ensuremath{\textsl{r-rank}_{+}}}
\newcommand{\prank}{\ensuremath{\textsl{rank}_{+}}}
\newcommand{\trees}{\textsl{Trees}}

\newcommand{\pos}{\textsl{span}_{+}}
\newcommand{\posspan}[1]{\pos(#1)}

\newcommand{\proc}[1]{\textsl{#1}}

\newcommand{\ExtractCMTA}{\proc{ExtractCMTA}}
\newcommand{\LearnCMTA}{\proc{LearnCMTA}}
\newcommand{\Close}{\proc{Close}}
\newcommand{\Consistent}{\proc{Consistent}}
\newcommand{\Complete}{\proc{Complete}}

\newcommand{\AlgCloseCMTA}{2\xspace}
\newcommand{\AlgConsistentCMTA}{3\xspace}
\newcommand{\AlgCompleteCMTA}{4\xspace}
\newcommand{\AlgExtractCMTA}{5\xspace}
\newcommand{\AlgLearnCMTA}{1\xspace}

\newcommand{\sema}[1]{{\llbracket}#1{\rrbracket}}%
\newcommand{\semagrmr}[1]{#1}%

\newcommand{\context}{\diamond}

\newcommand{\pref}{\textsl{Pref}}

\newcommand{\contextConcat}[2]{#1\llbracket#2\rrbracket}

\theoremstyle{plain} 


\begin{document}

\title[Learning of Structurally Unambiguous Probabilistic Grammars]{Learning of Structurally Unambiguous\texorpdfstring{\\}{ }Probabilistic Grammars}

\author[D.~Fisman]{Dana Fisman}	
\author[D.~Nitay]{Dolav Nitay}	
\author[M.~Ziv-Ukelson]{Michal Ziv-Ukelson}	

\address{Ben-Gurion University, Israel}	
\email{dana@cs.bgu.ac.il, dolavn@post.bgu.ac.il, michaluz@cs.bgu.ac.il}  





\begin{abstract}
  \noindent 	The problem of identifying a probabilistic context free grammar has two aspects: the first is determining the grammar's topology (the rules of the grammar) and the second is estimating probabilistic weights for each rule. Given the hardness results for learning context-free grammars in general, and probabilistic grammars in particular, most of the literature has concentrated on the second problem. In this work we address the first problem.  We restrict attention to \emph{structurally unambiguous weighted context-free grammars} (SUWCFG) and provide a query learning algorithm for \emph{structurally unambiguous probabilistic context-free grammars}  (SUPCFG). We show that SUWCFG can be represented using \emph{co-linear multiplicity tree automata} (CMTA), and provide a polynomial learning algorithm that learns CMTAs.  We show that the learned CMTA can be converted into a probabilistic grammar, thus providing  a complete algorithm for learning  a structurally unambiguous probabilistic context free grammar (both the grammar topology and the probabilistic weights) using structured membership queries and structured equivalence queries. 
  A summarized version of this work was published at AAAI 21~\cite{NitayFZ21}.
\end{abstract}

\maketitle


\section{Introduction}
Probabilistic context free grammars (PCFGs) constitute a computational model suitable for probabilistic systems which observe non-regular (yet context-free) behavior. They are vastly used in computational linguistics~\cite{Chomsky56}, natural language processing~\cite{church-1988-stochastic} and biological modeling, for instance, in probabilistic modeling of RNA structures~\cite{Grate95}.  Methods for learning PCFGs from experimental data have been studied for over half a century.
Unfortunately, there are various hardness results regarding learning context-free grammars in general and probabilistic grammars in particular. 
It follows from~\cite{Gold78}  that context-free grammars (CFGs) cannot be identified in the limit from positive examples, and from~\cite{Angluin90}  that CFGs cannot be identified in polynomial time using equivalence queries only. Both results are not surprising for those familiar with learning regular languages, as they hold for the class of regular languages as well. However, while regular languages can be learned using both membership queries and equivalence queries~\cite{Angluin87}, it was shown that learning CFGs using both membership queries and equivalence queries is computationally as hard as key cryptographic problems for which there is currently no known polynomial-time algorithm~\cite{AngluinK95}. de la Higuera elaborates more on  the difficulties of learning context-free grammars in his book~\cite[Chapter 15]{delaHiguera}.
Hardness results for the probabilistic setting have also been established. Abe and Warmuth have shown a computational hardness result for the inference of probabilistic automata, in particular, that an exponential blowup with respect to the alphabet size is inevitable unless $\RP = \NP$~\cite{AbeW92}. 

The problem of identifying a probabilistic grammar from examples has two aspects: the first is determining the rules of the grammar up to variable renaming and the second is estimating probabilistic weights for each rule. Given the hardness results mentioned above, most of the literature has concentrated on the second problem. Two dominant approaches for solving the second problem are the forward-backward algorithm for HMMs~\cite{Rabiner89} and the inside-outside algorithm for PCFGs~\cite{Baker79,LariY90}.

In this work we address the first problem. Due to the hardness results regarding learning probabilistic grammars using \emph{membership queries} and \emph{equivalence queries} (\mq\ and \eq) we use \emph{structured membership queries} and \emph{structured equivalence queries} (\smq\ and \seq), as was done by~\cite{Sakakibara88} for learning context-free grammars. \emph{Structured strings}, proposed by~\cite{LevyJ78}, are strings over the given alphabet that includes parentheses that indicate the structure of a possible derivation tree for the string. One can equivalently think about a structured string as a derivation tree in which all nodes but the leaves are marked with $?$, namely an \emph{unlabeled derivation tree}. 

It is known that the set of derivation trees of a given CFG constitutes a \emph{regular tree-language}, where a regular tree-language is a tree-language that can be recognized by a \emph{tree automaton} \cite{LEVY1978192}.
Sakakibara  has generalized Angluin's \lstar\ algorithm (for learning regular languages using \mq\ and \eq) to learning a tree automaton, and provided a polynomial learning algorithm for CFGs using \smq\ and \seq~\cite{Sakakibara88}.
Let $\deriv(\grmr{G})$ denote the set of derivation trees of a CFG $\grmr{G}$, and $\struct(\deriv(\grmr{G}))$ the set of unlabeled derivation trees (namely the structured strings of $\grmr{G}$). 
While a membership query (\mq) asks whether a given string $w$ is in the unknown grammar $\grmr{G}$, 
a structured membership query (\smq) asks whether a structured string $s$ is in   $\struct(\deriv(\grmr{G}))$ and a structured equivalence query (\seq) answers whether the queried CFG $\grmr{G}'$ is structurally equivalent to the unknown grammar $\grmr{G}$, and accompanies a negative answer with a structured string $s'$ in the symmetric difference of  $\struct(\deriv(\grmr{G}'))$ and $\struct(\deriv(\grmr{G}))$. 

In our setting, since we are interested in learning probabilistic grammars, an \smq\ on a structured string $s$ is answered by a weight $p\in[0,1]$ standing for the probability for $\grmr{G}$ to generate $s$, and a negative answer to an \seq\ is accompanied by a structured string $s$ such that $\grmr{G}$ and $\grmr{G}'$ generate $s$ with different probabilities, along with  the probability $p$ with which the unknown grammar $\grmr{G}$ generates $s$. 

Sakakibara works with tree automata to model the derivation trees of the unknown grammars~\cite{Sakakibara88}.
In our case the automaton needs to associate a weight with every tree (representing a structured string). We choose to work with the model of \emph{multiplicity tree automata}. A multiplicity tree automaton (MTA) associates with every tree a value from a given field $\K$. An algorithm for learning multiplicity tree automata, to which we refer as \mstar, was developed in~\cite{habrard2006learning,drewes2007query}.\footnote{Following a learning algorithm developed for multiplicity word automata~\cite{BergadanoV96,BeimelBBKV00}.} 

A probabilistic grammar is a special case of a weighted grammar and~\cite{abney1999relating,smith2007weighted} have shown that  convergent weighted CFGs (WCFGs) where all weights are non-negative and probabilistic CFGs (PCFGs) are equally expressive.\footnote{The definition of \emph{convergent} is deferred to the preliminaries.} We thus might expect to be able to use the learning algorithm \mstar\ to learn an MTA corresponding to a WCFG, and apply this conversion  to the result, in order to obtain the desired PCFG. However, as we show in  Proposition~\ref{prop:mstar_negative_mta}, there are probabilistic languages for which applying the \mstar\ algorithm results in an MTA with negative weights. Trying to adjust the algorithm to learn a positive basis may encounter the issue that for some PCFGs,  no finite subset of the infinite Hankel Matrix spans the entire space of the function, as we show in Proposition~\ref{prop:pcfg-no-finite-rank}.\footnote{The definition of the Hankel Matrix and its role in learning algorithms appears in the sequel.}  To overcome these issues we restrict attention to structurally unambiguous grammars (SUCFG, see section \ref{sec:sucfg}), which as we show, can be modeled using co-linear multiplicity automata (defined next).

We develop a polynomial learning algorithm, which we term \cstar, that learns a restriction of MTA, which we term \emph{co-linear multiplicity tree automata} (CMTA). We then show that a CMTA for a probabilistic language can be converted into a PCFG, thus yielding a complete algorithm for learning SUPCFGs using \smq s and \seq s as desired. 

A summarized version of this work was published at AAAI'21~\cite{NitayFZ21}.

\section{Preliminaries} 

This section provides the  definitions required for  \emph{probabilistic grammars} -- the object we design a learning algorithm for,
and \emph{multiplicity tree automata}, the object we use in the learning algorithm.
\subsection{Probabilistic Grammars}\label{subsec:pcfgs}

Probabilistic grammars are a special case of context free grammars 
where each production rule has a weight in the range $[0,1]$ and for each non-terminal, the sum of weights of its productions is one. 

	A \emph{context free grammar} (CFG) is a quadruple $\grmr{G}=\langle \Vars,\Sigma,R,S\rangle$, where
	$\Vars$ is a finite non-empty set of symbols called  \emph{variables} or \emph{non-terminals},
	$\Sigma$ is a finite non-empty set of symbols called the \emph{alphabet} or the \emph{terminals},
	$R\subseteq \Vars\times (\Vars\cup\Sigma)^{*}$ is a relation between variables and strings over $\Vars\cup\Sigma$, called the  \emph{production rules}, and $S\in \Vars$ is a special variable called the \emph{start variable}. 
	We assume the reader is familiar with the standard definition of CFGs and of derivation trees.

We say that $S\transderives{} w$ for a string $w\in\Sigma^*$ if there exists a derivation tree $t$ such that all leaves are in $\Sigma$ and when concatenated from left to right they form $w$. That is, $w$ is the \emph{yield} of the tree $t$. In this case we also use the notation $S\transderives{t} w$.
A CFG $\grmr{G}$ defines a set of words over $\Sigma$, the \emph{language generated by} $\grmr{G}$, which  is the set of words $w\in\Sigma^*$ such that $S\transderives{}  w$, and is denoted $\sema{\grmr{G}}$. For simplicity, we assume the grammar does not derive the empty word.

\paragraph{Weighted grammars}
A \emph{weighted grammar}  (WCFG) is a pair $\tuple{\grmr{G},\theta}$ where $\grmr{G}=\langle \Vars,\Sigma,R,S\rangle$ is a CFG and $\theta:R\rightarrow \R$ is a function mapping each production rule to a weight in $\R$.
A WCFG $\grmr{W}=\tuple{\grmr{G},\theta}$ defines a function from words over $\Sigma$ to weights in $\R$. 
The WCFG associates with a derivation tree $t$ its weight, which is defined as 
$$\grmr{W}(t)=\prod_{(V\derives \alpha)\in R } \theta(V\derives\alpha)^{\sharp_t(V\derives\alpha)}$$
where $\sharp_t(V\derives\alpha)$ is the number of occurrences of the production $V\derives\alpha$ in the derivation tree $t$.
We abuse notation and treat $\semagrmr{\grmr{W}}$ also as a function from $\Sigma^*$ to $\R$  defined as 
$\grmr{W}(w)=\sum_{S\transderives{t}w}\grmr{W}(t)$.
That is, the weight of $w$ is the sum of weights of the derivation trees yielding $w$, and if $w\notin\sema{\aut{G}}$ then $\semagrmr{\grmr{W}}(w)=0$.
If the sum of all derivation trees in $\sema{\grmr{G}}$, namely $\sum_{w\in\sema{\grmr{G}}}\grmr{W}(w)$, is finite we say that $\aut{W}$ is \emph{convergent}. 
Otherwise, we say that $\aut{W}$ is \emph{divergent}.

\paragraph{Probabilistic grammars}
A \emph{probabilistic grammar} (PCFG) is a WCFG 
$\grmr{P}=\tuple{\grmr{G},\theta}$ where $\grmr{G}=\langle \Vars,\Sigma,R,S\rangle$ is a CFG and $\theta:R\rightarrow [0,1]$ is a function mapping each production rule of $\grmr{G}$ to a weight in the range $[0,1]$ that satisfies 
$$1=\sum_{(V \derives \alpha_i) \in R} \theta(V\derives \alpha_i) $$ 
for every $V\in \Vars$.\footnote{Probabilistic grammars are sometimes called \emph{stochastic grammars (SCFGs)}.}
One can see that if $\grmr{P}$ is a PCFG then  the sum of all derivations equals $1$, thus $\grmr{P}$  is convergent.

\paragraph{Word and Tree Series} \label{sec:series}
While \emph{words} are defined as sequences over a given alphabet, trees are defined using a \emph{ranked alphabet},
an alphabet $\Sigma=\{\Sigma_0,\Sigma_1,\ldots,\Sigma_p\}$ which is a  tuple of alphabets $\Sigma_k$ where $\Sigma_0$ is non-empty.
Let $\trees(\Sigma)$ be the set of trees over $\Sigma$, where a node labeled $\sigma\in\Sigma_k$ for $0\leq k \leq p$ has exactly $k$ children. While a \emph{word language} is a function mapping all possible words (elements of $\Sigma^*$) to $\{0,1\}$, a \emph{tree language} is a function from all possible trees (elements of $\trees(\Sigma)$) to $\{0,1\}$. We are interested in assigning each word or tree a non-Boolean value, usually a weight $p\in[0,1]$. More generally, let $\K$ be an arbitrary domain, e.g. the real numbers. We are interested in functions mapping words or trees to values in $\K$. A function from $\Sigma^*$ to $\K$ is  called a \emph{word series}, and a function from $\trees(\Sigma)$ to $\K$ is referred to as a \emph{tree series}. CFGs define word languages, and induce tree languages (the parse trees deriving the words defined by the grammars).
WCFGs and PCFGs define word series, where in the latter case the map is from $\Sigma^*$ to $[0,1]$.

\subsection{Multiplicity Tree Automata}\label{sec:mta}

\paragraph{Word and Tree Automata}
\emph{Word automata} are machines that recognize word languages, i.e. they define a function from $\Sigma^*$ to $\{0,1\}$.
\emph{Tree automata} are machines that recognize tree languages, i.e. they define a function from $\trees(\Sigma)$ to $\{0,1\}$.
\emph{Multiplicity word automata} (MA) are machines to implement word series $f:\Sigma^*\rightarrow\K$ where $\K$ is a field.
\emph{Multiplicity tree automata} (MTA) are machines to implement tree series  $f:\trees(\Sigma)\rightarrow\K$ where $\K$ is a field.

\paragraph{Multiplicity Automata}
Multiplicity automata can be thought of as an algebraic extension of automata, in which reading an input letter is implemented by matrix multiplication.
In a multiplicity word automaton with dimension $m$ over alphabet $\Sigma$, for each $\sigma\in\Sigma$ there is an $m$ by $m$ matrix, $\mu_\sigma$, whose entries are values in $\K$ where intuitively the value of entry $\mu_\sigma(i,j)$ is the weight of the passage from state $i$ to state $j$. The definition of multiplicity tree automata is a bit more involved; it makes use of multilinear functions as defined next. 

\paragraph{Multilinear functions}\label{par:multilinear functions}
Let $\V=\K^d$ be the $d$ dimensional vector space over $\K$. Let $\eta : \V^k \rightarrow \V$ be a $k$-linear function.
We can represent $\eta$ by a $d$ by $d^k$ matrix over $\K$. 

\begin{exa}
For instance, if $\eta : \V^3 \rightarrow \V$ and $d=2$ (i.e. $\V = \K^2$) then $\eta$ can be represented by the $2\times 2^3$ matrix $M_\eta$ provided in Fig~\ref{fig:mult-aut-matrices} where $c^{i}_{j_1 j_2 j_3} \in \K$ for $i,j_1,j_2,j_3\in\{1,2\}$. 
Then $\eta$, a function taking $k$ parameters in $\V=\K^d$, can be computed by multiplying the matrix $M_\eta$ with a vector for the parameters for $\eta$. Continuing this example, given the parameters $\textbf{x} = (x_1 \ \ x_2)$, $\textbf{y} = (y_1 \  \ y_2)$, $\textbf{z} = (z_1\ \  z_2)$ the value
$\eta(\textbf{x},\textbf{y},\textbf{z})$ can be calculated using the multiplication $M_\eta P_{xyz}$ where the vector $P_{xyz}$  of size $2^3$ is provided in Fig~\ref{fig:mult-aut-matrices}.
\end{exa}
\begin{figure}
	\hspace{-3mm}
	\scalebox{.9}{\small{
                \begin{tabular}{ c@{\qquad}c}
			    $M_\eta$ = & $~~~P_{xyz} =~~~$ \\[1mm]
			    $\begin{pmatrix}
				c^1_{{111}} & c^1_{{112}}& c^1_{{121}} & c^1_{{122}} & c^1_{{211}} & c^1_{{212}}& c^1_{{221}} & c^1_{{222}} \\[1mm]
				c^2_{{111}} & c^2_{{112}}& c^2_{{121}} & c^2_{{122}} & c^2_{{211}} & c^2_{{212}}& c^2_{{221}} & c^2_{{222}} 
				\end{pmatrix}$
				& \qquad 
				$\begin{pmatrix}
				x_1 y_1 z_1 \\
				x_1 y_1 z_2 \\
				x_1 y_2 z_1 \\
				\ldots \\
				x_2 y_2 z_2 \\
				\end{pmatrix}$\phantom{--}
			\end{tabular}}
	}
	\caption{A matrix $M_\eta$ for a multi-linear function $\eta$ and a vector $P_{xyz}$ for the respective $3$ parameters.
	}\label{fig:mult-aut-matrices}
\end{figure} 
In general, if $\eta : \V^k \rightarrow \V$ is such that $\eta(\textbf{x}_1,\textbf{x}_2,\ldots,\textbf{x}_k) = \textbf{y}$ and $M_\eta$, the matrix representation of $\eta$,  is defined using the constants $c^{i}_{j_1 j_2 \ldots j_k}$ then 
\[\textbf{y}[i] =  \sum_{\left\{(j_1,j_2,\ldots,j_k)\in \{1,2,\ldots,d\}^k\right\}} \hspace{-1em} c^i_{j_1j_2\ldots j_k} \, \textbf{x}_1[j_1]\, \textbf{x}_2[j_2]\, \cdots \,\textbf{x}_k[j_k]\]

\paragraph{Multiplicity tree automata}
A \emph{multiplicity tree automaton} (MTA) is a tuple $\aut{M}=(\Sigma,\K,d,\mu,\lambda)$ where $\Sigma=\{\Sigma_0,\Sigma_1,\ldots,\Sigma_p\}$ is the given ranked alphabet, $\K$ is a field corresponding to the range of the tree-series, $d$ is a non-negative integer called the  automaton \emph{dimension}, $\mu$ and $\lambda$ are  the transition and output function, respectively, whose types are defined next.
Let $\V=\K^d$. Then  $\lambda$ is an element of $\V$, namely a $d$-vector over $\K$.  Intuitively, $\lambda$ corresponds to the final  values of the ``states'' of $\aut{M}$. The transition function $\mu$ maps each element $\sigma$  of $\Sigma$ to a dedicated transition function $\mu_\sigma$ such that given ${\sigma \in \Sigma_k}$ for ${0 \leq k \leq p}$ then $\mu_\sigma$ is a $k$-linear function from $\V^k$ to $\V$. 
The transition function $\mu$ induces a function from $\trees(\Sigma)$ to $\V$, defined as follows.
If $t=\sigma$ for some $\sigma\in \Sigma_0$, namely $t$ is a tree with one node which is a leaf, then $\mu(t)=\mu_\sigma$ (note that $\mu_\sigma$ is a vector in $\K^d$ when $\sigma\in\Sigma_0$). 
If $t=\sigma(t_1,\ldots,t_k)$, namely $t$ is a tree with root $\sigma\in \Sigma_k$ and children $t_1,\ldots,t_k$ then $\mu(t)=\mu_\sigma(\mu(t_1),\ldots,\mu(t_k))$. 
The automaton $\aut{M}$ induces a total function from $\trees(\Sigma)$ to $\K$ defined as follows: $\aut{M}(t)=\lambda\cdot \mu(t)$.

\begin{exa}
Fig.~\ref{fig:MTA} on the left provides an example of an MTA $\aut{M}=((\Sigma_0,\Sigma_2),\R,2,\mu,\lambda)$
		where    ${\Sigma_0=\{a\}}$ and ${\Sigma_2=\{b\}}$  implementing a tree series that returns the number of leaves in the tree. 
		Fig.~\ref{fig:MTA} on the right provides a tree where a node $t$ is annotated by $\mu(t)$. Since $\mu(t_\epsilon)=\biword{3}{1}$, 
		where $t_\epsilon$ is the root, the value of the entire tree is  $\lambda\cdot\biword{3}{1}=3$.
\end{exa}

\begin{figure} 
    \commentout
    {
	\scalebox{.8}{
	\noindent\makebox[\textwidth/2]{
    \hspace{10mm}\includegraphics[scale=0.99,page=8, clip, trim=2cm 22.5cm 9cm 1.8cm]{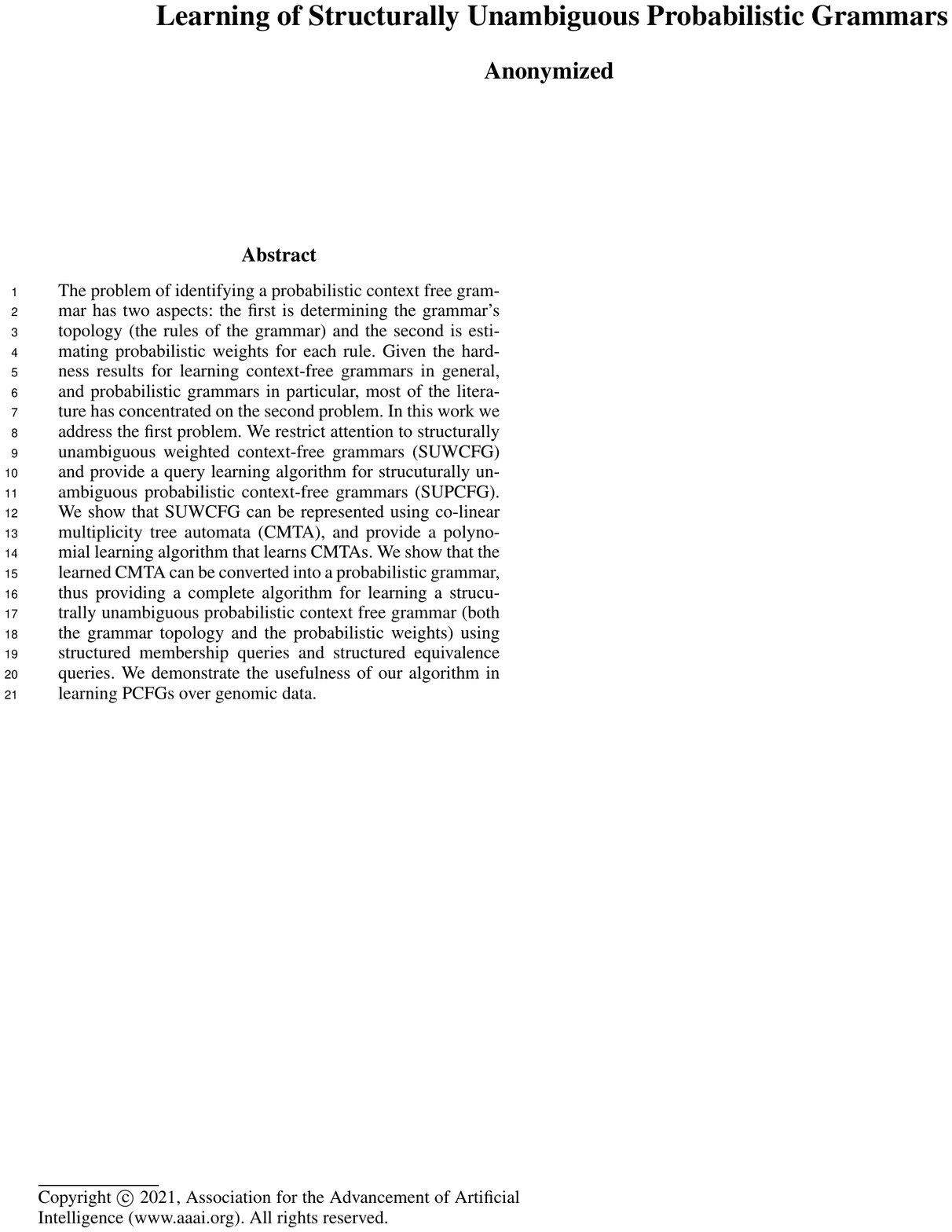}
    }}
    }
	\commentout
	{
		\begin{tabular}{cc@{\qquad}c}
		\includegraphics[scale=0.19]{MTA_def.png} 
		&
		\includegraphics[scale=0.19]{MTA_tree.png}
		&
		\includegraphics[scale=0.19]{structures_string.png}\\
		(I.i) & (I.ii) & (II)
		\end{tabular}
		}
		{
			\begin{tabular}{c@{\qquad\qquad\qquad}c}
			\begin{tabular}{l}
				$\lambda=\begin{pmatrix}1\\0
				\end{pmatrix}$ \ \ \ 
				$\mu_{a}= \begin{pmatrix}1\\1
				\end{pmatrix}$ 
				\\ \quad \\ \vspace{5mm}
				$\mu_{b} = \begin{pmatrix}
				0 & 1 & 1 & 0 \\[1mm]
				0 & 0 & 0 & 1 
				\end{pmatrix}$
				\end{tabular}
                &
                \begin{tabular}{l}
                \quad \\ \vspace{5mm}
				\begin{tikzpicture}
				\tikzstyle{myarrow}=[line width=.5mm,draw=#1,-triangle 45,postaction={draw, line width=4mm, shorten >=5mm, -}]
				\node[shape=circle, draw=none] (a1) at (2, 0) {$a$};
				\node[] (a1v) at (3.5, 0) {${\biword{1}{1}}$};
				\node[] (a2v) at (1.5, 0) {${\biword{1}{1}}$};
				\node[shape=circle, draw=none]  (a2) at (3, 0) {$a$};
				\node[shape=circle, draw=none] (a3) at (3.65, 1.0) {$a$};
				\node[] (a3v) at (4.10, 1.0) {${\biword{1}{1}}$};
				\node[shape=circle, draw=none]  (b1) at (2.5, 1.0) {$b$};
				\node[] (a3v) at (2.0, 1.0) {${\biword{2}{1}}$};
				\node[shape=circle, draw=none]  (b2) at (3, 2) {$b$};
				\node[] (b2v) at (2.5, 2) {${\biword{3}{1}}$};
				\draw (b2) -- (a3);
				\draw (b2) -- (b1);
				\draw (b1) -- (a1);
				\draw (b1) -- (a2);
				\end{tikzpicture} 
				\vspace{-5mm}
				\end{tabular}
	\end{tabular}}
	
	\caption{(Left) An MTA implementing a tree series that returns the number of leaves in the tree. 
		(Right) a tree where a node $t$ is annotated by $\mu(t)$. }\label{fig:MTA}
\end{figure}

\subsection{Contexts and Structured Trees}

\paragraph{Contexts}
When learning word languages, learning algorithms typically distinguish words $u$ and $v$ if there exists a suffix $z$ such that $uz$ is accepted but $vz$ is rejected or vice versa.
When learning tree languages, in a similar way we would like to distinguish between two trees $t_u$ and $t_v$,
if there exists a tree $t_z$ whose composition with $t_u$ and $t_v$ is accepted in one case and rejected in the other. 
To define this formally we need some notion to compose trees, more accurately we compose trees with \emph{contexts} as defined next.
Let $\Sigma=\{\Sigma_0,\Sigma_1,\ldots,\Sigma_p\}$ be a ranked alphabet. Let $\context$ be a symbol not in $\Sigma$. 
We use $\trees_\context(\Sigma)$ to denote all non-empty trees over $\Sigma' = \{ \Sigma_0 \cup \{\context\}, \Sigma_1,\ldots, \Sigma_p\}$  in which $\context$ appears exactly once.
Intuitively $\context$ indicates the place where a tree $t_u$ can be composed with a context $t_z$ yielding a tree that resembles $t_z$ but has $t_u$ as a sub-tree instead of the leaf $\context$.
We refer to an element of $\trees_\context(\Sigma)$ as a \emph{context}. Note that  at most one child of any node in a context $c$ is a context; the other ones are pure trees (i.e. elements of $\trees(\Sigma))$.
Given a tree ${t\in \trees(\Sigma)}$ and context $c\in \trees_\context(\Sigma)$ we use $\contextConcat{c}{t}$ for the tree $t'\in\trees(\Sigma)$ obtained from $c$ by replacing $\context$ with $t$.

\paragraph{Structured tree languages/series}
Recall that our motivation is to learn a word (string) series rather than a tree series, and due to hardness results on learning CFGs and PCFGs we resort to using \emph{structured strings}.  A \emph{structured string} is a string with parentheses exposing the structure of a derivation tree for the corresponding trees, as exemplified in Fig.~\ref{fig:structured-string}.

\begin{wrapfigure}{r}{0.2\textwidth}
    \vspace{-2mm}
	\scalebox{.8}{
    \includegraphics[scale=1.0,page=8, clip, trim=9.8cm 23.2cm 8.2cm 1.6cm]{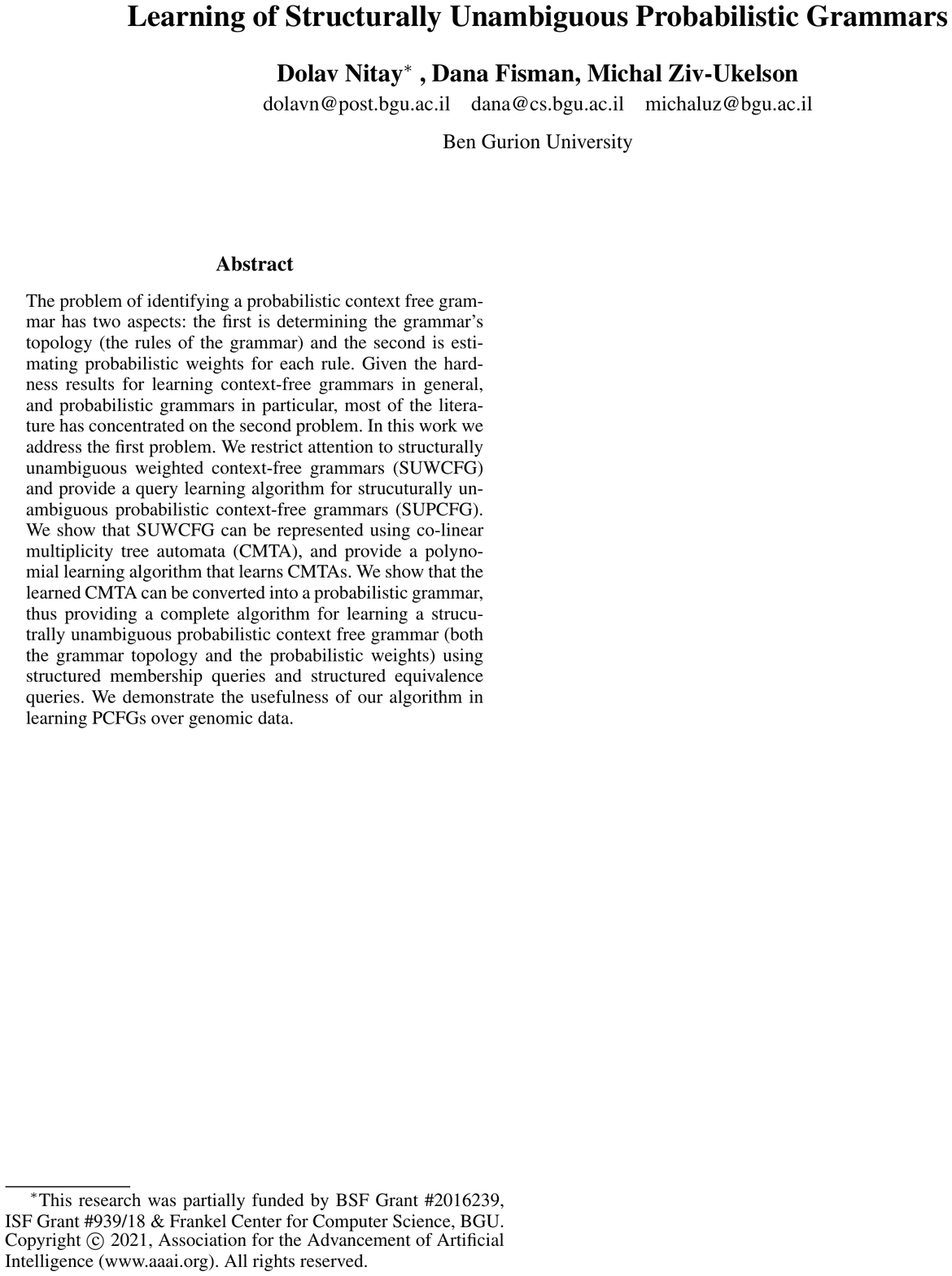}
    }
\caption{A derivation tree and its corresponding skeletal tree, which can be written as the structured string $((ab)c)$.}\label{fig:structured-string}
\end{wrapfigure} 

A \textit{skeletal alphabet} is a ranked alphabet in which we use a special symbol $?\notin \Sigma_0$ and for every $0<k\leq p$ the set $\Sigma_{k}$ consists only of the symbol $?$. For $t\in\trees(\Sigma)$, the skeletal description of $t$, denoted by $\skel(t)$, is a tree with the same topology as $t$, in which the symbol in all internal nodes is $?$, and the symbols in all leaves are the same as in $t$. Let $T$ be a set of trees. The corresponding skeletal set, denoted $\skels(T)$ is $\{\skels(t)~|~t\in T\}$.
Going from the other direction, given a skeletal tree $s$ we use $\skelstotrees(s)$ for the set $\{t\in \trees(\Sigma)~|~\skel(t)= s\}$.

A tree language over a skeletal alphabet is called a \textit{skeletal tree language}. And a mapping from skeletal trees to $\K$ is called a \emph{skeletal tree series}. 
Let $\treesrs{T}$  denote a tree series mapping trees in $\trees(\Sigma)$ to $\K$. 
By abuse of notations, given a skeletal tree $s$, we use $\treesrs{T}(s)$ for the sum of values $\treesrs{T}(t)$ for every tree $t$ of which $s=\skel(t)$.
That is,
$\treesrs{T}(s)=\sum_{t\in\skelstotrees(s)}\treesrs{T}(t)$.
Thus, given a tree series $\treesrs{T}$ (possibly generated by a WCFG or an MTA) we can treat $\treesrs{T}$ as a skeletal tree series.


\section{From Positive MTAs to PCFGs}\label{sec:CMTA2PCFG}
 
Our learning algorithm for probabilistic grammars builds on the relation between WCFGs with positive weights (henceforth PWCFGs) and PCFGs~\cite{abney1999relating,smith2007weighted}. In particular, we first establish that a \emph{positive multiplicity tree automaton} (PMTA), which is a multiplicity tree automaton (MTA) where all weights of both $\mu$ and $\lambda$ are positive, can be transformed into an equivalent  WCFG $\grmr{W}$.
That is, we  show that  a given PMTA $\aut{A}$  over a skeletal alphabet can be converted into a WCFG $\grmr{W}$ such that for every structured string $s$ we have that $\aut{A}(s)=\aut{W}(s)$. 
If the PMTA defines a convergent tree series (namely the sum of weights of all trees is finite) then so will the constructed WCFG. 
Therefore, given that the WCFG describes a probability distribution, we can apply the transformation of WCFG to a PCFG~\cite{abney1999relating,smith2007weighted} to yield a PCFG $\grmr{P}$ such that $\grmr{W}(s)=\grmr{P}(s)$, obtaining the desired PCFG for the unknown tree series.

\subsection{Transforming a PMTA into a PWCFG}
Let $\aut{A}=(\Sigma,\mathbb{R}_+,d,\mu,\lambda)$ be a PMTA over the skeletal alphabet $\Sigma=\{\Sigma_0,\Sigma_1,\ldots,\Sigma_p\}$. We define a PWCFG $\grmr{W}_\aut{A}=(\grmr{G}_\aut{A},\theta)$ for $\grmr{G}_{\aut{A}}=(\Vars,\Sigma_{0},R,S)$ as provided in Fig.~\ref{fig:eqs-pmta-to-pcfg} where $c^i_{i_{1},i_{2},...,i_{k}}$ is the respective coefficient in the matrix corresponding to $\mu_?$ for $?\in\Sigma_k$, ${1\leq k\leq p}$.

\begin{figure}[h]
	\centering
	\scalebox{1}{	
		{\hspace*{-5pt}
					$\begin{array}{l@{\,}l@{\quad}l}
					\Vars=& \{S\}\cup\{V_{i}~|~1\leq i\leq d\} &    \\
					R =&  \{  S\rightarrow V_{i} ~|~1\leq i\leq d\}\ \cup &   \phantom{..} \theta( S\rightarrow V_{i}) = \lambda[i] \\
					& \{  V_{i}\rightarrow \sigma  ~|~1\leq i\leq d,\ \sigma\in\Sigma_0 \}\  \cup  &  \phantom{..} \theta( V_{i}\rightarrow \sigma ) = \mu_\sigma[i] \\
					& \left\{ V_{i}\rightarrow V_{i_{1}}V_{i_{2}}...V_{i_{k}} ~\left|~ 1\leq i,i_1,\ldots,i_k \leq d, \  1 \leq k \leq p \right.\right\}  & 
					\begin{array}{l} \theta(V_{i}\rightarrow V_{i_{1}}V_{i_{2}}...V_{i_{k}}) =   c^i_{i_{1},i_{2},...,i_{k}}  \end{array}
					\end{array}$}}		
	\caption{Transforming a PMTA into a PCFG}\label{fig:eqs-pmta-to-pcfg}
\end{figure}
\begin{exa}
Let $\mathcal{A}=\langle \Sigma,\mathbb{R}_{+},2,\mu,\lambda\rangle$ where $\Sigma=\{\Sigma_{0}=\{a,b\}\cup\Sigma_{2}=\{?\}\}$,
and
\begin{center}
				$\lambda=\begin{pmatrix}1\\0
				\end{pmatrix}$ \ \  
				$\mu_{a}= \begin{pmatrix}0.25\\0.25
				\end{pmatrix}$ \ \  
				$\mu_{b}= \begin{pmatrix}0.25\\0.0
				\end{pmatrix}$ \ \  
				$\mu_{?} = \begin{pmatrix}
				0.25 & 0.25 & 0 & 0 \\[1mm]
				0 & 0 & 0 & 0.75
				\end{pmatrix}$.
\end{center}				
\medskip
After applying the transformation we obtain the following PCFG:
 \begin{align*}
 S&\longrightarrow N_{1}~[1.0]   \\
 N_{1}&\longrightarrow N_{1}N_{1}~[0.25]~~~|~~ N_{1}N_{2}~[0.25]~~~|~~ a~[0.25 ]~~~|~~b~[0.25 ]  \\
 N_{2}&\longrightarrow N_{2}N_{2}~[0.75]~~~|~~a~[0.25]   \\
 \end{align*}
\end{exa}

Proposition~\ref{prop:equiv1} states that the transformation preserves the weights.
\begin{proposition}\label{prop:equiv1}
	$\grmr{W}(t)=\mathcal{A}(t)\quad$ for every $t\in \trees(\Sigma)$.
\end{proposition}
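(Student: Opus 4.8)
The plan is to prove $\grmr{W}_\aut{A}(t) = \aut{A}(t)$ for every $t \in \trees(\Sigma)$ by structural induction on the tree $t$, matching the recursive definition of $\mu(t)$ against the recursive structure of derivation trees in $\grmr{G}_\aut{A}$. The key idea is to strengthen the induction hypothesis: rather than proving the two scalar values agree directly, I would prove a vector-level statement relating $\mu(t) \in \V = \K^d$ to the family of weights that the grammar assigns when \emph{starting from each variable $V_i$}. Concretely, for a subtree $t$ let $\grmr{W}_\aut{A}^{i}(t)$ denote the sum of weights of all $\grmr{G}_\aut{A}$-derivation trees that derive $t$ starting from the nonterminal $V_i$ (ignoring the $S \to V_i$ production). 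The claim to prove by induction is $\grmr{W}_\aut{A}^{i}(t) = \mu(t)[i]$ for all $i$; the final statement then follows by summing over $i$ weighted by $\lambda[i]$, since $\grmr{W}_\aut{A}(t) = \sum_i \theta(S \to V_i)\,\grmr{W}_\aut{A}^{i}(t) = \sum_i \lambda[i]\,\mu(t)[i] = \lambda \cdot \mu(t) = \aut{A}(t)$.

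First I would handle the base case, where $t = \sigma$ is a single leaf with $\sigma \in \Sigma_0$. Here the only derivation from $V_i$ is the rule $V_i \to \sigma$, so $\grmr{W}_\aut{A}^{i}(t) = \theta(V_i \to \sigma) = \mu_\sigma[i] = \mu(t)[i]$, using that $\mu(\sigma) = \mu_\sigma$ by definition. Next I would do the inductive step for $t = {?}(t_1,\ldots,t_k)$ with $? \in \Sigma_k$. A derivation of $t$ from $V_i$ must begin with some rule $V_i \to V_{i_1} V_{i_2} \cdots V_{i_k}$, and then each $V_{i_j}$ must independently derive the corresponding child $t_j$. Because the weight of a derivation tree is the product of rule weights over occurrences, and the subderivations are on disjoint subtrees, the total weight factors as a sum over all choices of $(i_1,\ldots,i_k)$ of $\theta(V_i \to V_{i_1}\cdots V_{i_k})$ times the product $\prod_{j} \grmr{W}_\aut{A}^{i_j}(t_j)$. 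Substituting $\theta(V_i \to V_{i_1}\cdots V_{i_k}) = c^i_{i_1,\ldots,i_k}$ and applying the induction hypothesis $\grmr{W}_\aut{A}^{i_j}(t_j) = \mu(t_j)[i_j]$ gives
\[
\grmr{W}_\aut{A}^{i}(t) = \sum_{(i_1,\ldots,i_k)} c^i_{i_1,\ldots,i_k}\, \mu(t_1)[i_1]\cdots \mu(t_k)[i_k],
\]
which is exactly the multilinear formula for $\mu_?(\mu(t_1),\ldots,\mu(t_k))[i] = \mu(t)[i]$ from the definition of multilinear functions in the preliminaries.

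The main obstacle I anticipate is bookkeeping rather than conceptual: I must argue carefully that every $\grmr{G}_\aut{A}$-derivation of $t$ decomposes \emph{uniquely} into a top rule plus independent subderivations of the children, so that the weight sum genuinely factors and no derivation is double-counted. This relies on the fact that the grammar's rule structure mirrors the tree topology exactly (each internal node of rank $k$ corresponds to a length-$k$ right-hand side), so the derivation tree and $t$ share the same shape and the correspondence is a bijection between derivations-from-$V_i$ and labelings of $t$'s internal nodes by variable-tuples. I would state this factorization as the crux of the inductive step and verify that the $\theta$-weight of a derivation tree, defined as the product over rule occurrences, splits precisely along the root rule and the $k$ child subtrees. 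One subtlety worth a remark is the handling of the start symbol $S$: since $S \to V_i$ is applied exactly once at the very top and never recursively, the summation over $\lambda[i]$ is clean and does not interfere with the inductive count over the internal $V$-structure.
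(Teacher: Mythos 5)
Your proposal is correct and follows essentially the same route as the paper: the paper's proof also strengthens to the coordinate-level claim that $\mu(t)[i]$ equals the total weight of derivations of $t$ rooted at $V_i$ (its Lemma~\ref{lem:vec-coord-vars}, proved by induction on tree height using the multilinear formula and $c^i_{i_1,\ldots,i_k}=\theta(V_i\to V_{i_1}\cdots V_{i_k})$), with the factorization of derivation weights over the root rule isolated as Lemma~\ref{lem:weight}, and then concludes exactly as you do by pairing with $\lambda[i]=\theta(S\to V_i)$. The only difference is organizational: the paper splits the factorization step into a separate lemma stated without proof, whereas you argue it inline as the crux of the inductive step.
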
 

Recall that given a WCFG $\tuple{\grmr{G},\theta}$, and a tree that can be derived from $\grmr{G}$,
namely some $t\in\deriv(\grmr{G})$, the weight of $t$ is given by $\theta(t)$.
Recall also that we are working with skeletal trees $s\in\struct(\deriv(\grmr{G}))$
and the weight of a skeletal tree $s$ is given by the sum of all derivation trees $t$
such that $s$ is the skeletal tree obtained from $t$ by replacing all non-terminals with $?$.

The following two lemmas and the following notations are used in  the proof of Proposition~\ref{prop:equiv1}.
For a skeletal tree $s$ and a non-terminal $V$ we use $\grmrwgt{V}{s}$ for the weight
of all derivation trees $t$ in which the root is labeled by non-terminal $V$ and $s$ is their skeletal form.

Assume $\grmr{G}=\langle \Vars,\Sigma,R,S\rangle$.
Lemma~\ref{lem:weight}  follows in a straightforward manner from the definition of $\mathcal{W}(\cdot)$ given in
\$\ref{subsec:pcfgs}.

\begin{lemma}\label{lem:weight}
   
    Let ${s=?(s_1,s_2,\ldots,s_k)}$. The following holds for every non-terminal $V\in\Vars$:
    %
    \[\grmrwgt{V}{s}=\sum_{(X_{1},X_{2},\ldots,X_{k})\in\Vars^k}\begin{array}{l}\theta(V\rightarrow X_{1} X_{2}\cdots X_{k})\cdot  \grmrwgt{X_{1}}{s_1}\grmrwgt{X_{2}}{s_2}\cdots\grmrwgt{X_{k}}{s_k}\end{array}\]
    %
\end{lemma}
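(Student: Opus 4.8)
The plan is to prove the identity by unfolding the definition of $\grmrwgt{V}{s}$ and exploiting the fact that both the weight $\grmr{W}(\cdot)$ of a derivation tree and the skeletal-form constraint factorize along the children of the root. By definition, $\grmrwgt{V}{s}$ is the sum of $\grmr{W}(t)$ over all derivation trees $t$ whose root is labeled $V$ and whose skeletal form $\skel(t)$ equals $s$. Since $s=?(s_1,\ldots,s_k)$ has an internal root with $k$ children, every such $t$ must likewise have a root with exactly $k$ children; hence its root applies a unique production of the form $V\rightarrow X_1 X_2\cdots X_k$, where $X_1,\ldots,X_k$ are the labels of the root's children, and $t$ is obtained by attaching to this root $k$ subtrees $t_1,\ldots,t_k$ with the root of $t_j$ labeled $X_j$.

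First I would record the weight factorization. Because the multiset of productions occurring in $t$ is the disjoint union of the single root production together with the productions occurring in $t_1,\ldots,t_k$, the occurrence count of any production $\rho$ in $t$ equals $\sum_{j} \sharp_{t_j}(\rho)$, plus one if $\rho$ is the root production. Substituting into the definition of $\grmr{W}$, the product defining $\grmr{W}(t)$ splits as
$$\grmr{W}(t) = \theta(V\rightarrow X_1\cdots X_k)\cdot \prod_{j=1}^{k}\grmr{W}(t_j).$$
Simultaneously the skeletal form is determined componentwise, $\skel(t)=?(\skel(t_1),\ldots,\skel(t_k))$, so the constraint $\skel(t)=s$ is equivalent to $\skel(t_j)=s_j$ for every $j$. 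This yields a bijection between the derivation trees $t$ counted by $\grmrwgt{V}{s}$ and the tuples consisting of a root production $V\rightarrow X_1\cdots X_k$ (with each $X_j\in\Vars$) together with subtrees $t_j$ rooted at $X_j$ satisfying $\skel(t_j)=s_j$.

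Finally I would assemble the sum. Summing the factorized weight over all such $t$ and using the bijection, the total splits into an outer sum over the root production and, for each fixed tuple $(X_1,\ldots,X_k)$, independent sums over the subtrees; interchanging summation with the product over the $k$ independent factors gives
$$\grmrwgt{V}{s} = \sum_{(X_1,\ldots,X_k)\in\Vars^k}\theta(V\rightarrow X_1\cdots X_k)\prod_{j=1}^{k}\Big(\sum_{t_j} \grmr{W}(t_j)\Big),$$
where each inner sum ranges over trees $t_j$ rooted at $X_j$ with $\skel(t_j)=s_j$ and is therefore exactly $\grmrwgt{X_j}{s_j}$ by definition, establishing the claim. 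The only real obstacle is bookkeeping: verifying that the decomposition of $t$ into a root production and subtrees is a genuine bijection (uniqueness of the root production given the children's labels, and realizability of every combination of admissible subtrees), and justifying the interchange of summation and product. The latter is immediate when each skeletal shape admits only finitely many derivation trees, and otherwise follows from nonnegativity and convergence of the weights in the setting of interest.
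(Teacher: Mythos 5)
Your proof is correct, and it is precisely the argument the paper has in mind: the paper gives no written proof of this lemma at all, merely remarking that it ``follows in a straightforward manner from the definition of $\mathcal{W}(\cdot)$,'' and your decomposition at the root production together with the distributive law is exactly that straightforward unfolding, carried out in full. Your closing concern about interchanging summation and product is moot in this setting, since a fixed skeletal tree admits only finitely many derivation trees (its internal nodes can only be relabeled by elements of the finite set $\Vars$), so every sum involved is finite.
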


Consider now the transformation from a PMTA to a WCFG (provided in Fig.~\ref{fig:eqs-pmta-to-pcfg}).
It associates with every dimension $i$ of the PMTA $\aut{A}=(\Sigma,\mathbb{R}_+,d,\mu,\lambda)$
a variable (i.e. non-terminal) $V_i$.
The next lemma considers the $d$-dimensional vector $\mu(s)$ computed by $\aut{A}$ and
states that its $i$-th coordinate holds the value $\grmrwgt{V_i}{s}$.

\begin{lemma}\label{lem:vec-coord-vars}
    Let $s$ be a skeletal tree, and let $\mu(s)=(v_1,v_2,\ldots,v_d)$.
    Then $v_i=\grmrwgt{V_i}{s}$ for every $1\leq i \leq d$.
\end{lemma}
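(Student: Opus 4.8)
The plan is to prove Lemma~\ref{lem:vec-coord-vars} by structural induction on the skeletal tree $s$, mirroring the inductive definition of $\mu$ on trees. The statement to establish is that for every skeletal tree $s$ and every coordinate $i$, the $i$-th entry of $\mu(s)$ equals $\grmrwgt{V_i}{s}$, the total weight of all derivation trees in the constructed grammar $\grmr{G}_\aut{A}$ that have $s$ as their skeletal form and whose root is labeled $V_i$. The two key ingredients are the recursive definition of $\mu$ via the multilinear transition functions $\mu_\sigma$, and the recursive characterization of $\grmrwgt{V}{s}$ given in Lemma~\ref{lem:weight}; the whole point is that these two recursions have the same shape once the transformation of Fig.~\ref{fig:eqs-pmta-to-pcfg} identifies the transition coefficients $c^i_{i_1,\ldots,i_k}$ with the rule weights $\theta(V_i \rightarrow V_{i_1}\cdots V_{i_k})$.

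For the base case I would take $s = \sigma$ for some leaf symbol $\sigma\in\Sigma_0$. Here $\mu(s)=\mu_\sigma$ is by definition the $d$-vector whose $i$-th entry is $\mu_\sigma[i]$. On the grammar side, the only derivation trees with root labeled $V_i$ and skeletal form $\sigma$ use a single rule $V_i\rightarrow\sigma$, whose weight is $\theta(V_i\rightarrow\sigma)=\mu_\sigma[i]$ by the transformation. Hence $\grmrwgt{V_i}{\sigma}=\mu_\sigma[i]$ and the two agree.

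For the inductive step I would take $s=?(s_1,\ldots,s_k)$ and assume the claim holds for each child $s_j$, so that $\mu(s_j)=(v^j_1,\ldots,v^j_d)$ with $v^j_\ell=\grmrwgt{V_\ell}{s_j}$. By the definition of $\mu$ on composite trees and the explicit formula for a $k$-linear function $\mu_?$ given via its coefficients $c^i_{j_1\ldots j_k}$, the $i$-th coordinate of $\mu(s)=\mu_?(\mu(s_1),\ldots,\mu(s_k))$ is
\[
\mu(s)[i]=\sum_{(j_1,\ldots,j_k)\in\{1,\ldots,d\}^k} c^i_{j_1 j_2\ldots j_k}\,\mu(s_1)[j_1]\,\mu(s_2)[j_2]\cdots\mu(s_k)[j_k].
\]
Substituting the induction hypothesis $\mu(s_j)[j_\ell]=\grmrwgt{V_{j_\ell}}{s_j}$ and replacing each coefficient by the corresponding rule weight $c^i_{j_1\ldots j_k}=\theta(V_i\rightarrow V_{j_1}\cdots V_{j_k})$ turns this into exactly the right-hand side of Lemma~\ref{lem:weight} instantiated at $V=V_i$, namely $\grmrwgt{V_i}{s}$, after re-indexing the summation variables $(j_1,\ldots,j_k)$ as the non-terminals $(X_1,\ldots,X_k)=(V_{j_1},\ldots,V_{j_k})$ ranging over $\Vars\setminus\{S\}$.

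The only real subtlety—and the step I expect to require the most care rather than ingenuity—is the bookkeeping of the index sets. Lemma~\ref{lem:weight} sums over all tuples in $\Vars^k$, whereas the $\mu$-recursion sums over $\{1,\ldots,d\}^k$; I must argue that the correspondence $V_{j}\leftrightarrow j$ is a bijection between the relevant non-terminals and the dimension indices, and that the start variable $S$ contributes nothing here because no rule of the form $V_i\rightarrow\cdots$ ever produces $S$ on its right-hand side, so tuples containing $S$ carry weight zero and can be dropped from the sum in Lemma~\ref{lem:weight}. Once this index alignment is pinned down, the equality of the two expressions is immediate from the defining identity $c^i_{i_1,\ldots,i_k}=\theta(V_i\rightarrow V_{i_1}\cdots V_{i_k})$ of the transformation, completing the induction.
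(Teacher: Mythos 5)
Your proof is correct and follows essentially the same route as the paper's: induction on the tree height, with the base case matching $\mu_\sigma[i]$ to $\theta(V_i\rightarrow\sigma)$ and the inductive step substituting the induction hypothesis and the identification $c^i_{j_1,\ldots,j_k}=\theta(V_i\rightarrow V_{j_1}\cdots V_{j_k})$ into the multilinear-map formula, then invoking Lemma~\ref{lem:weight}. Your explicit handling of the index-set mismatch (tuples in $\Vars^k$ containing $S$ contribute weight zero since $S$ never appears on a right-hand side) is a point the paper's proof passes over silently, and is a welcome bit of extra rigor rather than a deviation.
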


\begin{proof}
The proof is by induction on the height of $s$. For the base case $h=1$. Then $s$ is a leaf, thus $s\in\Sigma$. 
Then for each $i$ we have that $v[i]=\mu(\sigma)[i]$ by definition of MTA computation.
On the other hand, by the definition of the transformation in  Fig.~\ref{fig:eqs-pmta-to-pcfg},
we have $\theta(V_i\rightarrow \sigma)=\mu(\sigma)[i]$. Thus, $\grmrwgt{V_i}{s}=\grmrwgt{V_i}{\sigma}=\mu(\sigma)[i]$, so the claim holds.

For the induction step, assume $s=?(s_{1},s_{2},...,s_{k})$. 
By the definition of a multi-linear map, for each $i$ we have:
\begin{equation*}
        v_i=\sum_{\left\{(j_1,j_2,\ldots,j_k)\in \{1,2,\ldots,d\}^k\right\}} c^i_{j_{1},...,j_{k}}v_{i}[j_{1}]\cdot...\cdot v_{k}[j_{k}]
\end{equation*}
where $c^i_{j_{1},...,j_{k}}$ are the coefficients of the $d\times d^k$ matrix of $\mu_?$ for $?\in\Sigma_k$.
By the definition of the transformation in  Fig.~\ref{fig:eqs-pmta-to-pcfg} we have that $c^i_{j_{1},...,j_{k}}=\theta(V_{i}\rightarrow V_{j_{1}}V_{j_{2}}...V_{j_{k}})$. 
Also, from our induction hypothesis, we have that for each $j_{i}$, $v_{i}[j_{i}]=\grmrwgt{V_{j_i}}{s_i}$. Therefore, we have that:
\[
        v_i=\sum_{V_{j_{1}}V_{j_{2}}...V_{j_{p}}\in \Vars^{k}} \begin{array}{l}
            \theta(V_{i}\rightarrow V_{j_{1}}V_{j_{2}}...V_{j_{k}})\cdot
            \grmrwgt{V_{j_1}}{s_1}\cdot...\cdot \grmrwgt{V_{j_k}}{s_k}
            \end{array}
\]
which according to Lemma \ref{lem:weight} is equal to $\grmrwgt{V_i}{s}$ as required.
\end{proof}

We are now ready to prove {Proposition~\ref{prop:equiv1}}. 
\commentout{
 which states that\\
\begin{itemize}
    \item []
    \emph{  
    $\grmr{W}(t)=\mathcal{A}(t)\quad $ for every $t\in \skel(\trees(\Sigma))$. }
\end{itemize}}

\begin{proof}[Proof of Prop.\ref{prop:equiv1}]
 
Let $\mu(t)=v=(v_{1},v_{2},...,v_{n})$ be the vector calculated by $\mathcal{A}$ for $t$. 
The value calculated by $\mathcal{A}$ is $\lambda\cdot v$, which is:
\begin{equation*}
    \sum_{i=1}^{n} v_{i}\cdot \lambda[i]
\end{equation*}
By the transformation in  Fig.~\ref{fig:eqs-pmta-to-pcfg} we have that $\lambda[i]=\theta(S\rightarrow V_{i})$ for each $i$. So we have:
\begin{equation*}
    \sum_{i=1}^{n} v_{i}\cdot \lambda[i] = \theta(S\rightarrow V_{i})\cdot v_{i}
\end{equation*}
By Lemma~\ref{lem:vec-coord-vars}, for each $i$, $v_{i}$ is equal to the probability of deriving $t$ starting from the non-terminal $V_{i}$, so we have that the value calculated by $\mathcal{A}$ is the probability of deriving the tree starting from the start symbol $S$. That is,  
$\grmr{W}(t)=\grmrwgt{S}{t}=\mathcal{A}(t)$.
\end{proof}

\subsection{Structurally Unambiguous WCFGs and PCFGs}

In this paper we consider \emph{structurally unambiguous} WCFGs and PCFGs (in short SUWCFGs and SUPCFGs, resp.) as defined in the sequel in \S\ref{sec:sucfg}.
In  Thm.~\ref{thm:bounds}, given in \S\ref{sec:learning-CMTAs}, we show that we can learn a PMTA for a SUWCFG in polynomial time using a polynomial number of queries (see exact bounds there), thus obtaining the following result.
 \begin{corollary}\label{cor:learnSUWCFG}
	SUWCFGs can be learned in polynomial time using \smq s and 
	\seq s, where the number of \seq s is bounded by the number of non-terminal symbols.
\end{corollary}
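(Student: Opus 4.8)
The plan is to obtain Corollary~\ref{cor:learnSUWCFG} as a direct consequence of Theorem~\ref{thm:bounds} together with the PMTA-to-PWCFG transformation established earlier (Fig.~\ref{fig:eqs-pmta-to-pcfg} and Proposition~\ref{prop:equiv1}). The corollary is essentially a repackaging statement: Theorem~\ref{thm:bounds} (stated later in \S\ref{sec:learning-CMTAs}) asserts that a PMTA representing a SUWCFG can be learned in polynomial time with a polynomial number of structured queries, and we must translate this learning guarantee about automata into a learning guarantee about grammars, while verifying the stated bound on the number of \seq s.

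First I would invoke Theorem~\ref{thm:bounds} to run the learning algorithm \cstar\ on the unknown SUWCFG's induced (skeletal) tree series, answering \smq s and \seq s as described in the introduction: an \smq\ on a structured string $s$ is answered by the probability $\grmr{W}(s)$, and an \seq\ on a hypothesis is answered either affirmatively or with a structured string witnessing a discrepancy in assigned weight. By Theorem~\ref{thm:bounds} this terminates in polynomial time after polynomially many queries and yields a PMTA $\aut{A}$ that agrees with the target tree series on every structured string. Next I would apply the transformation of \S\ref{sec:CMTA2PCFG} to convert $\aut{A}$ into a WCFG $\grmr{W}_\aut{A}$; by Proposition~\ref{prop:equiv1} (extended to skeletal trees via the abuse of notation $\treesrs{T}(s)=\sum_{t\in\skelstotrees(s)}\treesrs{T}(t)$) we have $\grmr{W}_\aut{A}(s)=\aut{A}(s)$ for every structured string $s$, so $\grmr{W}_\aut{A}$ is the desired SUWCFG. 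Since the transformation is a syntactic rewriting of the automaton whose size is polynomial in $d$ and $\abs{\Sigma}$, this step adds only polynomial overhead and preserves the overall polynomial bound.

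The main thing to pin down is the claimed bound that the number of \seq s is at most the number of non-terminals. I expect this to follow from the structure of the learning loop in \cstar: each negative answer to an \seq\ supplies a counterexample that forces the algorithm to increase the dimension of its hypothesis automaton by at least one (this is the standard behaviour of Hankel-matrix-based \lstar-style algorithms, where a counterexample exposes a previously unrepresented coordinate). Because the dimension $d$ of the target PMTA equals the number of non-terminals in the associated grammar (one variable $V_i$ per dimension, per Fig.~\ref{fig:eqs-pmta-to-pcfg}), the number of times the dimension can strictly increase, and hence the number of negative \seq s, is bounded by the number of non-terminals; the final \seq\ is the one answered affirmatively. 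I would therefore extract from the internals of Theorem~\ref{thm:bounds} the invariant that every counterexample strictly enlarges the basis, and conclude the \seq\ count.

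The step I expect to be the genuine obstacle is this last bookkeeping: I need the precise guarantee from Theorem~\ref{thm:bounds} that each \seq\ counterexample increments the rank of the learned CMTA (rather than merely refining already-discovered states), and that the dimension never overshoots the number of non-terminals of the minimal representation. Everything else---running the learner, applying the transformation, and checking that convergence of the tree series carries through so that the WCFG describes a genuine probability distribution amenable to the PWCFG-to-PCFG conversion of~\cite{abney1999relating,smith2007weighted}---is routine given the earlier results; the only subtlety is ensuring that the quantity counted by ``number of \seq s'' is tied exactly to the dimension bound, which is where I would concentrate the argument.
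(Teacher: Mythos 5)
Your overall route is the same as the paper's: Corollary~\ref{cor:learnSUWCFG} is indeed meant to be read off from Theorem~\ref{thm:bounds} together with the PMTA-to-WCFG conversion of Fig.~\ref{fig:eqs-pmta-to-pcfg} and Proposition~\ref{prop:equiv1}. However, the step you yourself flag as the obstacle is resolved incorrectly. Theorem~\ref{thm:bounds} bounds the number of \seq s by $n$, the rank of the \emph{target language}, i.e.\ the co-linear rank of its Hankel matrix; what remains to be shown is therefore $n \leq |\Vars|$ for the \emph{unknown} grammar. You justify this by appealing to Fig.~\ref{fig:eqs-pmta-to-pcfg}, claiming the target PMTA's dimension equals the number of non-terminals. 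That figure goes in the opposite direction (PMTA to PCFG), and the non-terminal count it produces refers to the grammar extracted from the \emph{learned} automaton, not to the unknown grammar against which the corollary's bound is stated. Moreover, the paper's actual SUWCFG-to-PMTA construction (the one behind Proposition~\ref{prop:SUWCFGhaveCMTA}) has dimension $|\Vars|+|\Sigma|$, so ``the dimension of the target PMTA equals the number of non-terminals'' is false as stated; and even a correct dimension bound for \emph{some} representing PMTA does not by itself bound the Hankel rank in the positive/co-linear setting without further argument.

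The missing ingredient is Proposition~\ref{prop:colinearity} together with Corollary~\ref{cor:finite-equiv-cls}: for a structurally unambiguous grammar, any two derivation trees rooted in the same non-terminal (with positive weight) have co-linear rows in the Hankel matrix, so the number of co-linearity classes is at most the number of non-terminals plus one (for the zero row). Since the zero row never enters the basis, the rank $n$ appearing in Theorem~\ref{thm:bounds} satisfies $n \leq |\Vars|$, which is exactly the bound the corollary asserts for the \seq\ count. With this substitution your argument goes through; note also that the invariant ``each counterexample strictly enlarges the basis'' is internal to the proof of Theorem~\ref{thm:bounds}, so you can simply cite its bound of $n$ \seq s rather than re-derive it.
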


The overall learning time for SUPCFG relies, on top of Corollary~\ref{cor:learnSUWCFG}, on the complexity of 
converting a WCFG into a PCFG~\cite{abney1999relating}, for which an exact bound is not provided, but the method is reported to converge quickly~\cite[\S 2.1]{smith2007weighted}.
%

\section{Learning of Structurally Unambiguous Probabilistic Grammars}

In this section we discuss the setting of the algorithm, the ideas behind Angluin-style learning algorithms, and the issues with using current algorithms to learn PCFGs.
As in \gstar (the algorithm for CFGs~\cite{Sakakibara88}), we assume an oracle that can answer two types of queries: \emph{structured membership queries} (\smq) and \emph{structured equivalence queries} (\seq) regarding the unknown regular tree series $\treesrs{T}$ (over a given ranked alphabet $\Sigma$). Given a structured string $s$, the query $\smq(s)$ is answered with the value $\treesrs{T}(s)$. Given an automaton $\aut{A}$ the query $\seq(\aut{A})$  is answered ``yes'' if $\aut{A}$ implements the skeletal tree series $\treesrs{T}$ and otherwise the answer is a pair $(s,\treesrs{T}(s))$ where $s$ is a  structured string for which $\treesrs{T}(s)\neq \aut{A}(s)$ (up to a  predefined error). 


Our starting point is the learning algorithm \mstar~\cite{habrard2006learning} which learns MTA using \smq s and \seq s. 
 First we explain the idea behind this and similar algorithms, next the issues with applying it as is for learning PCFGs, then the idea behind restricting attention to strucutrally unambiguous grammars, and finally our algorithm itself.

\paragraph{Hankel Matrix}
The \emph{Hankel Matrix} is a key concept in learning of languages and formal series.
A word or tree language as well as a word or tree series can be represented by its Hankel Matrix. 
The Hankel Matrix has infinitely many rows and infinitely many columns. In the case of word series both rows and columns correspond to an infinite enumeration $w_0,w_1,w_2,\ldots$ of words over the given alphabet. In the case of tree series, the rows correspond to an infinite enumeration of trees $t_0,t_1,t_2,\ldots$ (where $t_i\in\trees(\Sigma)$) and the columns to an infinite enumeration of contexts $c_0,c_1,c_2,\ldots$ (where $c_i\in\trees_{\context}(\Sigma)$). In the case of words, the entry $\hm{H}(i,j)$ holds the value for the word $w_i\cdot w_j$, and in the case of trees it holds the value of the tree $\contextConcat{c_j}{t_i}$.
If the series is \emph{regular} there should exists a finite number of rows in this infinite matrix, which we term \emph{basis}, such that all other rows can be represented using rows in the basis. In the case of \lstar, and \gstar (that learn word-languages and tree-languages, resp.) rows that are not in the basis should be equal to rows in the basis. The rows of the basis correspond to the automaton states, and the equalities to other rows determines the transition relation. In the case of \mstar (that learns tree-series by means of multiplicity tree automata) and its predecessor (that learns word-series using multiplicity word automata) rows not in the basis should be expressible as a linear combination of rows in the basis, and the linear combinations determines the weights of the extracted automaton. 

\commentout{
\begin{figure}
    \centering
    \scalebox{.8}{
        \begin{tikzpicture}
        \node[
            draw=none] at (-0.5,0) {    \begin{tabular}{ c|c|c|c|c|c|c|c|c|c|c|c } 
      & $\varepsilon$ & $a$ & $b$ & $aa$ & $ab$ & $ba$ & $bb$ & $aaa$ & $aab$ & $aba$ & $\hdots$ \\ \hline
      $\varepsilon$ & $0$ & $1$ & $0$ & $2$ & $1$ & $1$ & $0$ & $3$ & $2$ & $2$ & $\ddots$ \\ \hline
      $a$ & $1$ & $2$ & $1$ & $3$ & $2$ & $2$ & $1$ & $4$ & $3$ & $3$ & $\ddots$ \\ \hline
      $b$ & $0$ & $1$ & $0$ & $2$ & $1$ & $1$ & $0$ & $3$ & $2$ & $2$ & $\ddots$ \\ \hline
      $aa$ & $2$ & $3$ & $2$ & $4$ & $3$ & $3$ & $2$ & $5$ & $4$ & $4$ & $\ddots$ \\ \hline
     $ab$ & $1$ & $2$ & $1$ & $3$ & $2$ & $2$ & $1$ & $4$ & $3$ & $3$ & $\ddots$ \\ \hline
     $ba$ & $1$ & $2$ & $1$ & $3$ & $2$ & $2$ & $1$ & $4$ & $3$ & $3$ & $\ddots$ \\ \hline
     $bb$ & $0$ & $1$ & $0$ & $2$ & $1$ & $1$ & $0$ & $3$ & $2$ & $2$ & $\ddots$ \\ \hline
     $aaa$ & $3$ & $4$ & $3$ & $5$ & $4$ & $4$ & $3$ & $6$ & $5$ & $6$ & $\ddots$ \\ \hline
     $\vdots$ & $\ddots$ & $\ddots$ & $\ddots$ & $\ddots$ & $\ddots$ & $\ddots$ & $\ddots$ & $\ddots$ & $\ddots$ & $\ddots$ & $\ddots$\\  
    \end{tabular}};
            \node[draw=black,text=black,fill=gray!30,text width=0.1\textwidth,
            align=center,anchor=center] at (7,2.5) {$v_{1}$};
            \node[draw=black,text=black,fill=gray!30,text width=0.1\textwidth,
            align=center, anchor=center] at (7,1.8) {$v_{2}$};
            \node[draw=black,text=black,fill=gray!30,text width=0.1\textwidth,
            align=center, anchor=center] at (7,1.1) {$v_{1}$};
            \node[draw=black,text=black,fill=gray!30,text width=0.15\textwidth,
            align=center, anchor=center] at (7,0.4) {$2\cdot v_{2}-v_{1}$};
            \node[draw=black,text=black,fill=gray!30,text width=0.1\textwidth,
            align=center, anchor=center] at (7,-0.3) {$v_{1}$};
            \node[draw=black,text=black,fill=gray!30,text width=0.1\textwidth,
            align=center, anchor=center] at (7,-1) {$v_{1}$};
            \node[draw=black,text=black,fill=gray!30,text width=0.1\textwidth,
            align=center, anchor=center] at (7,-1.7) {$v_{2}$};
            \node[draw=black,text=black,fill=gray!30,text width=0.15\textwidth,
            align=center, anchor=center] at (7,-2.4) {$3\cdot v_{2}-2\cdot v_{1}$};
    \end{tikzpicture}}
    \caption{The Hankel Matrix for the word-series that returns the number of $a$'s in a word over the alphabet $\Sigma=\{a,b\}$.
    }\label{fig:hankelWordSeries}
\end{figure}

Fig.~\ref{fig:hankelWordSeries} depicts the Hankel Matrix for the word-series that returns the number of $a$'s in a word over the alphabet $\Sigma=\{a,b\}$. The first two rows linearly span the entire matrix. Let $H[\varepsilon]=v_{1}$ and $H[a]=v_{2}$. The rest of the rows can be described as a linear combination of $v_{1}$ and $v_{2}$, as indicated in gray one the right.
\end{exa}}

\begin{figure}
    \centering
    \scalebox{.9}{
    \begin{tikzpicture}
        \node[draw=none] at (0,0) {\begin{tabular}{ c|c|c|c|c|c } 
  & $\context$ & \begin{tikzpicture}
 \node[shape=circle,draw=none] (root) at (0.5,1) {\tiny{$b$}};
 \node[shape=circle,draw=none] (A) at (0,0) {\tiny{$\context$}};
 \node[shape=circle,draw=none] (A2) at (1,0) {\tiny{$a$}};
 \path [-] (root) edge node[above] {} (A);
 \path [-] (root) edge node[above] {} (A2);
 \end{tikzpicture} & \begin{tikzpicture}
 \node[shape=circle,draw=none] (root) at (0.5,1) {\tiny{$b$}};
 \node[shape=circle,draw=none] (A) at (0,0) {\tiny{$a$}};
 \node[shape=circle,draw=none] (A2) at (1,0) {\tiny{$\context$}};
 \path [-] (root) edge node[above] {} (A);
 \path [-] (root) edge node[above] {} (A2);
 \end{tikzpicture} & \begin{tikzpicture}
 \node[shape=circle,draw=none] (root) at (0.5,2) {\tiny{$b$}};
 \node[shape=circle,draw=none] (A1) at (0,1) {\tiny{$\context$}};
 \node[shape=circle,draw=none] (B1) at (1,1) {\tiny{$b$}};
 \node[shape=circle,draw=none] (A2) at (1.5,0) {\tiny{$a$}};
 \node[shape=circle,draw=none] (A3) at (0.5,0) {\tiny{$a$}};
 \path [-] (root) edge node[above] {} (A1);
 \path [-] (root) edge node[above] {} (B1);
 \path [-] (B1) edge node[above] {} (A2);
 \path [-] (B1) edge node[above] {} (A3);
 \end{tikzpicture}&$\hdots$\\ \hline
 
 $a$ & $1$ & $2$ & $2$ & $3$ &$\ddots$\\ \hline
 
 \begin{tikzpicture}
 \node[shape=circle,draw=none] (root) at (0.5,1) {\tiny{$b$}};
 \node[shape=circle,draw=none] (A) at (0,0) {\tiny{$a$}};
 \node[shape=circle,draw=none] (A2) at (1,0) {\tiny{$a$}};
 \path [-] (root) edge node[above] {} (A);
 \path [-] (root) edge node[above] {} (A2);
 \end{tikzpicture} & $2$ & $3$ & $3$ & $4$ &$\ddots$ \\  \hline
 \begin{tikzpicture}
 \node[shape=circle,draw=none] (root) at (0.5,2) {\tiny{$b$}};
 \node[shape=circle,draw=none] (A1) at (0,1) {\tiny{$a$}};
 \node[shape=circle,draw=none] (B) at (1,1) {\tiny{$b$}};
 \node[shape=circle,draw=none] (A2) at (0.5,0) {\tiny{$a$}};
 \node[shape=circle,draw=none] (A3) at (1.5,0) {\tiny{$a$}};
 \path [-] (root) edge node[above] {} (A1);
 \path [-] (root) edge node[above] {} (B); 
 \path [-] (B) edge node[above] {} (A2);
 \path [-] (B) edge node[above] {} (A3);
 \end{tikzpicture} & $3$ & $4$ & $4$ & $5$ &$\ddots$\\  \hline
 
 \begin{tikzpicture}
 \node[shape=circle,draw=none] (root) at (0.5,2) {\tiny{$b$}};
 \node[shape=circle,draw=none] (A1) at (1,1) {\tiny{$a$}};
 \node[shape=circle,draw=none] (B) at (0,1) {\tiny{$b$}};
 \node[shape=circle,draw=none] (A2) at (-0.5,0) {\tiny{$a$}};
 \node[shape=circle,draw=none] (A3) at (0.5,0) {\tiny{$a$}};
 \path [-] (root) edge node[above] {} (A1);
 \path [-] (root) edge node[above] {} (B); 
 \path [-] (B) edge node[above] {} (A2);
 \path [-] (B) edge node[above] {} (A3);
 \end{tikzpicture} & $3$ & $4$ & $4$ & $5$ &$\ddots$ \\  \hline
 
 \begin{tikzpicture}
 \node[shape=circle,draw=none] (root) at (0,2) {\tiny{$b$}};
 \node[shape=circle,draw=none] (B1) at (1,1) {\tiny{$b$}};
 \node[shape=circle,draw=none] (B2) at (-1,1) {\tiny{$b$}};
 \node[shape=circle,draw=none] (A1) at (1.5,0) {\tiny{$a$}};
 \node[shape=circle,draw=none] (A2) at (0.5,0) {\tiny{$a$}};
 \node[shape=circle,draw=none] (A3) at (-1.5,0) {\tiny{$a$}};
 \node[shape=circle,draw=none] (A4) at (-0.5,0) {\tiny{$a$}};
 \path [-] (root) edge node[above] {} (B1);
 \path [-] (root) edge node[above] {} (B2); 
 \path [-] (B1) edge node[above] {} (A1);
 \path [-] (B1) edge node[above] {} (A2);
 \path [-] (B2) edge node[above] {} (A3);
 \path [-] (B2) edge node[above] {} (A4);
 \end{tikzpicture} & $4$ & $5$ & $5$ & $6$ &$\ddots$ \\ \hline
  $\vdots$ & $\ddots$ & $\ddots$ & $\ddots$ & $\ddots$ &$\ddots$\\ 
\end{tabular}};
\node[draw=black,text=black,fill=gray!30,text width=0.1\textwidth,
            align=center,anchor=center] at (8,3.9) {$v_{1}$};
            \node[draw=black,text=black,fill=gray!30,text width=0.1\textwidth,
            align=center, anchor=center] at (8,2.7) {$v_{2}$};
            \node[draw=black,text=black,fill=gray!30,text width=0.15\textwidth,
            align=center, anchor=center] at (8,0.7) {$2\cdot v_{2}-v_{1}$};
            \node[draw=black,text=black,fill=gray!30,text width=0.15\textwidth,
            align=center, anchor=center] at (8,-2.2) {$2\cdot v_{2}-v_{1}$};
            \node[draw=black,text=black,fill=gray!30,text width=0.15\textwidth,
            align=center, anchor=center] at (8,-5) {$3\cdot v_{2}-2\cdot v_{1}$};
    \end{tikzpicture}
    }
    \caption{The Hankel Matrix for the tree-series that returns the number of leaves in a tree over the ranked alphabet $\Sigma=\{\Sigma_0,\Sigma_2\}$ where $\Sigma_{0}=\{a\}$ and $\Sigma_{2}=\{b\}$.}
    \label{fig:hankelTreeSeries}
\end{figure}

\begin{exa}\label{exa:extract-mta}
Fig.~\ref{fig:hankelTreeSeries} depicts the Hankel Matrix for the tree-series that returns the number of leaves in a tree over the ranked alphabet $\Sigma=\{\Sigma_{0}=\{a\},\Sigma_{2}=\{b\}\}$. The first two rows linearly span the entire matrix. Let $H[a]=v_{1}$ and $H[b(a,a)]=v_{2}$. The rest of the rows can be described as a linear combination of $v_{1}$ and $v_{2}$, as shown on the right in gray.
\end{exa}

\paragraph{\lstar-style query learning algorithms}
All the generalizations of \lstar\ 
share a general idea that can be explained as follows. The algorithm maintains an \emph{observation table}, a finite sub-matrix of the Hankel Matrix, whose entries
are filled by asking membership queries. Once the table meets certain criteria, namely is \emph{closed} with respect to a \emph{basis}, which is a subset of the rows,
an automaton can be extracted from it. 
In the case of \lstar\ (that learns regular word-languages) a table is closed if the row of the empty string is in the basis, for every row $s$ in the basis, its one letter extension, $s\sigma$ is in the rows of the table, and every row in the table is equivalent to some row in the basis.
In the case of \cstar (that learns tree automata) instead of one letter extensions, we need for every letter $\sigma\in\Sigma_k$, a row
with root $\sigma$ and for the $k$-children all options of trees from the basis. The criterion for a row not in the basis to be covered, is the same; it should be equivalent to a row in the basis.
In the case of \mstar, the extension requirement is as for \cstar, since we are dealing with trees. As for covering the
rows of the table that are not in the basis, the requirement is that 
a rows that is not in the basis be expressible as a linear combination of rows in the basis.

In the case of \lstar, the extracted automaton will have one state for every row in the basis, where $\epsilon$ is the initial state, and a state $s$ is determined to be final
if the observation table entry corresponding to row $s$ and column $\epsilon$ is labeled $1$ (i.e. the word $s$ is in the language). Then, the transition from $s$ on $\sigma$
is determined to be the row $s'$ of the basis that is equivalent to $s\sigma$.

In the case of $\mstar$ the size of the basis determines the dimension of the extracted multiplicity tree automaton.
Suppose the size of the basis is $d$.
The output vector $\lambda$ is set to $(c_1,\ldots,c_d)$ where $c_i$ is the value of the entry corresponding to the row $i$ of the basis and the column $\context$ (i.e the value of the $i$-th row of the basis).
Consider a letter $\sigma\in\Sigma_k$, its corresponding transition matrix $\mu_\sigma$ is a $d\times d^k$ matrix. Think of $\mu_\sigma$ as $d^k$ $d$-vectors $\left(v_{11\ldots1}\ v_{11\ldots 2}\ \ldots\ v_{dd\ldots d}\right)$; 
then the vector $v_{j_1j_2\ldots j_k}$ which corresponds to the row extending $\sigma$ with the trees in the base rows $r_{j_1},r_{j_2}\ldots r_{j_k}$ 
gets its coefficient from the linear combination for this row, see Example~\ref{exa:extract-mta}.

 Note that in all cases we would like the basis to be \emph{minimal} in the sense that no row in the basis is expressible using other rows in the basis. This is since the size of the basis derives the dimension  of the automaton, and obviously we prefer smaller automata.

\begin{figure}
\scalebox{.9}{
    \begin{minipage}{0.75\textwidth}
        \begin{tabular}{ c|c|c|c|c } 
  && $\context$ & 
  \begin{tikzpicture}
     \node[shape=circle,draw=none] (root) at (0.5,1) {\tiny{$b$}};
     \node[shape=circle,draw=none] (A) at (0,0) {\tiny{$\context$}};
     \node[shape=circle,draw=none] (A2) at (1,0) {\tiny{$a$}};
     \path [-] (root) edge node[above] {} (A);
     \path [-] (root) edge node[above] {} (A2);
     \end{tikzpicture} & \begin{tikzpicture}
     \node[shape=circle,draw=none] (root) at (0.5,1) {\tiny{$b$}};
     \node[shape=circle,draw=none] (A) at (0,0) {\tiny{$a$}};
     \node[shape=circle,draw=none] (A2) at (1,0) {\tiny{$\context$}};
     \path [-] (root) edge node[above] {} (A);
     \path [-] (root) edge node[above] {} (A2);
 \end{tikzpicture} \\ \hline
 
 \cellcolor{red!50} $v_{1}$&$a$ & \cellcolor{purple!50} $1$ & $2$ & $2$ \\ \hline
 
 \cellcolor{orange!50} $v_{2}$&
     \begin{tikzpicture}
     \node[shape=circle,draw=none] (root) at (0.5,1) {\tiny{$b$}};
     \node[shape=circle,draw=none] (A) at (0,0) {\tiny{$a$}};
     \node[shape=circle,draw=none] (A2) at (1,0) {\tiny{$a$}};
     \path [-] (root) edge node[above] {} (A);
     \path [-] (root) edge node[above] {} (A2);
     \end{tikzpicture} & \cellcolor{purple!50} $2$ & $3$ & $3$ \\  \hline
     \cellcolor{yellow!50} $2\cdot v_{2}-v_{1}$&\begin{tikzpicture}
     \node[shape=circle,draw=none] (root) at (0.5,2) {\tiny{$b$}};
     \node[shape=circle,draw=none] (A1) at (0,1) {\tiny{$a$}};
     \node[shape=circle,draw=none] (B) at (1,1) {\tiny{$b$}};
     \node[shape=circle,draw=none] (A2) at (0.5,0) {\tiny{$a$}};
     \node[shape=circle,draw=none] (A3) at (1.5,0) {\tiny{$a$}};
     \path [-] (root) edge node[above] {} (A1);
     \path [-] (root) edge node[above] {} (B); 
     \path [-] (B) edge node[above] {} (A2);
     \path [-] (B) edge node[above] {} (A3);
     \end{tikzpicture} & $3$ & $4$ & $4$ \\  \hline
 \cellcolor{green!50} $2\cdot v_{2}-v_{1}$&
 \begin{tikzpicture}
     \node[shape=circle,draw=none] (root) at (0.5,2) {\tiny{$b$}};
     \node[shape=circle,draw=none] (A1) at (1,1) {\tiny{$a$}};
     \node[shape=circle,draw=none] (B) at (0,1) {\tiny{$b$}};
     \node[shape=circle,draw=none] (A2) at (-0.5,0) {\tiny{$a$}};
     \node[shape=circle,draw=none] (A3) at (0.5,0) {\tiny{$a$}};
     \path [-] (root) edge node[above] {} (A1);
     \path [-] (root) edge node[above] {} (B); 
     \path [-] (B) edge node[above] {} (A2);
     \path [-] (B) edge node[above] {} (A3);
 \end{tikzpicture} & $3$ & $4$ & $4$ \\  \hline
 \cellcolor{blue!40} $3\cdot v_{2}-2\cdot v_{1}$&
 \begin{tikzpicture}
     \node[shape=circle,draw=none] (root) at (0,2) {\tiny{$b$}};
     \node[shape=circle,draw=none] (B1) at (1,1) {\tiny{$b$}};
     \node[shape=circle,draw=none] (B2) at (-1,1) {\tiny{$b$}};
     \node[shape=circle,draw=none] (A1) at (1.5,0) {\tiny{$a$}};
     \node[shape=circle,draw=none] (A2) at (0.5,0) {\tiny{$a$}};
     \node[shape=circle,draw=none] (A3) at (-1.5,0) {\tiny{$a$}};
     \node[shape=circle,draw=none] (A4) at (-0.5,0) {\tiny{$a$}};
     \path [-] (root) edge node[above] {} (B1);
     \path [-] (root) edge node[above] {} (B2); 
     \path [-] (B1) edge node[above] {} (A1);
     \path [-] (B1) edge node[above] {} (A2);
     \path [-] (B2) edge node[above] {} (A3);
     \path [-] (B2) edge node[above] {} (A4);
\end{tikzpicture} & $4$ & $5$ & $5$\\ 
\end{tabular}
    \end{minipage}%
    \begin{minipage}{0.15\textwidth}
    \begin{align*}
        \mu_{a}=&\begin{bmatrix}&\cellcolor{red!50}1&\\&\cellcolor{red!50}0&\end{bmatrix}\\
        \mu_{b}=&\begin{bmatrix}&\cellcolor{orange!50}0&\cellcolor{yellow!50}-1&\cellcolor{green!50}-1&\cellcolor{blue!40}-2&\\&\cellcolor{orange!50}1&\cellcolor{yellow!50}2&\cellcolor{green!50}2&\cellcolor{blue!40}3&\end{bmatrix}\\
        \lambda=&\begin{bmatrix}&\cellcolor{purple!50}1&\\&\cellcolor{purple!50}2&\end{bmatrix}
    \end{align*}
    \end{minipage}%
    }
    \caption{A closed observation table (on the left) and the MTA extracted from it (on the right).}
    \label{fig:mta_extract}
\end{figure}

\begin{exa}
\label{exa:extract-mta}
Fig.~\ref{fig:mta_extract} shows (on the left) a closed observation table which is a sub-matrix of the Hankel Matrix of Fig.~\ref{fig:hankelTreeSeries},
and the  MTA extracted from it (on the right). The base here consists of the first two rows (thus the dimension $d$ is $2$) and as can be seen they linearly span the other rows of the table. 
The coefficients with which they span the table appear on the left-most column. 
The final vector $\lambda$ is set to $\biword{1}{2}$ since these are the values of the row basis in column $\context$.
The matrix $\mu_a$ is a $2\times 2^0$ matrix since $a\in\Sigma_0$. We view it as a single vector that corresponds to the row of the letter $a$, it is thus $\biword{1}{0}$. The matrix $\mu_b$ is a $2\times 2^2$ matrix since $b\in\Sigma_2$. We view it as the vectors $(v_{11}\ v_{12}\ v_{21}\ v_{22})$.
Consider for instance $v_{21}$. It corresponds to the tree with root $b$ whose left child is the second base row, namely the tree on the second row of the observation table, and whose right child is the first base row, namely $a$. It thus corresponds to the 4th row of the table and is therefore 
set to $\biword{-1}{2}$. The other vectors of $\mu_b$ are set similarly (follow the color coding in Fig.~\ref{fig:mta_extract}).   
\end{exa}

Back to the problem at hand, since we are interested in probabilistic languages, we would like to apply an \lstar-style algorithm
to obtain a PMTA, so that we can convert it into a PCFG. Since the coefficients of the linear combination in rows that 
are not in the basis determine the values of the entries of the transitions matrix, we cannot have negative coefficients.
That is, we need the algorithm to find a subset of the rows (a basis), so that all other rows can be obtained by a \emph{positive linear combination} of the rows in the  basis, so that the extracted automaton will be a PMTA.

\paragraph{Positive linear spans}
An interest in \emph{positive linear combinations} occurs also in the research community studying convex cones and derivative-free optimizations and a theory of positive linear combinations has been developed~\cite{cohen1993nonnegative,Regis2016}.\footnote{Throughout the paper we use the terms \emph{positive} and \emph{nonnegative} interchangeably.} We need the following definitions and results.

The \emph{positive span} of a finite set of vectors $S=\{v_{1},v_{2},...,v_{k}\}\subseteq\mathbb{R}^{n}$ is defined as follows:
\begin{equation*}
\pos(S)=\{\lambda_{1}\cdot v_{1}+\lambda_{2}\cdot v_{2}+...+\lambda_{k}\cdot v_{k} ~|~ \lambda_{i}\geq 0, \forall 1\leq i\leq k\}
\end{equation*}
A set of vectors $S=\{v_{1},v_{2},...,v_{k}\}\subseteq\mathbb{R}^{n}$ is \emph{positively dependent} if some $v_{i}$ is a positive combination of the other vectors; otherwise, it is \emph{positively independent}. 
Let $A\in\mathbb{R}^{m\times n}$. We say that $A$ is \emph{nonnegative} if all of its elements are nonnegative.  The \emph{nonnegative column (resp. row) rank}  of $A$, denoted $\crank(A)$ (resp. $\rrank(A)$), is defined as the smallest nonnegative integer $q$ for which there exist a set of column- (resp. row-) vectors $V=\{v_{1},v_{2},...,v_{q}\}$  in $\mathbb{R}^{m}$ such that every column (resp. row) of $A$ can be represented as a positive combination of $V$. 
It was shown  that $\crank(A)=\rrank(A)$ for any matrix $A$~\cite{cohen1993nonnegative}. Thus one can  freely use $\prank(A)$ for \emph{positive rank}, to refer to either one of these.

\subsection{Trying to Accommodate M* for learning PMTA}\label{sec:mstarpmta}
The first question that comes to mind, is whether we can use the \mstar\ algorithm as is in order to learn a positive tree series. We show that this is not the case.
We prove two propositions, which show that one cannot use the \mstar algorithm to learn PMTA. Proposition \ref{prop:mstar_negative_mta} shows a grammar, for which there exist a PMTA, and a positive base in the Hankel Matrix, but applying the \mstar algorithm as is may choose a non-negative base.

\begin{prop}\label{prop:mstar_negative_mta}
    There exists a probabilistic word series for which the \mstar\ algorithm may return an MTA with negative weights.
\end{prop}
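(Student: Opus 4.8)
The plan is to exploit the fact that \mstar\ selects its basis purely by linear independence, with no regard to the signs of the spanning coefficients. Since the entries of the extracted transition matrices are exactly the coefficients expressing each non-basis row as a linear combination of basis rows, it suffices to exhibit a probabilistic word series whose Hankel matrix has an ordinary basis that \mstar\ may legitimately select, but over which some row that the algorithm is obliged to cover can only be written with a negative coefficient --- even though the series \emph{does} admit a positive representation (a PMTA) over a \emph{different} basis. The word ``may'' in the statement is crucial: it is the sign-blind, order-dependent basis choice that we attack, not the existence of a positive representation.

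Concretely, I would fix a small alphabet, say $\Sigma=\{a,b\}$, and give an explicit two-dimensional multiplicity automaton with nonnegative weights defining a series $f:\Sigma^*\to[0,1]$ that is a genuine probability distribution (so that $f$ is, in particular, a convergent probabilistic word series and a PMTA of dimension two exists). Viewing $f$ through its monadic tree encoding, I would then write down the relevant finite window of its Hankel matrix, choosing the columns so that the rows indexed by the shortest words become vectors in $\R^2$. The construction is arranged so that the two rows $\hm{H}[\epsilon]$ and $\hm{H}[a]$ are linearly independent --- hence a valid basis of size equal to the rank --- while the row $\hm{H}[aa]$, which \mstar\ must express in order to close the table, satisfies $\hm{H}[aa]=\alpha\,\hm{H}[\epsilon]+\beta\,\hm{H}[a]$ with one of $\alpha,\beta$ strictly negative. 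At the same time $\hm{H}[a]$ lies in $\posspan{\{\hm{H}[\epsilon],\hm{H}[aa]\}}$, so that the positive representation guaranteed by $f$ merely rests on the alternative basis.

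Given such a series, the proof concludes as follows. Running \mstar\ in its canonical order, it inserts $\hm{H}[\epsilon]$ and then $\hm{H}[a]$ into the basis, declares the table closed once every required extension is in the linear span of these two rows, and extracts a two-dimensional MTA whose transition entries are the (unique) coefficients of those linear combinations. Since the combination for $\hm{H}[aa]$ carries a negative coefficient, the corresponding entry of the extracted transition matrix is negative, so on this run \mstar\ returns an MTA with negative weights, which is exactly the claim.

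The main obstacle is the construction itself: one must engineer $f$ so that simultaneously (i) $f$ is a bona fide probabilistic word series with a nonnegative two-dimensional representation, (ii) the geometry of the first three short-word rows realizes the sign pattern above --- a negative coefficient over $\{\hm{H}[\epsilon],\hm{H}[a]\}$ but membership of $\hm{H}[a]$ in the positive cone of the alternative pair --- and (iii) the canonical basis-selection of \mstar\ actually reaches the sign-bad basis rather than the good one. Verifying the Hankel entries and solving for the spanning coefficients is routine linear algebra, but the example must be balanced so that none of these three requirements is violated; this balancing is the only delicate part of the argument.
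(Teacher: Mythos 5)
Your overall strategy is the correct one, and it is in fact the same as the paper's: exhibit a probabilistic series that admits a positive representation, note that \mstar\ selects its basis by linear independence alone with no regard to signs, and arrange the Hankel geometry so that a legitimately selectable basis forces a negative spanning coefficient, which then appears as a negative entry of the extracted transition matrix. The problem is that you never actually exhibit the series. The proposition is an existence claim, so its proof \emph{is} the witness; your proposal only lists the properties a witness should have (a nonnegative low-dimensional representation, the sign pattern $\hm{H}[aa]=\alpha\,\hm{H}[\epsilon]+\beta\,\hm{H}[a]$ with $\alpha<0<\beta$ together with $\hm{H}[a]\in\posspan{\{\hm{H}[\epsilon],\hm{H}[aa]\}}$, and reachability of the bad basis by \mstar), and you yourself flag the ``balancing'' of these constraints as the delicate step that remains to be done. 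That deferred balancing is the entire mathematical content of the proposition; as written, the argument establishes only ``if such an $f$ exists, the proposition holds.''

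For comparison, the paper sidesteps all delicacy by taking a finite-support series: over $\Sigma=\{a,b,c\}$, the series giving probability $\frac{1}{6}$ to each of $aa,ab,ac,ba,cb,cc$ and $0$ to every other word. There the verification is by inspection: $\hm{H}[a]=\hm{H}[b]+\hm{H}[c]$, so the rows of $\epsilon,b,c,ba$ positively span the whole matrix (hence a positive representation exists), while \mstar\ may just as legitimately pick the basis $\epsilon,a,b,aa$, under which $\hm{H}[c]=\hm{H}[a]-\hm{H}[b]$ forces a negative weight. Your two-dimensional plan could be completed---for a two-state automaton with nonnegative transition matrix $M_a$ for the letter $a$, the row of a prefix $u$ is determined by the left state-vector, so Cayley--Hamilton gives $\hm{H}[aa]=-\det(M_a)\,\hm{H}[\epsilon]+\mathrm{tr}(M_a)\,\hm{H}[a]$, and any diagonal $M_a$ with distinct positive entries produces your sign pattern, with positivity of $\hm{H}[a]$ over the alternative pair $\{\hm{H}[\epsilon],\hm{H}[aa]\}$ then coming for free---but until some such instance is written down and checked to be a probability distribution whose Hankel matrix has rank $2$, the proof has a hole exactly where the work is.
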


\begin{proof}
    Let $\mathcal{G}$ be the grammar that assigns to the words $aa$,$ab$,$ac$,$ba$,$cb$,$cc$ a probability of $\frac{1}{6}$ and to all other words a probability of $0$. For simplicity we assume that the grammar is regular.
	The first rows of Hankel Matrix for this word series are given in  Fig.\ref{fig:mta_non_positive} (all entries not in the figure are $0$).
	One can see that the rows $\epsilon$, $b$, $c$, $ba$ (highlighted in light gray) are a positive span of the entire Hankel Matrix. 
	However, the $\mstar$ algorithm may return the MTA spanned by the basis $\epsilon$, $a$, $b$, $aa$ (highlighted in dark gray). 
	Since the row of $c$ is obtained by subtracting the row of $b$ from the row of $a$, the resulting MTA will contain negative weights.
\end{proof}
	
\begin{figure}
    \centering
        \begin{tabular}{ c|c|c|c|c|c|c|c|c|c|c|c|c|c } 
  & $\varepsilon$ & $a$ & $b$ & $c$& $aa$ & $ab$ &$ac$& $ba$& $bb$& $bc$& $ca$ &$cb$& $cc$\\ \hline
  
 \diagfil{2cm}{gray!50}{gray!13}{$\varepsilon$}& $0$ & $0$ & $0$ & $0$&$\frac{1}{6}$&$\frac{1}{6}$&$\frac{1}{6}$&$\frac{1}{6}$&$0$&$0$&$0$&$\frac{1}{6}$&$\frac{1}{6}$ \\ \hline 
 \cellcolor{gray!50} $a$ & $0$ & $\frac{1}{6}$ & $\frac{1}{6}$ & $\frac{1}{6}$&$0$&$0$&$0$&$0$&$0$&$0$&$0$&$0$&$0$ \\ \hline
 \diagfil{2cm}{gray!50}{gray!13}{$b$} & $0$ & $\frac{1}{6}$ & $0$ & $0$&$0$&$0$&$0$&$0$&$0$&$0$&$0$&$0$&$0$\\ \hline
 \cellcolor{gray!13} $c$ & $0$ & $0$ & $\frac{1}{6}$ & $\frac{1}{6}$&$0$&$0$&$0$&$0$&$0$&$0$&$0$&$0$&$0$\\ \hline
 \cellcolor{gray!50} $aa$ & $\frac{1}{6}$ & $0$ & $0$ & $0$&$0$&$0$&$0$&$0$&$0$&$0$&$0$&$0$&$0$ \\  \hline
 $ab$ & $\frac{1}{6}$ & $0$ & $0$ & $0$&$0$&$0$&$0$&$0$&$0$&$0$&$0$&$0$&$0$ \\  \hline
 $ac$ & $\frac{1}{6}$ & $0$ & $0$ & $0$&$0$&$0$&$0$&$0$&$0$&$0$&$0$&$0$&$0$ \\  \hline
 \cellcolor{gray!13}$ba$ & $\frac{1}{6}$ & $0$ & $0$ & $0$&$0$&$0$&$0$&$0$&$0$&$0$&$0$&$0$&$0$\\ \hline
 $bb$ & $0$ & $0$ & $0$ & $0$&$0$&$0$&$0$&$0$&$0$&$0$&$0$&$0$&$0$ \\  \hline
 $bc$ & $0$ & $0$ & $0$ & $0$&$0$&$0$&$0$&$0$&$0$&$0$&$0$&$0$&$0$\\  \hline
 $ca$ & $0$ & $0$ & $0$ & $0$&$0$&$0$&$0$&$0$&$0$&$0$&$0$&$0$&$0$ \\  \hline
 $cb$ & $\frac{1}{6}$ & $0$ & $0$ & $0$&$0$&$0$&$0$&$0$&$0$&$0$&$0$&$0$&$0$ \\ \hline
 $cc$ & $\frac{1}{6}$ & $0$ & $0$ & $0$&$0$&$0$&$0$&$0$&$0$&$0$&$0$&$0$&$0$ \\
\end{tabular}
    \caption{The observation table for the grammar $\mathcal{G}$. Two of the possible bases are highlighted. The dark gray one results in a negative MTA, while the light gray one results in a positive one, that can be converted to a PCFG. For simplicity, the trees are presented as words, since we assume that the grammar is regular.}
    \label{fig:mta_non_positive}
\end{figure}

Proposition~\ref{prop:pcfg-no-finite-rank} 
 strengthens this and shows that there are grammars which have PMTA, but  
  \mstar\ would \emph{always} return a non-positive MTA.
  This is since 
the Hankel Matrix
	$H_\grmr{G}$ corresponding to the tree-series $\treesrs{T}_G$ of the respective PCFG $\grmr{G}$ has the property that
	no finite number of rows positively spans the entire matrix.
	
Before proving Proposition \ref{prop:pcfg-no-finite-rank}, we prove the following lemma:
\begin{lem}\label{lemmaSing}
	Let $B=\{b_{1},b_{2},...,b_{p}\}$ be a set of positively independent vectors. 
	Let $\hat{B}$ be a matrix whose columns are the elements of $B$, and let $\alpha$ be a positive vector. 
	Then if $\hat{B}\alpha=b_{i}$ then $\alpha[i]=1$ and $\alpha[j]=0$ for every $j\neq i$.
\end{lem}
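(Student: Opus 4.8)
The plan is to turn the matrix identity $\hat{B}\alpha = b_i$ into the coordinate-free statement $\sum_{j=1}^{p}\alpha[j]\,b_j = b_i$ and then isolate the $i$-th summand, obtaining the single identity
\begin{equation*}
(1-\alpha[i])\,b_i \;=\; \sum_{j\neq i}\alpha[j]\,b_j \tag{$\star$}
\end{equation*}
on which the whole argument rests. Before the case analysis I would record two cheap consequences of positive independence that I reuse: no $b_j$ is the zero vector (otherwise $b_j=\sum_{k\neq j}0\cdot b_k$ exhibits it as a positive combination of the others, so $B$ would be positively dependent), and no $b_j$ lies in $\posspan{B\setminus\{b_j\}}$. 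I would also make explicit that in our setting the $b_j$ are \emph{nonnegative} vectors (they are rows of the nonnegative Hankel matrix); this is what makes the statement true, and I flag it because the argument genuinely depends on it.

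Next I would analyze the sign of the scalar $1-\alpha[i]$. If $\alpha[i]<1$, then $1-\alpha[i]>0$, so dividing $(\star)$ by it gives $b_i=\sum_{j\neq i}\frac{\alpha[j]}{1-\alpha[i]}\,b_j$ with all coefficients $\geq 0$ (since $\alpha\geq 0$); hence $b_i\in\posspan{B\setminus\{b_i\}}$, contradicting positive independence. This case uses only positive independence. The case $\alpha[i]>1$ is the main point and is where nonnegativity enters: now $1-\alpha[i]<0$, so the left-hand side of $(\star)$ is $\leq 0$ componentwise (a negative scalar times the nonnegative vector $b_i$), while the right-hand side is $\geq 0$ componentwise (a nonnegative combination of nonnegative vectors). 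A vector that is simultaneously $\leq 0$ and $\geq 0$ is zero, so both sides vanish; in particular $(1-\alpha[i])b_i=0$ forces $b_i=0$, contradicting the fact that positive independence forbids the zero vector. Hence $\alpha[i]=1$.

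Finally, substituting $\alpha[i]=1$ back into $(\star)$ leaves $\sum_{j\neq i}\alpha[j]\,b_j=0$. Each summand $\alpha[j]b_j$ is a nonnegative vector, and a sum of nonnegative vectors vanishes only if every summand vanishes; since each $b_j\neq 0$, this forces $\alpha[j]=0$ for all $j\neq i$, completing the proof. I expect the only real subtlety to be making the nonnegativity of the $b_j$ explicit: the statement is false for arbitrary real vectors (take $b_1=(1,0)$, $b_2=(-1,0)$, $\alpha=(2,1)$, which are positively independent yet satisfy $\hat{B}\alpha=b_1$ with $\alpha[1]=2\neq 1$), so the proof must and does rely on the $b_j$ coming from the nonnegative Hankel matrix.
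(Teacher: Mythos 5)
Your proof is correct and takes essentially the same route as the paper's: the same rearrangement into $(1-\alpha[i])\,b_i=\sum_{j\neq i}\alpha[j]\,b_j$, the same case split on $\alpha[i]<1$ versus $\alpha[i]>1$, and the same final step forcing $\alpha[j]=0$ for $j\neq i$. The only difference is that you make explicit (and justify with a counterexample) the nonnegativity of the $b_j$, which the paper uses implicitly in the phrase ``since all $b$'s are positive''---a worthwhile clarification of the same argument rather than a different proof.
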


\begin{proof}
	Assume $b_{i}=\hat{B}\alpha$. Then we have:
	\begin{equation*}
	    b_{i}=\hat{B}\alpha=\sum_{j=1}^{p}\alpha[j]b_{j}=\alpha[i]b_{i}+\sum_{j\neq i}\alpha[j]b_{j}
	\end{equation*}
	If $\alpha[i]<1$ we obtain:
	\begin{equation*}
	    b_{i}(1-\alpha[i])=\sum_{j\neq i}\alpha[j]b_{j}
	\end{equation*}
	Which is a contradiction, since $b_{i}\in B$ and thus can't be described as a positive combination of the other elements.
	
	If $\alpha[i]>1$ we obtain:
	\begin{equation*}
	    \sum_{j\neq i}\alpha[j]b_{j}+(\alpha[i]-1)b_{i}=0
	\end{equation*}
	This is a contradiction, since all $b$'s are positive, all $\alpha[j]\geq 0$ and $\alpha[i]>1$.
	
	Hence $\alpha[i]=1$. Therefore we have:
	\begin{align*}
	    \sum_{j\neq i}\alpha[j]b_{j}=0
	\end{align*}
	Since $\alpha[j]\geq 0$, and $b_{j}$ is positive the only solution is that $\alpha[j]=0$ for every $j\neq i$.
\end{proof}

\commentout{
\begin{prop}\label{prop:needPMTAs}
	There exists a probabilistic word series for which the \mstar\ alg. may return an MTA with negative weights.
\end{prop}
The proof shows this is the case for the word series over alphabet ${\Sigma=\{a,b,c\}}$ which assigns   
the following six strings: $aa$, $ab$, $ac$, $ba$, $cb$, $cc$ probability of $\frac{1}{6}$ each, and probability $0$ to all other strings.
}

We turn to ask whether 
we can adjust the algorithm \mstar\ to learn a positive basis. 
We note first that working with positive spans is much trickier than working with general spans, since for $d\geq 3$ there
is no bound on the size of a positively independent set in  $\mathbb{R}_+^d$~\cite{Regis2016}. To apply the ideas of the Angluin-style query learning algorihtms we 
 need the Hankel Matrix (which is infinite) to contain a finite sub-matrix with the same rank. 
Unfortunately, as we show next,
there exists a probabilistic (thus positive) tree series  $\treesrs{T}$
that can be recognized by a PMTA, but none of its finite-sub-matrices span the entire space of $\hm{H}_\treesrs{T}$.

\begin{prop}\label{prop:pcfg-no-finite-rank}
	There exists a PCFG $\grmr{G}$ s.t. for the Hankel Matrix
	$H_\grmr{G}$ corresponding to its tree-series $\treesrs{T}_G$ 
	no finite number of rows positively spans the entire matrix. 
\end{prop}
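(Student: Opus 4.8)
The plan is to reduce this statement about the infinite matrix $H_\grmr{G}$ to a finite-dimensional question about the weight vectors $\mu(t)$, and then to exhibit an explicit PCFG whose family of such vectors has an unbounded coordinate ratio. Recall from Lemma~\ref{lem:vec-coord-vars} that for a PCFG $\grmr G$ the entry $\treesrs{T}_G(\contextConcat{c}{t})$ is, for each fixed context $c$, a linear function of the vector $\mu(t)=(\grmrwgt{V_1}{t},\ldots,\grmrwgt{V_d}{t})\in\mathbb{R}^d_{\ge0}$; hence every row $R_t$ of $H_\grmr{G}$ is the image $\Phi(\mu(t))$ of one fixed linear map $\Phi$. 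If I pick two contexts whose induced functionals are linearly independent (e.g. $\context$ and $?(\context,a)$), then $\Phi$ is injective on the relevant span, so a finite set of rows $\{R_{s_1},\ldots,R_{s_q}\}$ positively spans $H_\grmr{G}$ only if $\{\mu(s_1),\ldots,\mu(s_q)\}$ positively spans $\{\mu(t)\mid t\in\trees(\Sigma)\}$. It therefore suffices to build $\grmr G$ for which no such finite set of weight vectors can exist.

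For the construction I would take $\grmr G$ with non-terminals $\{S,B\}$, start $S$, $\Sigma_0=\{a\}$, and (writing $q=\tfrac13$) the rules $S\to S\,a,\ S\to B\,a,\ S\to a$ with weights $q,q,1-2q$ and $B\to B\,a,\ B\to a$ with weights $q,1-q$. This is a genuine PCFG, since each non-terminal's weights sum to $1$, and it is therefore convergent. For the right-comb skeletons $t_0=a$ and $t_n=?(t_{n-1},a)$ a short induction on the recurrences $\grmrwgt{B}{t_n}=q\,\grmrwgt{B}{t_{n-1}}$ and $\grmrwgt{S}{t_n}=q\,\grmrwgt{S}{t_{n-1}}+q\,\grmrwgt{B}{t_{n-1}}$ (a Jordan-block action) gives $\grmrwgt{B}{t_n}=q^n(1-q)$ and $\grmrwgt{S}{t_n}=q^n\big((1-2q)+n(1-q)\big)$, so the ratio $\rho(t_n)=\grmrwgt{S}{t_n}/\grmrwgt{B}{t_n}=\tfrac{1-2q}{1-q}+n$ is strictly increasing and unbounded. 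Moreover, since every rule of $S$ and $B$ produces only right-comb shapes with leaf $a$, one checks that $\grmrwgt{S}{s}>0$ or $\grmrwgt{B}{s}>0$ forces $s=t_n$; hence the only non-zero vectors among $\{\mu(t)\}$ are exactly $\{\mu(t_n)\mid n\ge0\}$.

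For the conclusion, suppose toward a contradiction that finitely many rows positively span $H_\grmr{G}$. By the reduction their weight vectors positively span $\{\mu(t_n)\}$, and being non-zero they lie in $\{\mu(t_n)\mid n\in F\}$ for some finite $F$; let $N=\max F$. Writing $\mu(t_m)=\sum_{n\in F}\beta_n\mu(t_n)$ with $\beta_n\ge0$ and dividing the $S$-coordinate by the (positive) $B$-coordinate exhibits $\rho(t_m)$ as a weighted average of $\{\rho(t_n)\mid n\in F\}$ with non-negative weights, so $\rho(t_m)\le\rho(t_N)$; choosing $m>N$ contradicts the strict monotonicity of $\rho$, and so no finite $F$ works. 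The main obstacle I expect is conceptual rather than computational: one must resist arguing through an infinite positively independent family of rows, since (as the excerpt notes for $\mathbb{R}^d_+$ with $d\ge3$) finitely many generators can positively span arbitrarily large positively independent sets, so positive independence alone never bounds the positive rank. The correct engine is the unbounded ratio, equivalently a limiting direction of the family that no single row attains, and the delicate point in the construction is guaranteeing that this limit is genuinely unreachable — here, that no off-family tree lands on the boundary ray, which is precisely why the grammar is arranged so that $S$ and $B$ generate exactly the same comb shapes.
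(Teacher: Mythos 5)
Your proof is correct, and it reaches the contradiction by a genuinely different mechanism than the paper's. The paper also builds an explicit two-non-terminal PCFG whose positive-probability skeletons are all chains ($N_{1}\rightarrow aN_{1}~[\frac12]\mid aN_{2}~[\frac13]\mid aa~[\frac16]$, $N_{2}\rightarrow aN_{1}~[\frac14]\mid aN_{2}~[\frac14]\mid aa~[\frac12]$), so both arguments ultimately live in a two-dimensional space of weight vectors; but there the chain rows satisfy the second-order recurrence $v_{i}=\frac34 v_{i-1}-\frac1{24}v_{i-2}$ (note the negative coefficient), and the contradiction is \emph{extremal}: among all finite positive bases one picks a basis whose lexicographically highest member is lowest, and the recurrence together with Lemma~\ref{lemmaSing} (a positively independent vector admits only the trivial positive representation over its own set) forces two consecutive chain rows to be co-linear, contradicting that extremal choice. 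Your grammar instead makes the action on $(\grmrwgt{S}{\cdot},\grmrwgt{B}{\cdot})$ a Jordan block, so the ratio $\rho(t_{n})=\frac{1-2q}{1-q}+n$ diverges, and your contradiction is the one-line fact that a positive combination cannot exceed the largest ratio among its generators. Your route needs no extremal selection and no analogue of Lemma~\ref{lemmaSing}, and it isolates the real obstruction (a limiting direction of the cone that no single row attains); the paper's route, on the other hand, does not depend on a repeated-eigenvalue design, since it derives the contradiction even when the row directions merely converge without diverging ratios. You also make explicit the reduction from infinite rows to weight vectors (injectivity of $\Phi$ on two chosen contexts), which the paper leaves implicit by manipulating the row recurrence directly.

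Two small repairs, neither fatal. First, the decomposition $H[t][c]=\sum_{i}\Prob_{i}(c)\,\grmrwgt{N_{i}}{t}$ behind your $\Phi$ is not Lemma~\ref{lem:vec-coord-vars} (that lemma concerns the WCFG built from a PMTA); it is the decomposition used inside the proof of Proposition~\ref{prop:colinearity}, and since your grammar is structurally \emph{ambiguous} (the right-hand side $Ba$ belongs to both $S$ and $B$) you need exactly the summed form you wrote, obtained by splitting derivations of $\contextConcat{c}{t}$ according to the non-terminal at the hole. Second, under the paper's literal definition of skeletal trees (topology preserved, internal nodes relabelled $?$), your unary rule $S\rightarrow a$ produces the skeleton $?(a)$, not the bare leaf $a$; so the comb family should be $t_{0}=?(a)$, $t_{n}=?(t_{n-1},a)$, and the bare leaf $a$ is then a separate tree with $\mu(a)=0$ yet $H[a]\neq 0$ (e.g.\ $H[a][?(\context)]=\theta(S\rightarrow a)$), so the blanket claim that every row equals $\Phi(\mu(t))$ fails for that single row. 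This is harmless for your argument: both of your chosen contexts, $\context$ and $?(\context,a)$, vanish on $H[a]$, so restricted to those two coordinates every row (including $H[a]$) does equal $M\mu(t)$ with $M$ invertible, and your pull-back of positive combinations, followed by the weighted-average bound on $\rho$, goes through unchanged.
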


\commentout{
Thus, running \mstar on $\treesrs{T}_G$ would yield an MTA with negative weights.
The proof shows this is the case for the following PCFG:
$$\begin{array}{l@{\ \longrightarrow\ }l}
N_{1}& aN_{1}~[\frac{1}{2}]\ \mid\  aN_{2}~[\frac{1}{3}]\ \mid\  aa~[\frac{1}{6}]\\[2mm]
N_{2}& aN_{1}~[\frac{1}{4}]\ \mid\  aN_{2}~[\frac{1}{4}]\ \mid\  aa~[\frac{1}{2}]
\end{array}$$}

\begin{proof}
Let $\grmr{G}=(\{a\},\{N_{1},N_{2}\},R,N_{1})$ be the following PCFG:
$$\begin{array}{l@{\ \longrightarrow\ }l}
N_{1}& aN_{1}~[\frac{1}{2}]\ \mid\  aN_{2}~[\frac{1}{3}]\ \mid\  aa~[\frac{1}{6}]  \\[2mm]
N_{2}& aN_{1}~[\frac{1}{4}]\ \mid\  aN_{2}~[\frac{1}{4}]\ \mid\  aa~[\frac{1}{2}]   
\end{array}$$

We say that a tree has a \emph{chain structure} if every inner node is of branching-degree $2$ and has
one child which is a terminal. We say that a tree has a \emph{right-chain structure} (resp. \emph{left-chain structure})
if the non-terminal is always the right (resp. left) child.
Note that all trees in $\sema{\grmr{G}}$ have a right-chain structure, and the terminals are always the letter $a$.

Let us denote by $p_{n}$ the total probability of all trees with $n$ non-terminals s.t. the lowest non terminal is $N_{1}$, 
and similarly, let us denote by $q_{n}$ the total probability of all trees with $n$ non-terminals s.t. the lowest non-terminal is $N_{2}$. 

We have that $p_{0}=0$, $p_{1}=\frac{1}{6}$, and  $p_{2}=\frac{1}{12}$. We also have that $q_{0}=0$, $q_{1}=0$ and  $q_{2}=\frac{1}{6}$. 

Now, to create a tree with $n$ non-terminals, we should take a tree with $n-1$ non-terminals, ending with either $N_{1}$ or $N_{2}$, and use the last derivation. So we have:
\begin{align*}
p_{n}&=\frac{1}{2}\cdot p_{n-1}+\frac{1}{4}\cdot q_{n-1}\\
q_{n}&=\frac{1}{3}\cdot p_{n-1}+\frac{1}{4}\cdot q_{n-1}
\end{align*}

We want to express $p_{n}$ only as a function of $p_{i}$ for $i<n$, and similarly for $q_{n}$. Starting with the first equation we obtain:
\begin{align*}
p_{n}&=\frac{1}{2}\cdot p_{n-1}+\frac{1}{4}\cdot q_{n-1}\\
4\cdot p_{n+1}-2\cdot p_{n}&=q_{n}\\
4\cdot p_{n}-2\cdot p_{n-1}&=q_{n-1}
\end{align*}

And from the second equation we obtain:
\begin{align*}
q_{n}&=\frac{1}{3}\cdot p_{n-1}+\frac{1}{4}\cdot q_{n-1}\\
3\cdot q_{n+1}-\frac{3}{4}\cdot q_{n}&=p_{n}\\
3\cdot q_{n}-\frac{3}{4}\cdot q_{n-1}&=p_{n-1}
\end{align*}

Now setting these values in each of the equation, we obtain:

\begin{align*}
4\cdot p_{n+1}-2\cdot p_{n}&=\frac{1}{3}\cdot p_{n-1}+\frac{1}{4}(4\cdot p_{n}-2\cdot p_{n-1})\\
p_{n+1}&=\frac{1}{2}\cdot p_{n}+\frac{1}{12}\cdot p_{n-1}+\frac{1}{16}(4\cdot p_{n}-2\cdot p_{n-1})\\
p_{n+1}&=\frac{1}{2}\cdot p_{n}+\frac{1}{12}\cdot p_{n-1}+\frac{1}{4}\cdot p_{n}-\frac{1}{8}\cdot p_{n-1}\\
p_{n+1}&=\frac{3}{4}\cdot p_{n}-\frac{1}{24}\cdot p_{n-1}
\end{align*}

And:

\begin{align*}
3\cdot q_{n+1}-\frac{3}{4}\cdot q_{n}&=\frac{1}{2}\cdot(3\cdot q_{n}-\frac{3}{4}\cdot q_{n-1})+\frac{1}{4}\cdot q_{n-1}\\
3\cdot q_{n+1}&=\frac{9}{4}\cdot q_{n}-\frac{1}{8}\cdot q_{n-1}\\
q_{n+1}&=\frac{9}{12}\cdot q_{n}-\frac{1}{24}\cdot q_{n-1}=\frac{3}{4}\cdot q_{n}-\frac{1}{24}\cdot q_{n-1}
\end{align*}
let us denote by $t_{n}$ the probability that $\mathcal{G}$ assigns to $a^{n}$. This probability is:
\begin{equation*}
t_{n}=\frac{1}{6}\cdot p_{n-1}+\frac{1}{2}\cdot q_{n-1}
\end{equation*}
Since $$p_{n-1}=\frac{3}{4}\cdot p_{n-2}-\frac{1}{24}\cdot p_{n-3}$$ and $$q_{n-1}=\frac{3}{4}\cdot q_{n-2}-\frac{1}{24}\cdot q_{n-3}$$ we obtain:
{\small
\begin{align*}
t_{n}=&\frac{1}{6}\cdot(\frac{3}{4}\cdot p_{n-2}-\frac{1}{24}\cdot p_{n-3})+\frac{1}{2}\cdot(\frac{3}{4}\cdot q_{n-2}-\frac{1}{24}\cdot q_{n-3})\\
t_{n}=&\frac{1}{6}\cdot\frac{3}{4}\cdot p_{n-2}-\frac{1}{6}\cdot\frac{1}{24}\cdot p_{n-3}+\frac{1}{2}\cdot\frac{3}{4}\cdot q_{n-2}-\frac{1}{2}\cdot\frac{1}{24}\cdot q_{n-3}\\
t_{n}=&\frac{3}{4}\cdot(\frac{1}{6}\cdot p_{n-2}+\frac{1}{2}\cdot q_{n-2})-\frac{1}{24}\cdot(\frac{1}{6}\cdot p_{n-3}+\frac{1}{2}\cdot q_{n-3})=\\
       =&\frac{3}{4}\cdot t_{n-1}-\frac{1}{24}\cdot t_{n-2}
\end{align*}
}

Hence, overall, we obtain:
\begin{equation*}
t_{n}=\frac{3}{4}\cdot t_{n-1}-\frac{1}{24}\cdot t_{n-2}
\end{equation*}

Now, let $L$ be the skeletal-tree-language of the grammar $\mathcal{G}$, and let $H$ be the Hankel Matrix for this tree set. Note, that any tree $t$ whose structure is not a right-chain, would have $L(t)=0$, and also for every context $c$, $L(\contextConcat{c}{t})=0$. Similarly, every context $c$ who violates the right-chain structure, would have $L(\contextConcat{c}{t})=0$ for every $t$.

Let $T_{n}$ be the skeletal tree for the tree of right-chain structure, with $n$ leaves. We have that $L(T_{1})=0$, $L(T_{2})=\frac{1}{6}$, $L(T_{3})=\frac{1}{4}$, and for every $i>3$ we have $$L(T_{i})=\frac{3}{4}\cdot L(T_{i-1})-\frac{1}{24}\cdot L(T_{i-2}).$$ Let $v_{i}$ be the infinite row-vector of the Hankel matrix corresponding to $T_{i}$. We have that for every $i>3$, $$v_{i}=\frac{3}{4}\cdot v_{i-1}-\frac{1}{24}\cdot v_{i-2}.$$ Assume towards contradiction that there exists a subset of rows that is a positive base and spans the entire matrix $H$.

Let $B$ be the positive base, whose highest member (in the lexicographic order) is the lowest among all the positive bases. Let $v_{r}$ be the row vector for the highest member in this base. Thus, $v_{r+1}\in\posspan{B}$. Hence:
\begin{equation*}
v_{r+1}=\alpha \hat{B}
\end{equation*}
Also, $v_{r+1}=\frac{3}{4}\cdot v_{r}-\frac{1}{24}\cdot v_{r-1}$. Therefore,
\begin{align*}
\frac{3}{4}\cdot v_{r}-\frac{1}{24}\cdot v_{r-1}=\alpha\hat{B}\\
v_{r}=\frac{4}{3}\cdot\alpha\hat{B}+\frac{1}{18}\cdot v_{r-1}
\end{align*}

We will next show that $v_{r-1}$ and $v_r$ are co-linear, which contradicts our choice of $v_r$.
Since $v_{r-1}\in\posspan{B}$ we know $v_{r-1}=\alpha'\hat{B}$ for some $\alpha'$. Therefore, 
\begin{equation*}
v_{r}=(\frac{4}{3}\cdot\alpha+\frac{1}{18}\cdot\alpha')\hat{B}
\end{equation*}
Let $\beta=\frac{4}{3}\cdot\alpha+\frac{1}{18}\cdot\alpha'$. Since $\alpha$ and $\alpha'$ are non-negative vectors, so is $\beta$. And by Lemma \ref{lemmaSing}  it follows that for every $i\neq r$:
\begin{equation*}
\beta_{i}=\frac{4\cdot\alpha_{i}}{3}+\frac{\alpha'_{i}}{18}=0
\end{equation*}
Since $\alpha_{i}$ and $\alpha'_{i}$ are non-negative, we have that $\alpha_{i}=\alpha'_{i}=0$.

Since for every $i\neq r$, $\alpha'_{i}=0$, it follows that $v_{r-1}=m\cdot v_{r}$ for some $m\in\mathbb{R}$. 
Now, $m$ can't be zero since our language is strictly positive and all entries in the matrix are non-negative. Thus, $v_{r}=\frac{1}{m}\cdot v_{r-1}$, and $v_{r}$ and $v_{r-1}$ are co-linear. We can replace $v_{r}$ by $v_{r-1}$, contradicting the fact that we chose the base whose highest member is as low as possible.
\end{proof}

We note that a similar negative result was independently obtained in~\cite{HeerdtKR020} which studies the learnability of weighted automata, using an \lstar\ algorithm, over algebraic structures different than fields.
It was shown that weighted automata over the semiring of natural numbers are not learnable in the \lstar\ framework. The formal series used in the proof is $f(a^n)=2^n-1$. 
Since this series is divergent it does not characterize a probabilistic automata/grammars. Proposition~\ref{prop:pcfg-no-finite-rank} strengthens this result as it provides a convergent series that is not $\lstar$ learnable.


\subsection{Focusing on Structurally Unambiguous CFGs}\label{sec:sucfg}
To overcome these obstacles we restrict attention to structurally unambiguous CFGs (SUCFGs) and their weighted/probabilistic versions (SUWCFGs/SUPCFGs).
A context-free grammar is termed \emph{ambiguous} if there exists more than one derivation tree for the same word.
We  term a CFG \emph{structurally ambiguous} if there exists more than one
derivation tree with the same structure for the same word.
A context-free  language is termed \emph{inherently ambiguous}   if it cannot be derived by an unambiguous CFG.
 Note that a CFG which is unambiguous is also structurally unambiguous, while the other direction is not necessarily true.
{For instance, 
	the language $\{a^n b^n c^md^m~|~ n\geq 1, m\geq 1\}\cup \{a^n b^m c^md^n~|~ n\geq 1,m\geq 1\}$ which is inherently ambiguous~\cite[Thm.~4.7]{HopcroftUllman79} is not inherently structurally ambiguous.} 
Therefore
we have relaxed the classical unambiguity requirement.

\paragraph{The Hankel Matrix and MTA for SUPCFG}
Recall that the Hankel Matrix considers skeletal trees. Therefore if a word has more than one derivation tree with the same
structure, the respective entry in the matrix holds the sum of weights for all derivations. This makes it harder for the learning algorithm
to infer the weight of each tree separately. By choosing to work with structurally unambiguous grammars, we overcome this difficulty
as an entry corresponds to a single derivation tree. 

To discuss properties of the Hankel Matrix for an SUPCFG we need the following definitions.
Let ${H}$ be a matrix, $t$ a tree (or row index) $c$ a context (or column index), $T$ a set of trees (or row indices) and $C$ a set of contexts (or column indices).
We use ${H}[t]$ (resp. ${H}[c]$) for the row (resp. column) of ${H}$ corresponding to $t$ (resp. $c$).
Similarly we use ${H}[T]$ and ${H}[C]$ for the corresponding sets of rows or columns. Finally, we use  ${H}[t][C]$ for the restriction of ${H}$ to row $t$ and columns $[C]$.

Two vectors, $v_{1},v_{2}\in\mathbb{R}^{n}$ are \emph{co-linear} with a scalar ${\alpha\in\mathbb{R}}$ for some $\alpha\neq 0$ iff $v_{1}=\alpha\cdot v_{2}$.
 Given a matrix $H$, and two trees $t_1$ and $t_2$, we say that $t_{1}\treesColin{\alpha}{H} t_{2}$ iff ${H}[t_{1}]$ and ${H}[t_{2}]$ are co-linear, with scalar $\alpha\neq 0$. That is, $H[t_{1}]=\alpha\cdot H[t_{2}]$. Note that if $H[t_{1}]=H[t_{2}]=\bar{0}$, then $t_{1}\treesColin{\alpha}{H} t_{2}$ for every $\alpha>0$.
We say that $t_{1}\treesEquiv{H} t_{2}$ if $t_{1}\treesColin{\alpha}{H} t_{2}$ for some $\alpha\neq 0$.  It is not hard to see that $\treesEquiv{H}$ is an equivalence relation.

The following proposition states that in the Hankel Matrix of an SUPCFG, the rows of trees that are rooted by the same non-terminal
 are co-linear.
\begin{prop}\label{prop:colinearity}
	Let $\hm{H}$ be the Hankel Matrix of an SUPCFG. 
	Let $t_{1},t_{2}$ be derivation trees rooted by the same non-terminal. Assume $\Prob(t_{1}),\Prob(t_{2})>0$.
	Then $t_{1}\treesColin{\alpha}{\hm{H}} t_{2}$ for some $\alpha\neq 0$.
\end{prop}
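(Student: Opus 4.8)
The plan is to show that every entry of the row $\hm{H}[t_i]$ factors as a context-dependent scalar that is \emph{the same} for $t_1$ and $t_2$, multiplied by $\Prob(t_i)$; co-linearity with scalar $\alpha=\Prob(t_1)/\Prob(t_2)$ then drops out immediately. Write $s_i=\skel(t_i)$ and recall that the $(t_i,c)$-entry of the Hankel Matrix is $\hm{H}[s_i][c]=\grmrwgt{S}{\contextConcat{c}{s_i}}$, the total weight of deriving the full skeletal tree $\contextConcat{c}{s_i}$ from the start symbol.

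First I would establish a \emph{context-factorization}: for every context $c$ there are coefficients $f_c(X)$, indexed by $X\in\Vars$ and \emph{depending only on $c$}, such that
\[
  \hm{H}[s][c]\;=\;\sum_{X\in\Vars} f_c(X)\cdot\grmrwgt{X}{s}
\]
for every skeletal subtree $s$. Intuitively $f_c(X)$ is the weight of all partial derivations of the context $c$ in which the hole $\context$ is assigned the non-terminal $X$, i.e.\ the product of the rule weights everywhere in $c$ except on the subtree to be plugged in. This is proved by induction on the structure of $c$ using the multilinear recursion of Lemma~\ref{lem:weight}: the base case $c=\context$ gives $f_\context(X)=1$ if $X=S$ and $0$ otherwise, and the inductive step pushes the sibling pure-tree subtrees and the rule at the root of $c$ into the coefficients while keeping $\grmrwgt{X}{s}$ as the only $s$-dependent factor.

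Next I would invoke structural unambiguity to collapse the sum to a single term. Since $t_i$ is a derivation tree rooted at the common non-terminal $V$ with $\skel(t_i)=s_i$, and the grammar is structurally unambiguous, $s_i$ has no derivation from any root other than $V$; hence $\grmrwgt{X}{s_i}=0$ for all $X\neq V$, while $\grmrwgt{V}{s_i}=\Prob(t_i)>0$. Plugging this into the factorization gives $\hm{H}[s_i][c]=f_c(V)\cdot\Prob(t_i)$ for every context $c$. Consequently, for all $c$,
\[
  \hm{H}[t_1][c]=f_c(V)\,\Prob(t_1)=\frac{\Prob(t_1)}{\Prob(t_2)}\cdot f_c(V)\,\Prob(t_2)=\frac{\Prob(t_1)}{\Prob(t_2)}\cdot\hm{H}[t_2][c],
\]
so $t_1\treesColin{\alpha}{\hm{H}}t_2$ with $\alpha=\Prob(t_1)/\Prob(t_2)\neq 0$, as required.

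I expect the main obstacle to be the context-factorization step: one has to set up the auxiliary context-weight function correctly (the hole may sit at arbitrary depth, with pure-tree siblings contributing fixed factors along the path) and verify by induction that the only $s$-dependent quantity surviving is $\grmrwgt{X}{s}$, so that $f_c(X)$ is genuinely independent of the subtree plugged in. The role of structural unambiguity is then the crisp, but essential, observation that a skeletal tree determines its derivation — and in particular its root non-terminal — uniquely, which is exactly what removes every term but $X=V$ and makes the ratio $\Prob(t_1)/\Prob(t_2)$ independent of $c$.
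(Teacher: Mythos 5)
Your proposal is correct and takes essentially the same route as the paper's proof: the paper likewise factors the value of $\contextConcat{c}{t_i}$ as $\Prob_{i}(c)\cdot\Prob(t_{i})$, where $\Prob_{i}(c)$ is precisely your context-only weight $f_c(V)$ (the outside weight of $c$ with the common root non-terminal at the hole), invokes structural unambiguity to ensure no other derivation contributes, and reads off $\alpha=\Prob(t_{1})/\Prob(t_{2})$. Your version merely packages this as a general inside--outside sum over all non-terminals followed by the SU collapse, which is a slightly more systematic presentation of the same argument (and absorbs the paper's separate case analysis for contexts with $\Prob_{i}(c)=0$).
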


\begin{proof}
	Let $c$ be a context. Let $u\context v$ be the  yield of the context; that is, the letters with which the leaves of the context are tagged, in a left to right order (note that $u$ and $v$ might be $\varepsilon$). 
	We denote by $\Prob_{i}(c)$ the probability of deriving the given context $c$, while setting the context location to be $N_{i}$. That is:
	\begin{equation*}
	\Prob_{i}(c)=\Prob(S\underset{\mathcal{G}}{\overset{*}{\rightarrow}} u N_{i} v)
	\end{equation*}
	Let $\Prob(t_{1})$ and $\Prob(t_{2})$ be the probabilities for deriving the trees $t_{1}$ and $t_{2}$ respectively. Since the grammar is structurally unambiguous, we are guaranteed there is a single derivation tree for $uwv$ where $w$ is the yield of $t_1$ with the structure of $\contextConcat{c}{t_{1}}$ (and similarly for $t_2$), thus we obtain
	\begin{align*}
	\Prob(\contextConcat{c}{t_{1}})&=\Prob_{i}(c)\cdot \Prob(t_{1})\\
	\Prob(\contextConcat{c}{t_{2}})&=\Prob_{i}(c)\cdot \Prob(t_{2})
	\end{align*}
	Hence for every context $c$, assuming that $P_{i}(c)\neq 0$, we have
	\begin{equation*}
	\frac{\Prob(\contextConcat{c}{t_{1}})}{\Prob(\contextConcat{c}{t_{2}})}=\frac{\Prob(t_{1})}{\Prob(t_{2})}
	\end{equation*}
	For a context $c$ for which $\Prob_{i}(c)=0$ we obtain that $\Prob(\contextConcat{c}{t_{1}})=\Prob(\contextConcat{c}{t_{2}})=0$. Therefore, for every context
	\begin{equation*}
	\Prob(\contextConcat{c}{t_{1}})=\frac{\Prob(t_{1})}{\Prob(t_{2})}\Prob(\contextConcat{c}{t_{2}})
	\end{equation*}
	Thus $H[t_{1}]=\alpha\cdot H[t_{2}]$ for $\alpha=\frac{\Prob(t_{1})}{\Prob(t_{2})}$. Hence $H[t_{1}]$ and $H[t_{2}]$ are co-linear, and $t_{1}\treesColin{\alpha}{H} t_{2}$.
\end{proof}

We can thus conclude that the number of equivalence classes of $\equiv_{H}$ for an SUPCFG is finite and
bounded by the number of non-terminals plus one (for the zero vector). 
\begin{corollary}\label{cor:finite-equiv-cls}
	The skeletal tree-set for an SUPCFG has a finite number of equivalence classes under $\equiv_{H}$.
\end{corollary}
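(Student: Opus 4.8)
The plan is to show that $\treesEquiv{\hm{H}}$ has at most $|\Vars|+1$ classes: a single class gathering all skeletal trees whose Hankel row is $\bar0$, and at most one class per non-terminal for the remaining rows. First I would dispose of the zero rows. By the convention recorded just before the statement, any two skeletal trees with $\hm{H}[s_1]=\hm{H}[s_2]=\bar0$ satisfy $s_1\treesColin{\alpha}{\hm{H}}s_2$ for every $\alpha>0$, so they all lie in one class; moreover a zero row is never co-linear to a non-zero row, since co-linearity demands a scalar $\alpha\neq0$ with $\bar0=\alpha\cdot v$, forcing $v=\bar0$. Hence this class is disjoint from the rest and accounts for the ``$+1$''.

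The heart of the argument is to attach to every skeletal tree $s$ with $\hm{H}[s]\neq\bar0$ a non-terminal. Since the row is non-zero, there is a context $c$ with $\treesrs{T}(\contextConcat{c}{s})\neq0$, so the full skeletal tree $\contextConcat{c}{s}$ is generated with positive probability. Here I would invoke structural unambiguity: $\contextConcat{c}{s}$ admits a \emph{unique} derivation tree $t'$, and the subtree $t$ of $t'$ occupying the position of the hole satisfies $\skel(t)=s$, is rooted by a well-defined non-terminal $N_i$, and has $\Prob(t)>0$ (its weight is a factor of the product $\Prob(t')>0$, exactly the factorization $\Prob(\contextConcat{c}{t})=\Prob_i(c)\cdot\Prob(t)$ used in the proof of Proposition~\ref{prop:colinearity}). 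Thus every non-zero row is the row of some positive-probability derivation tree rooted by a non-terminal.

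Finally I would apply Proposition~\ref{prop:colinearity} directly: any two positive-probability derivation trees rooted by the same non-terminal have co-linear rows, so all skeletal trees attached to a fixed $N_i$ fall into one $\treesEquiv{\hm{H}}$ class. As there are $|\Vars|$ non-terminals, the non-zero rows occupy at most $|\Vars|$ classes, and together with the zero class we obtain the bound $|\Vars|+1$, which is finite.

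The hard part will be the attachment step of the second paragraph, and there are two subtleties I would need to handle carefully. One is justifying $\Prob(t)>0$ for the hole-subtree from the factorization of $\Prob(t')$, which is where structural unambiguity is essential (it guarantees the single derivation and hence a clean factorization rather than a sum over competing derivations). The other is that a single $s$ could be the skeleton of positive-probability subtrees rooted at \emph{several} distinct non-terminals; this does not break the count, since it merely forces the corresponding classes to coincide (co-linearity, hence $\treesEquiv{\hm{H}}$, is transitive), so the attachment need only be to \emph{some} such $N_i$ and the number of classes can only decrease. Non-terminals that are unreachable or that generate no positive-probability subtree simply contribute no non-zero class, so the bound $|\Vars|+1$ holds regardless.
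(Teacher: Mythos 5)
Your proof is correct and takes essentially the same route as the paper's: one equivalence class for the zero rows plus, by Proposition~\ref{prop:colinearity}, at most one class per non-terminal, giving the bound $|\Vars|+1$. The only difference is one of rigor rather than approach: your context-completion argument makes explicit how a non-zero Hankel row gets attached to a non-terminal with a positive-probability derivation, a step the paper compresses into the assertion that structural unambiguity yields a unique tagging (and hence a unique root non-terminal) for every skeletal tree.
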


\begin{proof}
	Since the PCFG is structurally unambiguous, it follows that for every skeletal tree $s$ there is a single tagged parse tree $t$ such that $\skel(t)=s$. Hence, for every $s$ there is a single possible tagging, and a single possible non-terminal in the root. By Proposition \ref{prop:colinearity} every pair of trees $s_{1},s_{2}$  which are tagged by the same non-terminal, and in which $\Prob(s_{1}),\Prob(s_{2})>0$ are in the same equivalence class under $\equiv_{H}$. There is another equivalence class for all the trees $t\in\trees$ for which $\Prob(t)=0$. Since there is a finite number of non-terminals, there is a finite number of equivalence classes under $\equiv_{H}$.
\end{proof}

Next we would like to reveal the restrictions that can be imposed on a PMTA that corresponds to an SUPCFG.
We term a PMTA \emph{co-linear}, and denote it CMTA, if in every column of every transition matrix $\mu_\sigma$ there is at most one entry
which is non-negative. That is, in every transition matrix $\mu_\sigma$ of a CMTA 
all weights are either zero or positive, and there is at most one non-zero entry in each column.
\begin{prop}\label{prop:SUWCFGhaveCMTA}
	A CMTA can represent an SUPCFG.
\end{prop}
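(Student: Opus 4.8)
The plan is to construct, from a given SUPCFG $\grmr{G}$, a PMTA over the skeletal alphabet that computes the same skeletal tree series $\treesrs{T}_G$ and that moreover satisfies the co-linearity condition. The states of this automaton will be the finitely many non-zero equivalence classes of $\treesEquiv{H}$ guaranteed by Corollary~\ref{cor:finite-equiv-cls}; concretely, one may first merge any two non-terminals whose rows are co-linear (these generate the same skeletal information) and then apply the natural inverse of the transformation of Fig~\ref{fig:eqs-pmta-to-pcfg}: let the dimension $d$ be the number of such classes $C_1,\ldots,C_d$, fix a representative skeletal tree $r_i$ with $\Prob(r_i)>0$ for each $C_i$, set the leaf transitions $\mu_\sigma$ and the coefficients $c^i_{j_1\ldots j_k}$ of $\mu_?$ from the (merged) rule weights, and let $\lambda[i]=\treesrs{T}_G(r_i)$.

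First I would establish correctness, i.e.\ $\aut{A}(s)=\treesrs{T}_G(s)$ for every skeletal tree $s$. Since the construction is exactly the inverse of the map of Fig~\ref{fig:eqs-pmta-to-pcfg}, this is immediate from Proposition~\ref{prop:equiv1}; alternatively it can be reproved by induction on the height of $s$ in the style of Lemma~\ref{lem:vec-coord-vars}, so that the $i$-th coordinate of $\mu(s)$ equals the weight $\grmrwgt{C_i}{s}$ of derivations rooted in class $C_i$. The structural fact I would extract here, which rests on structural unambiguity through Proposition~\ref{prop:colinearity}, is that for every skeletal tree $s$ the vector $\mu(s)$ has at most one non-zero coordinate: any class that roots a positive-weight derivation with skeleton $s$ produces a row co-linear with $H[s]$, so all such roots coincide with the single class $\iota(s)$, and $\mu(s)$ is a scalar multiple of the unit vector $e_{\iota(s)}$.

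The main work is to verify the co-linearity condition, namely that every column of every transition matrix $\mu_\sigma$ has at most one non-zero entry. Fix the column of $\mu_?$ indexed by $(j_1,\ldots,j_k)$ and consider the skeletal tree $s'=?(r_{j_1},\ldots,r_{j_k})$ assembled from the class representatives. The entry $c^i_{j_1\ldots j_k}$ in row $i$ of this column is non-zero exactly when class $C_i$ can root a positive-weight derivation of the skeleton $s'$. By the single-coordinate fact applied to $s'$, all such roots lie in the single class $\iota(s')$; hence at most the one row $i=\iota(s')$ of the column is non-zero, which is precisely the co-linearity requirement. For a leaf transition $\mu_\sigma$, which is a single column, the same argument applies to the one-node skeletal tree $\sigma$.

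I expect the delicate point to be the legitimacy of passing to classes: one must check that distinct classes genuinely cannot produce the same tuple of child-classes and, for correctness, that the rule weights are well defined at the level of classes. Both hinge on the multiplicative congruence of $\treesEquiv{H}$ --- that if $s_\ell\treesColin{\alpha_\ell}{H}r_{\iota(s_\ell)}$ for every $\ell$ then $?(s_1,\ldots,s_k)\treesColin{\prod_\ell\alpha_\ell}{H}?(r_{\iota(s_1)},\ldots,r_{\iota(s_k)})$ --- which I would prove from the definition of the Hankel Matrix by composing contexts one child at a time, invoking Proposition~\ref{prop:colinearity} to track the scalars. Once this congruence is in hand the induction closes and both correctness and co-linearity follow.
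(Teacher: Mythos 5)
You take a genuinely different route from the paper, and as written it has a real gap. The paper never looks at the Hankel Matrix here: it converts the grammar syntactically into a PMTA with one dimension per non-terminal \emph{and} per terminal, takes the rule weights themselves as the entries of $\mu$, sets $\lambda=(1,0,\ldots,0)$, proves weight preservation (Proposition~\ref{prop:equiv2}), and gets the one-non-zero-entry-per-column property from invertibility, which Lemma~\ref{lem:invertible-iff-structuambig} shows is equivalent to structural unambiguity. Your construction instead takes the co-linearity classes of $\treesEquiv{H}$ as dimensions, but it mixes two incompatible weight conventions, and the one it officially adopts --- ``(merged) rule weights'' --- does not exist in general. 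Take $S \rightarrow AB~[\frac{1}{2}] \mid A'B~[\frac{1}{2}]$, $A \rightarrow a~[1]$, $A' \rightarrow a'~[\frac{1}{2}] \mid a''~[\frac{1}{2}]$, $B \rightarrow b~[1]$: all right-hand sides are distinct, so the grammar is invertible, hence structurally unambiguous, yet $H[?(a')]=\frac{1}{2}\cdot H[?(a)]$ and $H[a']=\frac{1}{2}\cdot H[a]$, so the $A$- and $A'$-trees fall into one class and $a,a',a''$ into another, and the rules $A\rightarrow a$ (weight $1$) and $A'\rightarrow a'$ (weight $\frac{1}{2}$) collapse to the \emph{same} class-level rule with different weights; no single coefficient serves both, and the lost scalar $\frac{1}{2}$ is precisely what a co-linearity coefficient records. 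Moreover, even where merging is vacuous your two conventions clash: if, as in your second paragraph, $\mu(s)[i]=\grmrwgt{C_i}{s}$, then the correct output vector is the indicator of the start class (the paper's $\lambda=(1,0,\ldots,0)$), not $\lambda[i]=\treesrs{T}_G(r_i)$; with your $\lambda$ the automaton returns $\treesrs{T}_G(r_{\iota(s)})\cdot\grmrwgt{C_{\iota(s)}}{s}$, which differs from $\treesrs{T}_G(s)$ by the factor $\grmrwgt{C_{\iota(s)}}{r_{\iota(s)}}$. On the paper's running example (\S\ref{sec:running-example}), taking $r=?(a,b)$ as the representative of the $S$-class, your automaton outputs $\frac{1}{2}\cdot\frac{1}{2}=\frac{1}{4}$ on the tree $?(a,b)$ itself instead of $\frac{1}{2}$. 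For the same reason correctness is not ``immediate from Proposition~\ref{prop:equiv1}'': a general SUPCFG is not in the image of the map of Fig.~\ref{fig:eqs-pmta-to-pcfg} --- its rules may mix terminals with non-terminals (e.g.\ $S\rightarrow aS_2$) and it has no rules of the form $S\rightarrow V_i$ --- so there is no ``natural inverse'' to apply, and after merging there is no grammar left to apply it to.

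The repair is to commit entirely to the Hankel-side convention, i.e.\ to run \ExtractCMTA\ (Alg.~\AlgExtractCMTA) on the full Hankel Matrix: pick for each non-zero class $C_i$ a representative $r_i$ with a non-zero \emph{row} (demanding $\Prob(r_i)>0$ from the start symbol is too strong, since most classes contain no such tree); set $c^i_{j_1\ldots j_k}=\alpha$ when $?(r_{j_1},\ldots,r_{j_k})\treesColin{\alpha}{H} r_i$ and $0$ otherwise, likewise for the leaf vectors $\mu_\sigma$; and keep $\lambda[i]=\treesrs{T}_G(r_i)=H[r_i][\context]$. Your multiplicative congruence --- which in fact holds for \emph{every} tree series by the one-child-at-a-time context argument, with no appeal to Proposition~\ref{prop:colinearity} --- then gives by induction that $\mu(s)=\alpha_s\cdot e_{\iota(s)}$ whenever $s\treesColin{\alpha_s}{H} r_{\iota(s)}$, and $\mu(s)=0$ when $H[s]=\overline{0}$, where $e_j$ is the $j$-th standard basis vector and $\iota(s)$ the index of the class of $s$; hence $\aut{A}(s)=\alpha_s\cdot H[r_{\iota(s)}][\context]=\treesrs{T}_G(s)$, and the CMTA column condition holds by construction, each column's unique non-zero row being singled out by the class of $?(r_{j_1},\ldots,r_{j_k})$. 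Structural unambiguity is then used exactly once: through Proposition~\ref{prop:colinearity} and Corollary~\ref{cor:finite-equiv-cls} it makes the number of classes, hence the dimension, finite. With this repair your argument closes, and it even buys something the paper's proof does not --- a CMTA of minimal dimension, mirroring the extraction step of the learning algorithm --- at the cost of substantially more machinery than the paper's short syntactic argument.
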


\commentout{
The proof relies on showing that a WCFG is structurally unambiguous iff it is invertible
and converting an invertible WCFG into a PMTA yields a CMTA.\footnote{A CFG $\grmr{G}=\langle \Vars,\Sigma,R,S\rangle$ is said to be
invertible if and only if $A \rightarrow \alpha$ and $B \rightarrow \alpha$ in $R$ implies $A = B$~\cite{Sakakibara92}.}
}

To prove Proposition~\ref{prop:SUWCFGhaveCMTA} we 
first show how to convert a PWCFG into a PMTA. Then we claim, that in case the PWCFG is structurally unambiguous
the resulting PMTA is a CMTA.

\paragraph{Converting a PWCFG into a PMTA}
Let $\tuple{\grmr{G},\theta}$ be a PWCFG where $\grmr{G}=\langle \Vars,\Sigma,R,S\rangle$. Suppose w.l.o.g that $\Vars=\{N_{0},N_{1},...,N_{|\Vars|-1}\}$, $\Sigma=\{\sigma_{0},\sigma_{1},...,\sigma_{|\Sigma|-1}\}$ and that $S=N_{0}$. Let $n=|\Vars|+|\Sigma|$. We define a function $\iota:\Vars\cup\Sigma\rightarrow\mathbb{N}_{\leq n}$ in the following manner:
\[
\iota(x) =
\begin{dcases*}
j
   & $x=N_{j}\in \Vars$\\
   |\Vars|+j
   & $x=\sigma_{j}\in \Sigma$
\end{dcases*}
\]
Note that since $\Vars\cap\Sigma=\emptyset$, $\iota$ is well defined. It is also easy to observe that $\iota$ is a bijection, so $\iota^{-1}:\mathbb{N}_{\leq n}\rightarrow \Vars\cup\Sigma$ is also a function.\\
We define a PMTA $\mathcal{A}_{\mathcal{G}}$ in the following manner:  
\begin{equation*}
    \mathcal{A}_{\mathcal{G}}=(\Sigma,\mathbb{R}_+,n,\mu,\lambda)
\end{equation*}
where $\lambda\in\mathbb{R}_*^n$ is defined as follows
\begin{equation*}
    \lambda=(1,0,...,0)
\end{equation*}
and for each $\sigma\in\Sigma$ we define
\[
\mu_\sigma[i] =
\begin{dcases*}
1
   & $i=\iota(\sigma)$\\
   0
   & otherwise
\end{dcases*}
\]

For $i\in\{1,\ldots,|\Vars|\}$ and $(i_1,i_2,\ldots,i_j)\in \{1,2\ldots,n\}^{|j|}$,
we define $R^{-1}(i,i_1,i_2,\ldots,i_j)$ to be the production rule
\[ \iota^{-1}(i)\longrightarrow \iota^{-1}(i_{1})~\iota^{-1}(i_{2})~\cdots~\iota^{-1}(i_{j}) \]
We define $\mu_?$ in the following way:
\[
{{\mu_?}^{i}}_{i_{1},...,i_{j}} =
\begin{dcases*}
\theta(R^{-1}(i,i_1,i_2,\ldots,i_j)) & $1\leq i\leq |\Vars|$\\
   0 & otherwise
\end{dcases*}
\]

We claim that the weights computed by the constructed PMTA
agree with the weights computed by the given grammar.

\begin{prop}\label{prop:equiv2}
For each skeletal tree ${t\in \skels(\deriv(\grmr{G}))}$ we have that $\mathcal{W}_{\mathcal{G}}(t)=\mathcal{A}_{\mathcal{G}}(t)$.
\end{prop}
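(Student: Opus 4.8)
The plan is to mirror the argument used for Proposition~\ref{prop:equiv1}, since the construction $\mathcal{A}_{\mathcal{G}}$ computes grammar weights coordinate by coordinate in essentially the same way as the transformation of Fig.~\ref{fig:eqs-pmta-to-pcfg}; the one new ingredient here is that the dimensions range over $\Vars\cup\Sigma$ rather than over $\Vars$ alone. First I would extend the notation $\grmrwgt{x}{s}$ to every symbol $x\in\Vars\cup\Sigma$: for a non-terminal it keeps its meaning (the total weight of derivation trees rooted at $x$ whose skeletal form is $s$), and for a terminal $\sigma$ I set $\grmrwgt{\sigma}{s}=1$ when $s$ is the single leaf $\sigma$ and $\grmrwgt{\sigma}{s}=0$ otherwise. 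With this convention the branching identity of Lemma~\ref{lem:weight} extends verbatim to productions that interleave terminals and non-terminals, the sum now ranging over $(X_1,\ldots,X_k)\in(\Vars\cup\Sigma)^k$.

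The heart of the proof is a coordinate lemma, analogous to Lemma~\ref{lem:vec-coord-vars}: for every skeletal tree $s$ (so internal nodes are $?$ and leaves are terminals) and every index $i$, $\mu(s)[i]=\grmrwgt{\iota^{-1}(i)}{s}$. I would prove it by induction on the height of $s$. For the base case $s=\sigma$ is a single leaf, so $\mu(s)=\mu_\sigma$ is the indicator vector with $1$ at position $\iota(\sigma)$ and $0$ elsewhere; this matches $\grmrwgt{\iota^{-1}(i)}{\sigma}$, which is $1$ exactly when $\iota^{-1}(i)=\sigma$ and $0$ otherwise, since any non-terminal rooting a derivation tree forces an internal $?$ node and hence cannot produce the height-one skeletal tree $\sigma$. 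For the step $s=?(s_1,\ldots,s_k)$ I expand $\mu(s)=\mu_?(\mu(s_1),\ldots,\mu(s_k))$ via the multilinear-map formula. For a terminal coordinate $i>|\Vars|$ the entry ${\mu_?}^{i}_{i_1\cdots i_k}$ is $0$ by definition, matching $\grmrwgt{\iota^{-1}(i)}{s}=0$; for a non-terminal coordinate $i$ with $V=\iota^{-1}(i)$, I substitute ${\mu_?}^{i}_{i_1\cdots i_k}=\theta(V\rightarrow \iota^{-1}(i_1)\cdots\iota^{-1}(i_k))$, apply the induction hypothesis $\mu(s_m)[i_m]=\grmrwgt{\iota^{-1}(i_m)}{s_m}$, and re-index the sum through the bijection $\iota$ (setting $X_m=\iota^{-1}(i_m)$) to obtain exactly the right-hand side of the extended Lemma~\ref{lem:weight}, which equals $\grmrwgt{V}{s}$.

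The proposition then follows at the root: since $S=N_0$, the vector $\lambda=(1,0,\ldots,0)$ is the indicator of the coordinate $\iota(S)$, so $\mathcal{A}_{\mathcal{G}}(s)=\lambda\cdot\mu(s)=\mu(s)[\iota(S)]=\grmrwgt{S}{s}$, and $\grmrwgt{S}{s}$ is by definition the grammar weight $\mathcal{W}_{\mathcal{G}}(s)$ of the skeletal tree $s$.

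I expect the main obstacle to be bookkeeping rather than conceptual: correctly handling the terminal coordinates. One must verify that the terminal dimensions are ``inert'' unit indicators, populated only by leaves and annihilated by $\mu_?$, so that each contributes the multiplicative factor $1$ precisely when the matching terminal leaf appears and $0$ otherwise. Care is also needed with the bijection $\iota$ and the indexing convention between $\{0,\ldots,n-1\}$ and the vector coordinates (in particular that $\lambda$ selects $\iota(S)$), and with confirming that the extended branching identity of Lemma~\ref{lem:weight} is legitimate for mixed productions. Beyond these points the argument is a direct transcription of the proof of Lemma~\ref{lem:vec-coord-vars} and Proposition~\ref{prop:equiv1}.
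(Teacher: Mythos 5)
Your proposal is correct and follows essentially the same route as the paper's proof: an induction on the height of the skeletal tree establishing that each coordinate of $\mu(s)$ equals the corresponding grammar weight (with terminal coordinates acting as inert indicators annihilated by $\mu_?$), followed by reading off the value at the start-symbol coordinate via $\lambda$. Your packaging of the terminal coordinates into an extended $\grmrwgt{\sigma}{s}$ notation is merely a cleaner uniform phrasing of the two-case claim the paper proves explicitly.
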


\begin{proof}
The proof is reminiscent of the proof in the other direction, namely that of Proposition~\ref{prop:equiv1}.
We  first prove by induction that for each $t\in \skels(\deriv(\grmr{G})))$ the vector $\mu(t)=v=(v[1],v[2],...,v[n])$ calculated by $\mathcal{A}_{\mathcal{G}}$ maintains that for each $i\leq |\Vars|$, $v[i]=\grmrwgt{N_i}{t}$; and for $i>|\Vars|$ we have that $v[i]=1$ iff $t=\iota^{-1}(i)$ and $v[i]=0$ otherwise.

The proof is by induction on the height of $t$. For the base case $h=1$, thus $t$ is a leaf, therefore $t=\sigma\in\Sigma$. By definition $\mu_\sigma[i]=1$ if $i=\iota(\sigma)$ and $0$ otherwise. Hence $v[\iota(\sigma)]=1$, and for every $i\neq\iota(\sigma)$ $v[i]=0$. Since the root of the tree is in $\Sigma$, the root of the tree can't be a non-terminal, so $\grmrwgt{N_i}{t}=0$ for every $i$. Thus, the claim holds.

For the induction step,  $h>1$, thus $t=(? (t_{1}, t_{2},...,t_{k}))$ for some skeletal trees $t_{1}, t_{2},...,t_{k}$ of depth at most $h$. Let $\mu(t)=v=(v[1],v[2],...,v[n])$ be the vector calculated by $\mathcal{A}$ for $t$. By our definition of $\mu_?$, for every $i>|\Vars|$ ${{\mu_?}^{i}}_{i_{1},...,i_{j}}=0$ for all values of $i_{1},i_{2},...,i_{j}$. So for every $i>|\Vars|$ we have that $v[i]=0$ as required, since $t\notin\Sigma$. Now for $i\leq|\Vars|$, by definition of a multi-linear map we have that:
\begin{equation*}
        v[i]=\sum_{(i_1,i_2,\ldots,i_j)\in [|\Vars|]^j} {{\mu_?}^{i}}_{i_{1},...,i_{j}}\,v_{1}[j_{1}]\cdot...\cdot v_{j}[i_{j}]
\end{equation*}
Since $i\leq |\Vars|$, by our definition we have that:
\begin{equation*}
    {{\mu_?}^{i}}_{i_{1},...,i_{j}}\,=\theta(\iota^{-1}(i)\longrightarrow \iota^{-1}(i_{1})~\iota^{-1}(i_{2})~\cdots~\iota^{-1}(i_{j}))
\end{equation*}
For each $i_{k}$ let $B_{k}=\iota^{-1}(i_{k})$, also since $i\leq |\Vars|$, $\iota^{-1}(i)=N_{i}$, hence
\begin{equation*}
    {{\mu_?}^{i}}_{i_{1},...,i_{j}}\,=\theta(N_{i}\longrightarrow B_{1}B_{2}...B_{j})
\end{equation*}
For each $i$, by the induction hypothesis, if $t_{i}$ is a leaf, $v_{i}[j_{i}]=1$ only for $j_{i}=\iota(t_{i})$, and otherwise $v_{i}[j_{i}]=0$. If $t_{i}$ is not a leaf, then $v_{i}[j_{i}]=0$ for every $j_{i}>|\Vars|$; and for $j_{i}\leq |\Vars|$, we have that $v_{i}[j_{i}]=\grmrwgt{N_{j_i}}{t_i}$. Therefore we have:
\[
        v[i]=\sum_{(i_1,i_2,\ldots,i_j)\in [|\Vars|]^j} \begin{array}{l}
        \theta(N_{i}\rightarrow B_{1}B_{2}...B_{j})\cdot 
        \grmrwgt{N_{i_1}}{t_1}\cdots \grmrwgt{N_{i_j}}{t_j}
        \end{array}
\]
Thus by Lemma \ref{lem:weight} we have that $v[i]=\grmrwgt{N_{i}}{t}$ as required.

Finally, 
since $S=N_{0}$ and since by our claim, for each $i\leq |\Vars|$, $v_{i}=v[i]=\grmrwgt{N_{i}}{t}$, we get that $v[1]=\grmrwgt{S}{t}$. Also, since $\lambda=(1,0,...,0)$ we have that $\mathcal{A}_{\mathcal{G}}(t)$ is $v[1]$, which is $\grmrwgt{S}{t}$. Thus, it follows that  $\mathcal{W}_{\mathcal{G}}(t)=\mathcal{A}_{\mathcal{G}}(t)$
for every $t\in \skels(\deriv(\grmr{G}))$. 
\end{proof}

To show that the resulting PMTA is a CMTA we
need the following lemma, which makes use of the notion of \emph{invertible} grammars~\cite{Sakakibara92}.
A CFG $\grmr{G}=\langle \Vars,\Sigma,R,S\rangle$ is said to be
\emph{invertible} if and only if $A \rightarrow \alpha$ and $B \rightarrow \alpha$ in $R$ implies $A = B$.

\begin{lemma}\label{lem:invertible-iff-structuambig}
A CFG is invertible iff it is structurally unambiguous.
\end{lemma}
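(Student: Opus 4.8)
The plan is to prove both directions of the biconditional, relating the combinatorial condition of invertibility ($A\rightarrow\alpha$ and $B\rightarrow\alpha$ implies $A=B$) to structural unambiguity (no word has two derivation trees of the same skeletal shape). The proof will proceed by contraposition in both directions, since each hypothesis is a universally quantified implication whose negation supplies a concrete witness to work with.

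For the direction ``invertible $\Rightarrow$ structurally unambiguous'', I would argue the contrapositive: assume the grammar is structurally ambiguous, so some word $w$ has two distinct derivation trees $t_1\neq t_2$ with $\skel(t_1)=\skel(t_2)$. Since the trees have the same skeletal shape but differ, there must be some node where they disagree on the labeling non-terminal. I would descend to a lowest such node, i.e. a node $v$ whose subtrees in $t_1$ and $t_2$ have identical topology and identical non-terminal labels at every position below $v$, yet the non-terminals $A$ and $B$ labeling $v$ in $t_1$ and $t_2$ respectively are distinct. Because everything strictly below $v$ coincides, the right-hand sides applied at $v$ are the same string $\alpha\in(\Vars\cup\Sigma)^*$. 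Thus $A\rightarrow\alpha$ and $B\rightarrow\alpha$ are both in $R$ with $A\neq B$, contradicting invertibility. The care here is in formalizing ``lowest disagreeing node'' and confirming that equality of all descendant labels forces equality of the applied production's right-hand side.

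For the direction ``structurally unambiguous $\Rightarrow$ invertible'', I would again use contraposition: suppose the grammar is not invertible, so there exist productions $A\rightarrow\alpha$ and $B\rightarrow\alpha$ with $A\neq B$. The goal is to manufacture a single word with two distinct derivation trees of the same skeletal shape. The subtlety is that invertibility is defined without any assumption that $A$ and $B$ are reachable from $S$ or that $\alpha$ generates a terminal word, so in full generality the converse could fail for degenerate (useless) productions. I expect the intended reading is that the grammar is trimmed/reduced (all non-terminals are useful), and under that assumption I would first derive a terminal string $w_\alpha$ from $\alpha$ (replacing each non-terminal in $\alpha$ by some terminal yield), obtaining two skeletally-identical subtrees rooted at $A$ and at $B$ with the same yield. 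Then, using reachability of $A$ and $B$, I would need a common context $u\context v$ derivable from $S$ into which both $A$ and $B$ can be plugged — this is where the argument is most delicate, since a priori the contexts reaching $A$ and reaching $B$ differ.

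The main obstacle is precisely this last point: exhibiting \emph{one} word that simultaneously admits an $A$-rooted and a $B$-rooted derivation of the same shape. If $A$ and $B$ are reachable via different contexts, a naive plug-in yields two different words rather than two trees for one word. I anticipate the resolution rests on the classical construction behind invertible grammars (due to \cite{Sakakibara92}): one reduces to the case where the two subtrees sit under a shared parent production, or one appeals to the normal form in which structural ambiguity at a single word is characterized exactly by a non-invertible production pair among useful non-terminals. I would therefore either (i) invoke the assumption that we work with reduced SUCFGs so that a common derivation context for $A$ and $B$ is guaranteed, or (ii) cite the equivalence established in \cite{Sakakibara92} for the converse direction and supply the forward direction in full as sketched above. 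Getting the quantifiers and the reduced-grammar hypotheses stated cleanly, rather than any hard computation, is the crux of making this lemma rigorous.
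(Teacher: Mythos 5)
Your first direction (invertible $\Rightarrow$ structurally unambiguous) is correct: the ``lowest disagreeing node'' argument is just the contrapositive of the paper's proof, which instead runs an induction on the height of a skeletal tree to show every skeletal tree admits at most one tagging; both go through. The genuine gap is in your second direction, and the way you propose to close it cannot work.

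You rightly sense that under the word-level definition of structural ambiguity (``some word has two derivation trees of the same shape, rooted at $S$'') the implication is in trouble, but neither of your patches saves it: the statement is simply \emph{false} under that reading, even for reduced grammars. Take $S \to aA \mid bB$, $A \to c$, $B \to c$. Every non-terminal is reachable and productive, the grammar is not invertible ($A\to c$ and $B\to c$ with $A\neq B$), yet each of the only two words $ac$ and $bc$ has a unique derivation tree, so no word witnesses structural ambiguity. Hence no reduced-grammar hypothesis, and no clever construction of a ``common context'' for $A$ and $B$, can ever produce the required witness --- there is none. What the paper actually does is read structural unambiguity in a stronger, tree-level sense: a grammar is structurally ambiguous if some skeletal tree admits two distinct taggings by non-terminals consistent with $R$, where the tree need not be rooted at $S$ nor derive a terminal word (this is also exactly how the lemma is used later, e.g.\ in Corollary~\ref{cor:finite-equiv-cls}: ``for every skeletal tree $s$ there is a single tagged parse tree $t$''). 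Under that reading your problematic direction becomes a one-liner, which is the paper's entire proof: given $A\to\alpha$ and $B\to\alpha$ with $A\neq B$, the one-level tree $?(\alpha)$ itself can be tagged at the root by either $A$ or $B$, so the grammar is structurally ambiguous --- no reachability, no yield, no common context needed. In short: your difficulty is real, but its resolution is definitional (work with taggings of arbitrary skeletal trees), not a reduced-grammar assumption or an appeal to~\cite{Sakakibara92}.
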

\begin{proof}
Let $\grmr{G}$ be a SUCFG. We  show that $\grmr{G}$ is invertible. Assume towards contradiction that there are derivations $N_{1}\rightarrow\alpha$ and $N_{2}\rightarrow\alpha$. Then the tree $?(\alpha)$ is structurally ambiguous since its root can be tagged by both $N_{1}$ and $N_{2}$.

For the other direction, let $\grmr{G}$ be an invertible grammar. We  show that $\grmr{G}$ is an SUCFG. Let $t$ be a skeletal tree. We show by induction on the height of $t$ that there is a single tagging for $t$.
For the base case, the height of $t$ is $1$. Therefore, $t$ is a leaf; so obviously, it has a single tagging.
For the induction step, we  assume that the claim holds for all skeletal trees of height at most $h\geq 1$. Let $t$ be a tree of height $h+1$. Then $t=?(t_{1},t_{2},...,t_{p})$ for some trees $t_{1},t_{2},...,t_{p}$ of smaller depth. By the induction hypothesis, for each of the trees $t_{1},t_{2},...,t_{p}$ there is a single possible tagging. 
Hence we have established that all nodes of $t$, apart from the root, have a single tagging.
Let $X_{i}\in\Sigma\cup N$ be the only possible tagging for the root of $t_{i}$. Let $\alpha=X_{1}X_{2}...X_{p}$. Since the grammar is invertible, there is a single non-terminal $N$ s.t. $N\rightarrow\alpha$. Hence, there is a single tagging for the root of $t$ as well. Thus $\grmr{G}$ is structurally unambiguous.
\end{proof}

We are finally ready to prove {Proposition~\ref{prop:SUWCFGhaveCMTA}}.

\begin{proof}[Proof of Prop.\ref{prop:SUWCFGhaveCMTA}]
    By Proposition~\ref{prop:equiv2} a WCFG $\tuple{\grmr{G},\theta}$ 
    can be represented by a PMTA $\aut{A}_{\grmr{G}}$, namely they provide the same weight for every
    skeletal tree. By Lemma~\ref{lem:invertible-iff-structuambig} the fact that $\grmr{G}$ is unambiguous implies
    it is invertible. We show that given $\grmr{G}$ is invertible, the 
    resulting PMTA is actually a CMTA. That is, in every column of the matrices of $\aut{A}_{\grmr{G}}$,
    there is at most one non-zero coefficient. Let $\alpha\in (\Sigma\cup \Vars)^{p}$, let $\iota(\alpha)$ be the extension of $\iota$ to $\alpha$, e.g., $\iota(aN_7bb)=\iota(a)\iota(N_7)\iota(b)\iota(b)$.
    Since $\grmr{G}$ is invertible, there is a single $N_{i}$ from which $\alpha$ can be derived, namely for which
     $\grmrwgt{N_{i}}{t^{N_i}_\alpha}>0$ where $t^{N_i}_\alpha$ is a tree deriving $\alpha$ with $N_i$ in the root.
     If $\alpha\in\Sigma$, i.e. it is a leaf, then we have that $\mu_\sigma[j]=0$ for every $j\neq i$, and $\mu_\sigma[i]>0$. If $\alpha\notin\Sigma$, then we have that ${{\mu_?}^{j}}_{\iota(\alpha)}=0$ for every $j\neq i$, and ${{\mu_?}^{i}}_{\iota(\alpha)}>0$, as required.
\end{proof}

\section{The Learning Algorithm}\label{sec:learning-CMTAs}
We are now ready to present the learning algorithm.
Let $\treesrs{T}:\trees(\Sigma)\rightarrow \R$ be an unknown tree series, and let $\hm{H}_\treesrs{T}$ be its Hankel Matrix.
The learning algorithm \LearnCMTA\ (or \cstar, for short), provided in Alg.~ \AlgLearnCMTA,
maintains a data structure called an \emph{observation table}.
An observation table for $\treesrs{T}$ is a quadruple $(T,C,H,B)$. Where $T\subseteq\trees(\Sigma)$ is a set of row titles, $C\subseteq\trees_{\context(\Sigma)}$ is a set of column titles,  $H:T\times C\rightarrow\mathbb{R}$ is a sub-matrix of $\hm{H}_\treesrs{T}$, and $B\subset T$, the so called \emph{basis}, is a set of row titles corresponding to rows of $H$ that are co-linearly independent. 

The algorithm starts with an almost empty observation table, where $T=\emptyset$, $C=\context$, $B=\emptyset$ and uses procedure $\Complete(T,C,H,B,\Sigma_0)$ to add the nullary symbols of the alphabet to the row titles, uses $\smq$ queries to fill in the table until certain criteria hold on the observation, namely it is \emph{closed} and \emph{consistent}, as defined in the sequel.
Once the table is closed and consistent, it is possible to extract from it a CMTA $\aut{A}$ (as we shortly explain). The algorithm then issues the query $\seq(\aut{A})$. If the result is ``yes'' the algorithm returns $\aut{A}$ which was determined to be structurally equivalent to the unknown series. Otherwise, the algorithm gets in return a counterexample $(s,\treesrs{T}(s))$, a structured string in the symmetric difference of $\aut{A}$ and $\treesrs{T}$, and its value. It then uses $\Complete$ to add all prefixes of $t$ to $T$ and uses \smq s to fill in the entries of the table until the table is once again closed and consistent.

\commentout{
	\begin{algorithm}
		\caption{$\LearnCMTA(T,C,H,B)$.}\label{alg:cstar}\label{alg:learn}
		\begin{algorithmic}[1]
			\State  Initialize $B\gets\emptyset,~T\gets\emptyset,~C\gets\{\context\}$ 
			\State $\Complete(T,C,H,B,\Sigma_0)$			
			\While{true}
			\State $\mathcal{A}\gets \ExtractCMTA(T,C,H,B)$
			\State $t\gets\seq(\mathcal{A})$
			\If{$t$ is null}
			\State \Return $\mathcal{A}$
			\EndIf
			\State $\Complete(T,C,H,B,  \pref(t))$
			\EndWhile
		\end{algorithmic}
	\end{algorithm}
	}

\begin{figure}[t]
\noindent\makebox[.98\textwidth]{
	\includegraphics[scale=0.99,page=2, trim=1cm 22cm 10cm 1.8cm]{figures.pdf}
}
\\
\noindent\makebox[.98\textwidth]{
	\includegraphics[scale=0.99,page=3, clip, trim=1cm 23.6cm 10cm 1.8cm]{figures.pdf}
}
\\
\noindent\makebox[.98\textwidth]{
	\includegraphics[scale=0.99,page=4, clip, trim=1cm 22cm 10cm 1.8cm]{figures.pdf}
}
\\
\noindent\makebox[.98\textwidth]{
	\includegraphics[scale=0.99,page=5, clip, trim=1cm 23.6cm 10cm 1.8cm]{figures2.pdf}
}
\end{figure}

Given a set of trees $T$ we use $\Sigma(T)$ for the set of trees $\{ \sigma(t_1,\ldots,t_k) ~|~\exists {\Sigma_k\in\Sigma},\ {\sigma\in\Sigma_k},$ $\ {t_i\in T},\ {\forall 1\leq i \leq k}\}$.
The procedure $\Close(T,C,H,B)$, given in Alg.~\AlgCloseCMTA, checks if $\hm{H}[t][C]$
is co-linearly independent from $T$ for some tree $t\in\Sigma(T)$. If so it adds $t$ to both $T$ and $B$ and loops back until no such trees are found, in which case  the table is termed \emph{closed}.

We use $\Sigma(T,t)$ for the set of trees in $\Sigma(T)$ satisfying that one of the children is the tree $t$. 
We use $\Sigma(T,\context)$ for the set of contexts all of whose children (but the context) are in $T$.
An observation table $(T,C,H,B)$ is said to be \emph{zero-consistent} if for every
tree $t\in T$ for which $H[t]=\overline{0}$ it holds that $H[\contextConcat{c}{t'}]=\overline{0}$ for every $t'\in \Sigma(T,t)$ and $c\in C$ (where $\overline{0}$ is a vector of all zeros). It is said to be \emph{co-linear consistent} if for every $t_1,t_2\in T$ such that $t_1\treesColin{\alpha}{H} t_2$ 
and every context $c\in \Sigma(T,\context)$ we have that   $\contextConcat{c}{t_1}\treesColin{\alpha}{H} \contextConcat{c}{t_2}$. The procedure $\Consistent$, given in Alg.~\AlgConsistentCMTA, looks for trees which violate the zero-consistency or co-linear consistency requirement, and for every violation, the respective context is added to the set of columns $C$. 

The procedure $\Complete(T,C,H,B,S)$, given in Alg.~\AlgCompleteCMTA, first adds the trees in $S$ to $T$, 
then runs procedures $\Close$ and $\Consistent$ iteratively until the table is both closed and consistent. When the table is closed and consistent the algorithm extracts from it a CMTA as detailed in Alg.~\AlgExtractCMTA.

The procedure $\ExtractCMTA$, given in Alg.~\AlgExtractCMTA, sets the output vector $\lambda$ of the CMTA by
setting its $j$-th coordinate to $H(b_j,\diamond)$ (lines 17-18).
For each letter $\sigma\in\Sigma_k$ it builds the $d\times d^k$ matrix $\mu_\sigma$ (where $d$ is the size of the basis $B$)
by setting its coefficients as follows (lines 4-15). For each possible assignments of trees from the basis as children of $\sigma$,
namely for each $(i_1,i_2,\ldots,i_k)\in\{1,2,\ldots,d\}$ it inspects the row $H(t)$ for $t=\sigma(b_{i_1},b_{i_2},\ldots,b_{i_k})$.
If this row is zero then the column of $\mu_\sigma$ corresponding to $(i_1,i_2,\ldots,i_k)$ is set to zero.
Otherwise, since the basis consists of rows that are co-linearly independent, there exists a single row in the basis, $b_i$, such that the row $H(t)$ for $t=\sigma(b_{i_1},b_{i_2},\ldots,b_{i_k})$ is co-linear to $b_i$. Let $\alpha\in\mathbb{R}_+$ be such that  $t\treesColin{\alpha}{H} b_i$. Then the entry $\mu^i_{i_1,i_2,\ldots,i_k}$ is set to $\alpha$ and the entries $\mu^j_{i_1,i_2,\ldots,i_k}$ for $j\neq i$ are set to zero.

\begin{figure}[t]
\noindent\makebox[.98\textwidth]{
	\includegraphics[scale=0.91,page=6, clip, trim=1cm 16.9cm 10cm 1.8cm]{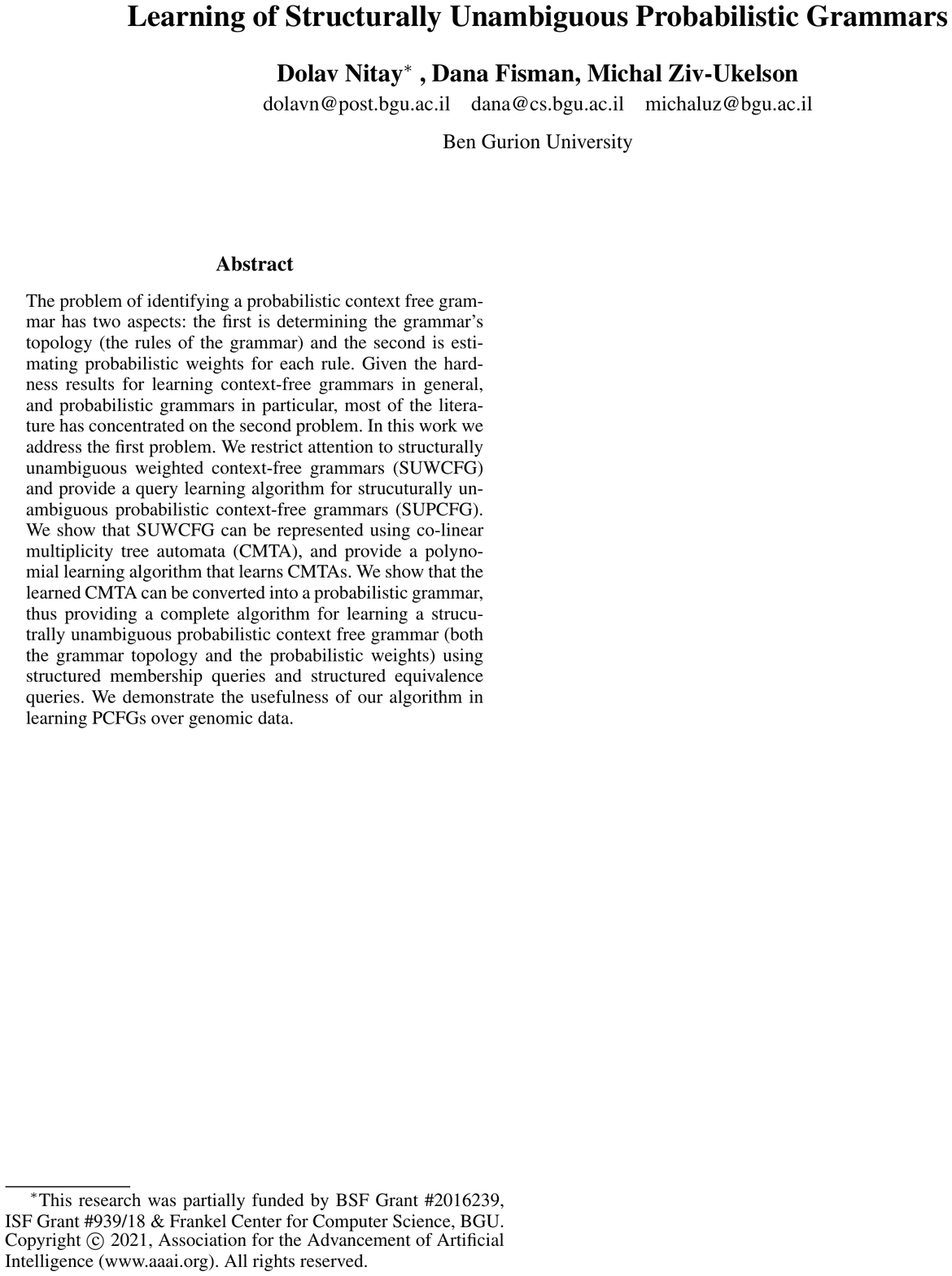}
}
\end{figure}

Overall we can show that the algorithm $\LearnCMTA$ always terminates, returning a correct CMTA whose dimension is minimal, namely it equals the rank $n$ of the Hankel matrix for the target language.
It does so while asking at most $n$ equivalence queries, and the number of membership queries is polynomial in $n$, and in the size of the largest counterexample $m$, but of course exponential in $p$, the highest rank of a symbol in $\Sigma$. Hence for a grammar in Chomsky Normal Form, where $p=2$, it is polynomial in all parameters.

\begin{theorem}\label{thm:bounds}
	Let $n$ be the rank of the target language, let $m$ be the size of the largest counterexample given by the teacher, and let $p$ be the highest rank of a symbol in $\Sigma$. Then the algorithm makes at most $n\cdot(n+m\cdot n+|\Sigma|\cdot (n+m\cdot n)^{p})$ $\smq$s and at most $n$ $\seq$s.
\end{theorem}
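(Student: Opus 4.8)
The plan is to bound the two query types separately, treating the observation table $(T,C,H,B)$ as the central object and relying on the termination-and-minimality claim stated just above the theorem, namely that the algorithm halts with a basis $B$ that is co-linearly independent and whose size equals the rank $n$. Because $B$ consists of co-linearly independent rows, and because by Corollary~\ref{cor:finite-equiv-cls} the relation $\treesEquiv{H}$ has only finitely many (at most $n$) nonzero classes, the invariant $|B|\le n$ holds throughout the run. Every count will be derived from this invariant together with two \emph{progress} facts: each negative \seq\ strictly enlarges $B$, and each column inserted by \Consistent\ strictly refines the co-linearity partition of the rows.

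For the \seq\ bound I would argue as in the classical Angluin analysis. The initial call to \Complete\ inserts $\Sigma_0$ and closes the table, so for a nonzero target $|B|\ge 1$ before the first \seq. Whenever $\seq(\aut{A})$ returns a counterexample $(s,\treesrs{T}(s))$, the subsequent \Complete\ call on $\pref(s)$ must add at least one co-linearly independent row to $B$: once the table is again closed and consistent the re-extracted automaton reproduces every table entry, so in particular it maps $s$ to $\treesrs{T}(s)\neq\aut{A}(s)$; were $B$ unchanged, the extraction procedure would reproduce the old $\aut{A}$, a contradiction. Hence $|B|$ strictly increases between consecutive equivalence queries, and since $1\le|B|\le n$ throughout there are at most $n-1$ negative answers and therefore at most $n$ equivalence queries in total.

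For the \smq\ bound I would first bound the columns and then the rows whose cells are ever filled. A column is added only by \Consistent, and it is (an extension of) a context separating two rows with $t_1\treesColin{\alpha}{H}t_2$ whose one-step extensions fail to remain co-linear; by the co-linearity characterization of Proposition~\ref{prop:colinearity} inserting it splits at least one class of $\treesEquiv{H}$, so \Consistent\ fires at most $n-1$ times and $|C|\le n$. For the rows, the only trees ever placed in $T$ as candidate states are the basis rows (at most $n$) and the subtrees of counterexamples added by \Complete\ (at most $n$ counterexamples of size at most $m$, hence at most $m\cdot n$); call this the \emph{core}, of size at most $n+m\cdot n$. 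To test closedness and to extract transitions the algorithm must additionally evaluate every one-step extension $\sigma(b_{i_1},\dots,b_{i_k})$ whose children are drawn from the core: for each $\sigma\in\Sigma$ of rank $k\le p$ there are at most $(n+m\cdot n)^k\le(n+m\cdot n)^p$ of these, so the extensions contribute at most $|\Sigma|\cdot(n+m\cdot n)^p$ further rows, a term that already absorbs the leaves $\Sigma_0$ as the nullary ($k=0$) extensions. Since every \smq\ fills one cell $H(t,c)$ with $t$ ranging over these at most $n+m\cdot n+|\Sigma|\cdot(n+m\cdot n)^p$ trees and $c$ over the at most $n$ columns (the auxiliary evaluations inside \Consistent\ and \ExtractCMTA, on trees $\contextConcat{c}{t'}$ and $\sigma(b_{i_1},\dots,b_{i_k})$, are themselves cells of this same table), multiplying yields the claimed bound $n\cdot(n+m\cdot n+|\Sigma|\cdot(n+m\cdot n)^p)$.

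The main obstacle is the progress lemma underlying the \seq\ count: establishing that \emph{every} counterexample is guaranteed to increase $|B|$ rather than being absorbed by column additions alone. The decisive point is faithfulness of \ExtractCMTA\ on a closed and consistent table---that the extracted CMTA agrees with $H$ on all current rows---so that an unchanged basis would force an unchanged (and still erroneous) automaton. The companion fact, that a \Consistent-added column genuinely refines the co-linearity partition, likewise rests on Proposition~\ref{prop:colinearity}; once both are in place, the row and column tallies above, and the crucial observation that the children of the one-step extensions range only over the core of size $n+m\cdot n$ (which keeps the base of the exponent at $n+m\cdot n$), are purely combinatorial.
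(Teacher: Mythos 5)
Your counting skeleton matches the theorem's formula---columns bounded by $n$, core rows by $n+m\cdot n$, one-step extensions by $|\Sigma|\cdot(n+m\cdot n)^p$, and at most $n$ equivalence queries because $|B|$ grows strictly and never exceeds the rank---and this combinatorial shell is indeed what the paper leaves largely implicit. However, your proposal defers exactly the step that constitutes the paper's actual proof of Theorem~\ref{thm:bounds}: the faithfulness of \ExtractCMTA, i.e., that the CMTA extracted from a closed and consistent table satisfies $\aut{A}(\contextConcat{c}{t})=H[t][c]$ for every $t\in T$ and $c\in C$. You name this as ``the main obstacle'' and ``the decisive point'' but give no argument for it, and it is far from routine: the whole of \S\ref{sec:correctness} is devoted to establishing it through a chain of lemmas---Lemma~\ref{replacementLemma} (co-linearity of $\mu$-values is preserved under embedding in contexts), Lemma~\ref{consistencyExpansion} (the row of a composite tree factors through the coefficients of its children), Lemma~\ref{lem:equal-prods-coef} (the coefficient ratios defining the entries of $\mu_\sigma$ are independent of the chosen representatives, which is what makes the extracted transition matrices well defined), Lemmas~\ref{lemmaBase}, \ref{lemmaExtension} and \ref{lem:Correctness} ($\mu$ sends basis rows to standard basis vectors and any $t\in T$ with $H[t]=\alpha\cdot H[b_i]$ to $\alpha\cdot e_i$), culminating in the closing agreement lemma. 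Without that lemma your progress argument for the \seq\ bound (``an unchanged basis would reproduce the old, still-erroneous automaton'') has no foundation, and neither does the termination-and-minimality claim you lean on from the start; so as a proof of the theorem the proposal is incomplete in precisely the place where the real work lies.

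Two smaller points in the counting also need care. First, ``unchanged $B$ implies unchanged extracted automaton'' silently uses that the scalars witnessing $t\treesColin{\alpha}{H}b_i$ are unaffected by adding columns; this is true because basis rows are non-zero, so the scalar is unique and persists whenever co-linearity persists at all, but it must be stated, since the extraction depends on these scalars and not only on the set $B$. Second, your column bound invokes Proposition~\ref{prop:colinearity}, which concerns the Hankel matrix of the target and by itself says nothing about the table-restricted partition; what you need is that each column added by \Consistent\ (including zero-consistency violations) strictly increases the number of co-linearity classes among the rows of $T$, a number bounded via Corollary~\ref{cor:finite-equiv-cls} by $n$ plus one extra class for the all-zero rows. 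With that extra class the naive tally gives $|C|\le n+1$ rather than $n$, so recovering the exact factor $n$ in the theorem requires a slightly sharper accounting (e.g., of the initial class structure under the single column $\context$) than the one you sketch.
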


We first give a running example in \S\ref{sec:running-example}, and then prove this theorem in \S\ref{sec:correctness}.

\subsection{Running Example}\label{sec:running-example}

We will now demonstrate a running example of the learning algorithm. 
For the unknown target  consider the series which gives probability $(\frac{1}{2})^n$ to
strings of the form $a^nb^n$ for $n\geq 1$ and probability zero to all other strings. 
 This series can be generated by the following SUPCFG $\grmr{G}=\langle \Vars,\{a,b\},R,S\rangle$
 with $\Vars=\{S,S_{2}\}$, and the following derivation rules:
$$\begin{array}{l@{\ \ \longrightarrow\ \ }lll}
S& aS_{2}~[\frac{1}{2}]\ \mid\ ab~[\frac{1}{2}]\\
S_{2}& Sb~~~[1]
\end{array}$$

The algorithm initializes $T=\{a,b\}$ and $C=\{\context\}$, fills in the entries of $M$ using $\smq$s, first for the rows of $T$ and then for their one letter extensions $\Sigma(T)$ (marked in gray), resulting in the observation table in Tab.~\ref{table:obs1}~(a).

We can see that the table is not closed, since $?(a,b)\in\Sigma(T)$ but is not co-linearly spanned by $T$, so we add it to $T$. Also, the table is not consistent, since $a\treesColin{1}{H} b$, but $\mq(\contextConcat{\context}{?(a,b)})\neq\mq(\contextConcat{\context}{?(a,a)})$, so we add $?(a,\context)$ to $C$, and we obtain the observation table in Tab.~\ref{table:obs2}~(b).
From now on we  omit $0$ rows of $\Sigma(T)$ for brevity.

\begin{table}
{\small{
\begin{tabular}{c@{\qquad}c@{\qquad}c}
	\begin{tabular}{ l | c}
		\multicolumn{2}{c}{\vspace*{-9.4pt}}\\
		& \rotatebox[origin=c]{90}{$~~\Tree [.$~\context~~~$ ]$} \\ \hline
		$a$& $0$\\ \hline
		$b$& $0$\\ \hline
		\color{gray} $?(a,a)$ & $0$\\\hline
		\color{gray} $?(a,b)$ & $0.5$\\\hline
		\color{gray} $?(b,a)$ & $0$\\\hline
		\color{gray} $?(b,b)$ & $0$\\\hline
		\multicolumn{2}{c}{} \\
	\end{tabular}
&
	\begin{tabular}{l | l | c | c }
	\multicolumn{4}{c}{}\\
		& & \rotatebox[origin=c]{90}{$~\Tree [.$\context$ ]$} & \rotatebox[origin=c]{90}{$~?(a,\context)$} \\ \hline
		$t_1$ & $a$ & $0$ & $0$ \\ \hline
		$t_2$ & $b$ & $0$ & $0.5$\\ \hline
		$t_3$ & $?(a,b)$ & $0.5$ &$ 0$ \\\hline
		& \color{gray} $?(?(a,b),b)$ & $0$ & $0.25$ \\ \hline
		\multicolumn{4}{c}{}\\
	\end{tabular}
&
	\begin{tabular}{l | l | l | l | l }
		& & \Tree [.$\context$ ] & \rotatebox[origin=c]{90}{$~?(a,\context)$} & \rotatebox[origin=c]{90}{$~?(\context,b)$}\\ \hline
		$t_1$ & $a$ & $0$ & $0$ & $0.5$ \\ \hline
		$t_2$ & $b$ & $0$ & $0.5$ & $0$ \\ \hline
		$t_3$ & $?(a,b)$ & $0.5$ & $0$ & $0$ \\\hline
		$t_4\notin B$ & $?(a,a)$ & $0$ & $0$ & $0$ \\\hline
		& \color{gray} $?(?(a,b),b)$ & $0$ & $0.25$ & $0$ \\ \hline
		\multicolumn{5}{c}{}\\
	\end{tabular}
\\
\textbf{(a)} & \textbf{(b)} & \textbf{(c)}
\end{tabular}
}}
\caption{Observation tables (a) (b) and (c)}\label{table:obs1}\label{table:obs2}\label{table:obs3}
\end{table}

The table is now closed but it is not zero-consistent, since we have $H[a]=\overline{0}$, but there exists a context with children in $T$, specifically $?(\context,b)$,
with which when $a$ is extended the result is not zero, namely $H[?(a,b)]\neq 0$. So we add this context
and we obtain the observation table in Tab.~\ref{table:obs3}~(c).

Note that $t_{4}$ was added to $T$ since it was not  spanned by $T$, but it is not a member of $B$, since $H[t_{4}]=0$. 
We can extract the following CMTA $\aut{A}_1=(\Sigma,\mathbb{R},d,\mu,\lambda)$ of dimension $d=3$ since $|B|=|\{t_{1},t_{2},t_{3}\}|=3$.
Let $\V=\R^3$.  For the letters $\sigma\in \Sigma_0=\{a,b\}$ we have that $\mu_\sigma:\V^0\rightarrow \V$, namely $\mu_a$ and $\mu_b$ are $3\times 3^0$-matrices. Specifically, following Alg.~\AlgCompleteCMTA we get that $\mu_{a}=(1,0,0)$, $\mu_{b}=(0,1,0)$ as $a$ is the first element of $B$ and $b$ is the second.
For ${?}\in\Sigma^2$ we have that $\mu_{?}:\V^2\rightarrow \V$, thus $\mu_{?}$ is a $3\times 3^2$-matrix.
We compute the entries of $\mu_{?}$ following Alg.~\AlgCompleteCMTA. For this, we consider all pairs of indices $(j,k)\in\{1,2,3\}^2$. For each such entry we look for the row $t_{j,k}=?(t_j,t_k)$
and search for the base row $t_i$ and the scalar $\alpha$ for which  $t_{j,k} \treesColin{\alpha}{H} t_i$.
We get that $t_{1,2}\treesColin{1}{H} t_3$, $t_{3,2}\treesColin{0.5}{H} t_2$ and for all other $j,k$ we get $t_{j,k}\treesColin{1}{H} t_{4}$, so we set $c^{i}_{j,k}$ to be $0$ for every $i$. Thus, we obtain the following matrix for $\mu_{?}$
$$
\eta_{?}=\begin{bmatrix}0&0&0&0&0&0&0&0&0\\0&0&0&0&0&0&0&0.5&0\\
0&1&0&0&0&0&0&0&0\end{bmatrix}$$
 The vector $\lambda$ is also computed via Alg.~\AlgCompleteCMTA, and we get $\lambda=(0,0,0.5)$. 
 
 The algorithm now asks an equivalence query and receives the tree $p$ given in Fig.~\ref{fig:tree-p} (i) as a counterexample:
\begin{figure} 
\begin{tabular}{c@{\qquad\qquad}c}
 \Tree [.? $a$ [.? [.? $a$ $b$ ] [.? [.? $a$ $b$ ] $b$ ] ] ] 
 &
 \scalebox{.8}{
   \Tree [.$(0,0,0.25)$ $(1,0,0)$ [.$(0,0.25,0)$ [.$(0,0,1)$ $(1,0,0)$ $(0,1,0)$ ] [.$(0,0.5,0)$ [.$(0,0,1)$ $(1,0,0)$ $(0,1,0)$ ] $(0,1,0)$ ] ] ]
}
 \\
 \textbf{(i)} & \textbf{(ii)}
 \end{tabular}
 \caption{(i) The tree $p$  and (ii) the tree $p$ annotated with the value $\mu(t)$ for its of sub-trees $t$. }\label{fig:tree-p}\label{fig:mu-tree-p}
 \end{figure}
 
 Indeed, while $\smq(p)=0$ we have that $\aut{A}(p)=0.125$. To see why $\aut{A}(p)=0.125$, let's look at the values $\mu(t)$ for every sub-tree $t$ of $p$. For the leaves, we have $\mu(a)=(1,0,0)$ and $\mu(b)=(0,1,0)$.
 
 Now, to calculate $\mu(?(a,b))$, we need to calculate $\mu_{?}(\mu(a),\mu(b))$. To do that, we first compose them as explained in the \emph{multilinear functions} paragraph of Sec.~\ref{sec:mta}, see also Fig.~\ref{fig:MTA}. The vector $P_{\mu(a),\mu(b)}$ is $(0,1,0,0,0,0,0,0,0)$. When multiplying this vector by the matrix $\eta_{?}$ we obtain $(0,0,1)$. So $\mu(?(a,b))=(0,0,1)$. Similarly, to obtain  $\mu(?(?(a,b),a))$ we first compose the value $(0,0,1)$
 for $?(a,b)$ with the value $(0,1,0)$ for $a$ and obtain $P_{\mu?(a,b),\mu(a)}=(0,0,0,0,0,0,0,1,0)$. 
 Then we multiply $\eta_{?}$ by $P_{\mu?(a,b),\mu(a)}$ and obtain $(0,0.5,0)$. In other words, 
 $$\mu(?(?(a,b),a))=\mu_{?}\left(\left[\begin{smallmatrix}0\\ 0\\ 1\end{smallmatrix}\right],\left[\begin{smallmatrix}0\\ 1\\ 0\end{smallmatrix}\right]\right)=\left[\begin{smallmatrix}0\\0.5\\0\end{smallmatrix}\right].$$ 
 The tree in Fig.~\ref{fig:mu-tree-p} (ii) depicts the entire calculation by marking the values obtained for each sub-tree. We can see that  $\mu(p)=(0,0,0.25)$, thus we get that $\aut{A}=\mu(p)\cdot\lambda=0.125$.
 
 We add all prefixes of this counterexample to $T$ and we obtain the observation table in Tab.~\ref{table:obs4}~(d).
 \begin{table}
 \hspace*{-15pt}
 {\small{
 \begin{tabular}{c@{\qquad}c}
 	\begin{tabular}{l | l | c | c | c }
 		 & & \Tree [.$\context$ ] & \rotatebox[origin=c]{90}{$?(a,\context)$} & \rotatebox[origin=c]{90}{$?(\context, b)$} \\ \hline
 		$t_1$ & $a$  & $0$ & $0$ & $\frac{1}{2}$ \\ \hline
 		$t_2$ & $b$ & $0$ & $\frac{1}{2}$ &$0$ \\ \hline
 		$t_3$ & $?(a,b)$ & $\frac{1}{2}$ & $0$ & $0$ \\\hline
 		$t_4$ & $?(a,a)$ & $0$ & $0$ & $0$\\\hline
 		$t_5$ & $?(?(a,a),a)$ & $0$ & $0$ & $0$\\\hline
 		$t_6$ & $?(?(a,b),b)$ & $0$ & $\frac{1}{4}$ & $0$ \\\hline
 		$t_7$ & $?(?(a,b),?(?(a,b),b))$ & $0$ & $0$ & $0$\\\hline
 		$t_8$ & $?(a,?(?(a,b),?(?(a,b),b)))$ & $0$ & $0$ & $0$ \\ \hline
 		\multicolumn{5}{c}{}\\
 	\end{tabular}
 &
 	\begin{tabular}{l | l | c | c | c | c}
 		& & \Tree [.$\context$ ] & \rotatebox[origin=c]{90}{$?(a,\context)$} & \rotatebox[origin=c]{90}{$?(\context, b)$} & \rotatebox[origin=c]{90}{$?(a,?(?(a,b),\context))$}\\ \hline
 		$t_1$ & $a$ & $0$ & $0$ & $\frac{1}{2}$ & $0$\\ \hline
 		$t_2$ & $b$ & $0$ & $\frac{1}{2}$ & $0$ & $\frac{1}{4}$ \\ \hline
 		$t_3$ & $?(a,b)$ & $\frac{1}{2}$& $0$ & $0$ & $0$ \\\hline
 		$t_4$ & $?(a,a)$ & $0$ & $0$ & $0$ & $0$ \\\hline
 		  & $?(?(a,a),a)$ & $0$ & $0$ & $0$ & $0$ \\\hline
 		$t_5$ & $?(?(a,b),b)$ & $0$ & $\frac{1}{4}$ & $0$ & $0$ \\\hline
 		  & $?(?(a,b),?(?(a,b),b))$ & $0$ & $0$ & $0$ & $0$\\\hline
 		  & $?(a,?(?(a,b),?(?(a,b),b)))$ & $0$ & $0$ & $0$ & $0$ \\\hline
 		 $t_6$  & \color{gray}$?(a,?(?(a,b),b))$ & $\frac{1}{4}$  & $0$ & $0$ & $0$ \\ \hline
 		 \multicolumn{5}{c}{}\\
 	\end{tabular}\\
 	\textbf{(d)} & \textbf{(e)}
\end{tabular}
}}
\caption{Observation tables (d) and (e)}\label{table:obs4}\label{table:obs5}
\end{table}
 This table is not consistent since while $t_6 \treesColin{0.5}{H} t_2$ this co-linearity is not preserved when extended with $t_{3}=?(a,b)$ to the left, as evident from the context $?(a,\context)$.
We thus add the context $\contextConcat{?(a,\context)}{?(?(a,b),\context)}=?(a,?(?(a,b),\context))$ to obtain the final observation table given in in Tab.~\ref{table:obs5}~(e).

 The table is now closed and consistent, and we extract the following CMTA from it: 
 $\aut{A}_3=(\Sigma,\mathbb{R},4,\mu,\lambda)$ with $\mu_{a}=(1,0,0,0)$, $\mu_{b}=(0,1,0,0)$. 
 Now $\mu_{?}$ is a $4\times 4^2$ matrix. Its non-zero entries are $c^3_{1,2}=1$, $c^4_{3,2}=1$ and $c^3_{1,4}=\frac{1}{2}$
 since $t_{1,2}\treesColin{1}{H} t_3$, $t_{3,2} \treesColin{1}{H} t_5$, $t_{1,5} \treesColin{1}{H} t_6 \treesColin{0.5}{H} t_3$.
 And  for every other combination of unit-basis vectors we have $t_{i,j}\treesColin{1}{H} t_4$.   
 The final output vector is $\lambda=(0,0,0.5,0)$.
 
 The equivalence query on this CMTA returns true, hence the algorithm now terminates, and we can convert this CMTA into a WCFG. 
 Applying the transformation provided in Fig.~\ref{fig:eqs-pmta-to-pcfg} we obtain the following WCFG:
 \begin{align*}
 S&\longrightarrow N_{3}~[0.5]   \\
 N_{1}&\longrightarrow a~[1.0 ]  \\
 N_{2}&\longrightarrow b~[1.0]   \\
 N_{3}&\longrightarrow N_{1} N_{2}~[1.0]\ \  |\ \ N_{1} N_{4}~[0.5]   \\
 N_{4}&\longrightarrow N_{3} N_{2}~[1.0]  
 \end{align*}
 
 Now, following~\cite{abney1999relating,smith2007weighted}
 we can calculate the partition functions for each non-terminal. Let $f_{N}$ be the sum of the weights of all trees whose root is $N$, we obtain:
 $$\begin{array}{l@{\quad}l@{\quad}l@{\quad}l@{\quad}l}
 f_{S}=1, & 
 f_{N_{1}}=1, &
 f_{N_{2}}=1, &
 f_{N_{3}}=2, &
 f_{N_{4}}=2. 
 \end{array}$$
 
 Hence we obtain the PCFG 
 \begin{align*}
 S&\longrightarrow N_{3}~[1.0]\\
 N_{1}&\longrightarrow a~[1.0]\\
 N_{2}&\longrightarrow b~[1.0]\\
 N_{3}&\longrightarrow N_{1} N_{2}~[0.5]\ |\ N_{1} N_{4}~[0.5]\\
 N_{4}&\longrightarrow N_{3} N_{2}~[1.0]
 \end{align*}
 which is a correct grammar for the unknown probabilistic series.

The careful reader might notice that the produced grammar is not the most succinct one for this language.  
Indeed, the conversation of a CMTA to a PCFG could be optimized; currently it adds additional $O(\Sigma)$ non-terminals and derivation rules (for every terminal $a$ a non-terminal $N_a$ and a respective production rule $N_a \longrightarrow a$ are introduced).
    Algorithm 1 itself does return the minimal CMTA with respect to the given skeletal trees.

\subsection{Correctness Proof}\label{sec:correctness}
To prove the main theorem we require a series of lemmas, which we state and prove here.
We start with some additional notations.
Let $v$ be a row vector in a given matrix. Let $C$ be a set of columns. We denote by $v[C]$ the restriction of $v$ to the columns of $C$. For a set of row-vectors $V$ in the given matrix, we denote by $V[C]$ the restriction of all vectors in $V$ to the columns of $C$.

\begin{lemma}\label{lemmaContextColin}
	Let $B$ be a set of row vectors in a matrix $H$, and let $C$ be a set of columns. If a row $v[C]$ is co-linearly independent from $B[C]$ then $v$ is co-linearly independent from $B$.
\end{lemma}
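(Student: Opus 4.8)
The plan is to prove the contrapositive: I will show that if $v$ is co-linearly \emph{dependent} on $B$, then its restriction $v[C]$ is co-linearly dependent on $B[C]$. This is the natural direction of the implication, since discarding columns can only collapse co-linearity classes together, never pull them apart, so independence of a restricted row is a stronger condition than independence of the full row. Unwinding the definition of the basis, recall that co-linear dependence on a set $B$ means precisely that $v$ is co-linear to some \emph{single} member of $B$ (co-linearity being a scalar-multiple relation to one vector, there is no notion of dependence on a combination of several members).

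Concretely, suppose $v$ is co-linearly dependent on $B$. Then there is some $b\in B$ and a scalar $\alpha\neq 0$ with $v=\alpha\cdot b$; in the paper's notation, reading $v$ and $b$ as the rows indexed by their trees, this is $v\treesColin{\alpha}{H} b$. Restricting both sides of this vector identity to the columns in $C$ yields $v[C]=\alpha\cdot b[C]$, \emph{with the same scalar} $\alpha\neq 0$. Hence $v[C]$ is co-linear to $b[C]\in B[C]$, so $v[C]$ is co-linearly dependent on $B[C]$. This is exactly the contrapositive of the stated lemma.

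I do not expect a genuine obstacle here, as the argument is immediate once the definitions are made explicit; the only point deserving a moment of care is the degenerate case in which the restricted vectors vanish. If $b[C]=\overline{0}$ then $v[C]=\alpha\cdot b[C]=\overline{0}$ as well, and by the paper's convention two all-zero rows are co-linear (for every positive scalar), so $v[C]$ is still co-linear to $b[C]$ and the conclusion survives. Thus restriction can never turn a co-linear pair of rows into a co-linearly independent one, and no special treatment of the zero case is required to make the implication go through.
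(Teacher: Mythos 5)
Your proposal is correct and is essentially the paper's own argument: the paper proves the statement by contradiction (assume $v=\alpha b$ for some $b\in B$, restrict coordinate-wise to get $v[C]=\alpha b[C]$, contradicting the assumed independence of $v[C]$), which is the same one-line restriction argument you phrase as a contrapositive. Your extra remark about the all-zero restricted rows is a harmless addition the paper omits, since the same scalar $\alpha\neq 0$ already witnesses co-linearity of the restricted vectors in every case.
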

\begin{proof}
	Assume towards contradiction that there is a vector $b\in B$ and a scalar $\alpha\in\reals$ s.t. $v=\alpha b$. Then for every column $c$ we have $v[c]=\alpha b[c]$. In particular that holds for every $c\in C$. Thus, $v[C]=\alpha b[C]$ and so $v[C]$ is not co-linearly independent from $B[C]$, contradicting our assumption.
\end{proof}

\begin{lemma}\label{replacementLemma}
	Let $\aut{A}=(\Sigma,\mathbb{R},d,\mu,\lambda)$ 
	be a CMTA. Let $t_{1},t_{2}$ s.t. $\mu(t_{1})=\alpha\cdot\mu(t_{2})$. Then for every context $c$
	\begin{equation*}
	\mu(\contextConcat{c}{t_{1}})=\alpha\cdot\mu(\contextConcat{c}{t_{2}})
	\end{equation*}
\end{lemma}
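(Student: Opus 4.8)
The plan is to prove the statement by structural induction on the context $c$, the crucial ingredient being that each transition map $\mu_\sigma$ is multilinear and hence linear in each of its arguments separately. First I would set up the induction on the ``size'' (number of nodes) of $c$.

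For the base case, $c=\context$, so $\contextConcat{c}{t_i}=t_i$ for $i\in\{1,2\}$, and the desired equality $\mu(\contextConcat{c}{t_1})=\alpha\cdot\mu(\contextConcat{c}{t_2})$ is literally the hypothesis $\mu(t_1)=\alpha\cdot\mu(t_2)$. For the inductive step I would invoke the defining property of contexts recalled earlier in the excerpt: at most one child of any node in a context is itself a context, the others being pure trees. Hence a context $c$ whose root is not $\context$ decomposes as $c=\sigma(s_1,\ldots,s_{j-1},c',s_{j+1},\ldots,s_k)$ where $\sigma\in\Sigma_k$, the $s_\ell\in\trees(\Sigma)$ are pure trees, and $c'\in\trees_\context(\Sigma)$ is a strictly smaller context occupying the $j$-th child. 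Then $\contextConcat{c}{t_i}=\sigma(s_1,\ldots,\contextConcat{c'}{t_i},\ldots,s_k)$, so by the inductive definition of $\mu$ on trees we have $\mu(\contextConcat{c}{t_i})=\mu_\sigma(\mu(s_1),\ldots,\mu(\contextConcat{c'}{t_i}),\ldots,\mu(s_k))$.

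The induction hypothesis applied to the smaller context $c'$ gives $\mu(\contextConcat{c'}{t_1})=\alpha\cdot\mu(\contextConcat{c'}{t_2})$. I would then substitute this into the $j$-th argument above and pull the scalar $\alpha$ out of $\mu_\sigma$ using linearity in the $j$-th coordinate, obtaining $\mu(\contextConcat{c}{t_1})=\mu_\sigma(\ldots,\alpha\cdot\mu(\contextConcat{c'}{t_2}),\ldots)=\alpha\cdot\mu_\sigma(\ldots,\mu(\contextConcat{c'}{t_2}),\ldots)=\alpha\cdot\mu(\contextConcat{c}{t_2})$, which closes the induction.

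The computation itself is very short; the only point requiring genuine care is formalizing the recursive decomposition of a context (exactly one child is a context, the remaining children are pure trees) so that the induction is well founded and the single ``active'' coordinate of $\mu_\sigma$ is correctly identified. I would also remark that the argument never uses the co-linearity restriction on $\aut{A}$ (``at most one non-zero entry per column''), only the multilinearity of the transition functions, so the statement in fact holds for an arbitrary MTA; the CMTA hypothesis merely reflects the setting in which the lemma will be applied.
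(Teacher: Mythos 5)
Your proof is correct, and it takes the same core route as the paper --- induction on the context combined with multilinearity of the transition maps --- but with the dual decomposition. You split $c$ at its root, $c=\sigma(s_1,\ldots,c',\ldots,s_k)$, apply the induction hypothesis to the inner context $c'$ with the original trees $t_1,t_2$, and only then pull the scalar $\alpha$ through $\mu_\sigma$ by linearity in the active coordinate. The paper instead splits $c$ at the hole, writing $c=\contextConcat{c_1}{c_2}$ with $c_2=\sigma(s_1,\ldots,s_{i-1},\context,s_{i+1},\ldots,s_p)$ the one-level context directly above $\context$; it first uses multilinearity to establish $\mu(\contextConcat{c_2}{t_1})=\alpha\cdot\mu(\contextConcat{c_2}{t_2})$, and then applies the induction hypothesis (on the depth of $\context$ rather than on the size of $c$) to the outer context $c_1$ with the composed trees $t'_i=\contextConcat{c_2}{t_i}$. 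The two arguments are mirror images and equally rigorous; yours has the mild advantage that the induction hypothesis is always invoked with the same pair of trees, whereas the paper's inside-out peeling forces it to re-instantiate the hypothesis with new trees at each level. Your closing remark is also accurate and matches the paper: its proof never uses the co-linearity restriction either, only multilinearity, so the lemma indeed holds for arbitrary MTAs, and the CMTA hypothesis is just the setting in which it is later applied.
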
	
	\begin{proof}
	The proof is by induction on the depth of $\context$ in $c$.
	For the base case, 	the depth of $\context$ in $c$ is $1$. Hence, $c=\context$ and indeed we have
			\begin{equation*}
			\mu(\contextConcat{c}{t_{1}})=\mu(t_{1})=\alpha\cdot\mu(t_{2})=\alpha\cdot\mu(\contextConcat{c}{t_{2}})
			\end{equation*}
			as required.

		For the induction step, 
		 assume the claim holds for all contexts where $\context$ is in depth at most $h$. Let $c$ be a context s.t. $\context$ is in depth $h+1$. Hence, there exists contexts $c_1$ and $c_2$ s.t.  $c=\contextConcat{c_{1}}{c_{2}}$ where $c_{2}=\sigma(s_{1},s_{2},...,s_{i-1},\context,s_{i+1},...,s_{p})$ for some $s_i$'s and the depth of $\context$ in $c_{1}$ is $h$. Let $t'_{1}=\contextConcat{c_{2}}{t_{1}}$ and let $t'_{2}=\contextConcat{c_{2}}{t_{2}}$. We have
		\begin{align*}
		\mu(t'_{1})&=
		\mu(\contextConcat{c_{2}}{t_{1}})\\
		&=\mu_{\sigma}(\mu(s_{1}),\mu(s_{2}),...,\mu(s_{i-1}),\mu(t_{1}),\mu(s_{i+1}),...,\mu(s_{p}))\\
		&=\mu_{\sigma}(\mu(s_{1}),\mu(s_{2}),...,\mu(s_{i-1}),\alpha\cdot\mu(t_{2}),\mu(s_{i+1}),...,\mu(s_{p}))
		\end{align*}
		Similarly for $t_{2}$ we obtain
		\begin{equation*}
		\mu(t'_{2})=\mu_{\sigma}(\mu(s_{1}),\mu(s_{2}),...,\mu(s_{i-1}),\mu(t_{2}),\mu(s_{i+1}),...,\mu(s_{p})).\\
		\end{equation*}
		By properties of multi-linear functions we obtain:
		\begin{equation*}
		\begin{array}{rl}
		\mu(\sigma(s_{1},s_{2},...,s_{i-1},t_{1},s_{i+1},...,s_{p}))&=\\
		\alpha\cdot\mu(\sigma(s_{1},s_{2},...,s_{i-1},t_{2},s_{i+1},...,s_{p}))
		\end{array}
		\end{equation*}

		Thus, $\mu(t'_{1})=\alpha\cdot\mu(t'_{2})$, and by the induction hypothesis on $c_{1}$ we have:
		\begin{equation*}
		\mu(\contextConcat{c_{1}}{t'_{1}})=\alpha\cdot\mu(\contextConcat{c_{1}}{t'_{2}})
		\end{equation*}
		Hence
		\begin{equation*}
		\mu(\contextConcat{c}{t_{1}})=\mu(\contextConcat{c_{1}}{t'_{1}})=\alpha\cdot\mu(\contextConcat{c_{1}}{t'_{2}})=\alpha\cdot\mu(\contextConcat{c}{t_{2}})
		\end{equation*}
		as required.	
\end{proof}

Recall that a subset $B$ of $T$ is called a \emph{basis} if for every ${t\in T}$, if ${H[t]\neq 0}$ then there is a unique $b\in B$ such that $t\treesColin{\alpha}{H} b$.
Let $(T,C,H,B)$ be an observation table. Then $B=\{b_1,b_2\ldots,b_d\}$ is a {basis} for $T$, and if $b_i$ is the unique element of $B$ for which $t\treesColin{\alpha}{H} b_i$ for some $\alpha$, 
we say that $\classrepr{t}{B}=b_{i}$, $\classcoeff{t}{B}=\alpha$, and $\classind{t}{B}=i$. 
The following lemma states that the value assigned to a tree $?(t_{1},t_{2},...,t_{p})$ all of whose children are in $T$, 
can be computed by multiplying the respective coefficients $\classcoeff{t_i}{B}$ witnessing the co-linearity of $t_i$ to its respective
base vector $\classrepr{t_i}{B}$.

\begin{lemma}\label{consistencyExpansion}
	Let $(T,C,H,B)$  be a closed consistent observation table. Let $t_{1},t_{2},...,t_{p}\in T$, and let $t=?(t_{1},t_{2},...,t_{p})$. Then
	\begin{equation*}
	H[?(t_{1},t_{2},...,t_{p})]=\prod_{i=1}^{p}\classcoeff{t_{i}}{B}\cdot H[?(\classrepr{t_{1}}{B},\classrepr{t_{2}}{B},...,\classrepr{t_{p}}{B})]
	\end{equation*}
\end{lemma}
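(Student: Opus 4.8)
The plan is to rewrite $?(t_1,\dots,t_p)$ as $?(\classrepr{t_1}{B},\dots,\classrepr{t_p}{B})$ by replacing its children one at a time, using co-linear consistency of the table to collect the scalar factor contributed by each replacement. For this to be meaningful I assume each child has a nonzero row, so that its unique basis representative $\classrepr{t_i}{B}$ and scalar $\classcoeff{t_i}{B}$ are defined; then $H[t_i]=\classcoeff{t_i}{B}\cdot H[\classrepr{t_i}{B}]$ with $\classcoeff{t_i}{B}\neq 0$ (a nonzero row cannot be a zero multiple of a nonzero basis row). The claim to establish is that swapping a single child $t_i$ for $\classrepr{t_i}{B}$ inside the root multiplies the entire row by exactly $\classcoeff{t_i}{B}$.

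Concretely, I would introduce the intermediate trees $t^{(0)}=?(t_1,\dots,t_p)$ and, for $1\le j\le p$,
$$t^{(j)}=?(\classrepr{t_1}{B},\dots,\classrepr{t_j}{B},t_{j+1},\dots,t_p),$$
so that $t^{(p)}=?(\classrepr{t_1}{B},\dots,\classrepr{t_p}{B})$, and prove the single-step equality $H[t^{(j-1)}]=\classcoeff{t_j}{B}\cdot H[t^{(j)}]$ for each $j$. To do so, consider the one-level context
$$c_j=?(\classrepr{t_1}{B},\dots,\classrepr{t_{j-1}}{B},\context,t_{j+1},\dots,t_p).$$
Every child of $c_j$ other than $\context$ lies in $T$ (the representatives because $B\subseteq T$, and the remaining $t_i$ by hypothesis), so $c_j\in\Sigma(T,\context)$; moreover $\contextConcat{c_j}{t_j}=t^{(j-1)}$ and $\contextConcat{c_j}{\classrepr{t_j}{B}}=t^{(j)}$. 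Since $t_j\treesColin{\classcoeff{t_j}{B}}{H}\classrepr{t_j}{B}$ with both $t_j,\classrepr{t_j}{B}\in T$, co-linear consistency applied to the context $c_j$ yields $\contextConcat{c_j}{t_j}\treesColin{\classcoeff{t_j}{B}}{H}\contextConcat{c_j}{\classrepr{t_j}{B}}$, which is precisely the single-step equality.

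Telescoping these $p$ equalities gives $H[t^{(0)}]=\big(\prod_{j=1}^{p}\classcoeff{t_j}{B}\big)\cdot H[t^{(p)}]$, which is the statement of the lemma. The degenerate case in which some child $t_i$ has a zero row must be noted separately: there $\classrepr{t_i}{B}$ is undefined, but $?(t_1,\dots,t_p)\in\Sigma(T,t_i)$, so zero-consistency forces $H[?(t_1,\dots,t_p)]=\overline{0}$, matching the vanishing of the product.

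I expect the only real obstacle to be bookkeeping rather than mathematical depth: I must verify at each step that the context $c_j$ genuinely belongs to $\Sigma(T,\context)$ (which is exactly why the representatives need to be drawn from $B\subseteq T$), and I must use that co-linear consistency preserves the \emph{specific} scalar $\classcoeff{t_j}{B}$ and not merely co-linearity up to an arbitrary factor — this is guaranteed because the consistency condition is stated with the fixed scalar $\alpha$. Once these two points are in place, the proof is a clean induction on the number of children already replaced.
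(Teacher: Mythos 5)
Your proof is correct and follows essentially the same route as the paper's: the paper argues by induction on the number of children that differ from their basis representatives, replacing one child at a time and invoking co-linear consistency for each single-step equality, which is exactly your telescoping argument. Your extra bookkeeping — explicitly exhibiting the context $c_j$ and checking $c_j\in\Sigma(T,\context)$, noting that consistency preserves the specific scalar $\classcoeff{t_j}{B}$, and treating the zero-row degenerate case via zero-consistency — only makes explicit what the paper's proof leaves implicit.
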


	\begin{proof}
		Let $k$ be the number of elements in $t_{1},t_{2},...,t_{p}$  such that $t_{i}\neq\classrepr{t_{i}}{B}$. We proceed by induction on $k$. 
		For the base case, we have $k=0$, so for every $t_{i}$ we have $t_{i}=\classrepr{t_{i}}{B}$ and $\classcoeff{t_i}{B}=1$. Hence, obviously we have
		\begin{equation*}
		H[?(t_{1},t_{2},...,t_{p})]=\prod_{i=1}^{p}\classcoeff{t_{i}}{B}\cdot H[?(\classrepr{t_{1}}{B},\classrepr{t_{2}}{B},...,\classrepr{t_{p}}{B})]
		\end{equation*}
		
		Assume now the claim holds for some ${k\geq 0}$. Since ${k+1>0}$ there is at least one $i$  such that  $t_{i}\neq\classrepr{t_{i}}{B}$. Let $t'=?(t_{1},t_{2},...,t_{i-1},\classrepr{t_{i}}{B},t_{i+1},...,t_{p})$. Since the table is consistent, we have that $H[t]=\classcoeff{t_{i}}{B}\cdot H[t']$.
		Now,  $t'$ has $k$ children  such that  $t_{i}\neq\classrepr{t_{i}}{B}$, so from the induction hypothesis we have
		\begin{equation*}
		H[t']=\prod_{\begin{array}{c}{j=1}\\{j\neq i}\end{array}}^{p}\classcoeff{t_{j}}{B}\cdot H[?(\classrepr{t_{1}}{B},\classrepr{t_{2}}{B},...,\classrepr{t_{p}}{B})
		\end{equation*}
		Hence we have
		\begin{equation*}
		H[t]=\classcoeff{t_{i}}{B}\cdot H[t']=\prod_{j=1}^{p}\classcoeff{t_{j}}{B}\cdot H[?(\classrepr{t_{1}}{B},\classrepr{t_{2}}{B},...,\classrepr{t_{p}}{B})
		\end{equation*}
		as required.
	\end{proof}

The following lemma states that if $t=?(t_{1},t_{2},...,t_{p})$  is co-linear to $s=?(s_{1},s_{2},...,s_{p})$ 
and $t_i$ is co-linear to $s_i$, for every $1\leq i\leq p$ and $H[t]\neq 0$ then the ratio between the tree coefficient and the product of its children coefficients is the same.
\begin{lemma}\label{lem:equal-prods-coef}
		Let $t=?(t_{1},t_{2},...,t_{p})$ and $s=?(s_{1},s_{2},...,s_{p})$ s.t. $t_{i}\treesEquiv{H} s_{i}$ for $1\leq i\leq p$. 
		Then $$\frac{\classcoeff{t}{B}}{\prod_{i=1}^{p}\classcoeff{t_{i}}{B}}=\frac{\classcoeff{s}{B}}{\prod_{i=1}^{p}\classcoeff{s_{i}}{B}}$$
\end{lemma}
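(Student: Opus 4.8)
The plan is to reduce the claim to Lemma~\ref{consistencyExpansion}, which already expresses the row of a composite tree in terms of its children's coefficients and the row of the ``canonical'' composite built from the basis representatives; the ratio in question should then turn out to be a single common quantity, independent of whether we start from $t$ or from $s$.

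First I would exploit the hypothesis $t_i\treesEquiv{H} s_i$. Since this means $H[t_i]$ and $H[s_i]$ are co-linear, and co-linearity is transitive, $H[s_i]$ is co-linear to the unique basis vector witnessing $\classrepr{t_i}{B}$; by uniqueness of the representative we get $\classrepr{t_i}{B}=\classrepr{s_i}{B}$. Writing $b_i$ for this common representative and $r=?(b_1,\ldots,b_p)$, I apply Lemma~\ref{consistencyExpansion} to both $t$ and $s$ (using that $t_i,s_i\in T$ and the table is closed and consistent), obtaining
$$H[t]=\Big(\prod_{i=1}^{p}\classcoeff{t_i}{B}\Big)\,H[r],\qquad H[s]=\Big(\prod_{i=1}^{p}\classcoeff{s_i}{B}\Big)\,H[r].$$
Because each co-linearity coefficient is a nonzero scalar, both products are nonzero, so $H[t]$, $H[s]$ and $H[r]$ vanish simultaneously; as the statement concerns the case where the ratios are defined, I take $H[r]\neq 0$. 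Then $r$ is co-linear to a unique basis vector with $H[r]=\classcoeff{r}{B}\,H[\classrepr{r}{B}]$, and both $H[t]$ and $H[s]$, being nonzero scalar multiples of $H[r]$, are co-linear to that same basis vector; hence $\classrepr{t}{B}=\classrepr{s}{B}=\classrepr{r}{B}$. Substituting $H[r]=\classcoeff{r}{B}\,H[\classrepr{r}{B}]$ into the two displayed identities and comparing with $H[t]=\classcoeff{t}{B}\,H[\classrepr{t}{B}]$ and $H[s]=\classcoeff{s}{B}\,H[\classrepr{s}{B}]$, the common nonzero factor $H[\classrepr{r}{B}]$ cancels and yields $\classcoeff{t}{B}=\big(\prod_i\classcoeff{t_i}{B}\big)\classcoeff{r}{B}$ and $\classcoeff{s}{B}=\big(\prod_i\classcoeff{s_i}{B}\big)\classcoeff{r}{B}$. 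Dividing shows both ratios equal $\classcoeff{r}{B}$, which proves the equality.

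I expect the only delicate point to be the bookkeeping around zero rows: I must justify that $H[t]\neq 0$ forces $H[r]\neq 0$ (hence that the coefficients $\classcoeff{t}{B}$ and $\classcoeff{s}{B}$ are defined and the representatives coincide), which follows precisely because the children coefficients $\classcoeff{t_i}{B},\classcoeff{s_i}{B}$ are nonzero scalars. Everything else is a direct algebraic consequence of Lemma~\ref{consistencyExpansion} and the uniqueness of the basis representative, so no further machinery should be needed.
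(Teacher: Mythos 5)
Your proof is correct and follows essentially the same route as the paper's: both form the tree of basis representatives (your $r$, the paper's $t'$), apply Lemma~\ref{consistencyExpansion} to $t$ and to $s$, and then cancel the common nonzero row of the shared basis representative, handling the zero-row caveat identically via the nonvanishing of the children's coefficients. The only cosmetic difference is that you explicitly identify the common value of the two ratios as $\classcoeff{r}{B}$, whereas the paper stops after cancelling $H[b]$ from both sides.
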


\begin{proof}
 Let $t'=?(\classrepr{t_{1}}{B},\classrepr{t_{2}}{B} ...,\classrepr{t_{p}}{B})$.
 Note that we also have $t'=?(\classrepr{s_{1}}{B},\classrepr{s_{2}}{B},...,\classrepr{s_{p}}{B})$.
  Then from Lemma \ref{consistencyExpansion} we have that $H[t]=\prod_{i=1}^{p}\classcoeff{t_{i}}{B}\cdot H[t']$. 
  Similarly we have that $H[s]=\prod_{i=1}^{p}\classcoeff{s_{i}}{B}\cdot H[t']$. 
  It follows from Lemma~\ref{consistencyExpansion} that $t\treesEquiv{H} s$, since for each $i$ $t_i\treesEquiv{H} s_i$. 
  Let $b=\classrepr{t}{B}=\classrepr{s}{B}$. We have $H[t]=\classcoeff{t}{B}\cdot H[b]$, and $H[s]=\classcoeff{s}{B}\cdot H[b]$.
Thus we have
\begin{equation*}
\classcoeff{t}{B}\cdot H[b]=\prod_{i=1}^{p}\classcoeff{t_{i}}{B}\cdot H[t']
\end{equation*}
and
\begin{equation*}
\classcoeff{s}{B}\cdot H[b]=\prod_{i=1}^{p}\classcoeff{s_{i}}{B}\cdot H[t']
\end{equation*}
Hence we have
\begin{equation*}
\frac{\classcoeff{t}{B}}{\prod_{i=1}^{p}\classcoeff{t_{i}}{B}}\cdot H[b]=H[t']=\frac{\classcoeff{s}{B}}{\prod_{i=1}^{p}\classcoeff{s_{i}}{B}}\cdot H[b]
\end{equation*}
Since $H[t]\neq 0$, and $t\treesEquiv{H} b\treesEquiv{H} t'$ we obtain that $H[b]\neq 0$ and $H[t']\neq 0$. 
Therefore
\begin{equation*}
\frac{\classcoeff{t}{B}}{\prod_{i=1}^{p}\classcoeff{t_{i}}{B}}=\frac{\classcoeff{s}{B}}{\prod_{i=1}^{p}\classcoeff{s_{i}}{B}} \qedhere
\end{equation*}
\end{proof}

The next lemma relates the value $\mu(t)$ to  $t$'s coefficeint, $\classcoeff{t}{B}$, and the vector for respective row in the basis, $\classind{t}{B}$.

\begin{lemma}\label{lem:mu-t-rel}
	Let $t\in T$. If $H[t]\neq 0$ then $\mu(t)=\classcoeff{t}{B}\cdot [t]$. If $H[t]=0$ then $\mu(t)=0$.
\end{lemma}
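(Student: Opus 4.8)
The plan is to prove the statement by induction on the height of $t$, where throughout $[t]$ denotes the $\classind{t}{B}$-th standard unit vector $e_{\classind{t}{B}}$ of $\V=\R^d$ (so the first clause reads $\mu(t)=\classcoeff{t}{B}\cdot e_{\classind{t}{B}}$ when $H[t]\neq \overline{0}$). I will rely on the fact that the row-set $T$ of a closed and consistent observation table is subtree-closed, so that whenever $t=?(t_1,\ldots,t_p)\in T$ each child $t_i$ lies in $T$ and the induction hypothesis applies to it. I will also use the precise form of $\ExtractCMTA$: for $\sigma\in\Sigma_0$ the vector $\mu_\sigma$ equals $\classcoeff{\sigma}{B}\cdot e_{\classind{\sigma}{B}}$ if $H[\sigma]\neq \overline{0}$ and $\overline{0}$ otherwise; and the column of $\mu_?$ indexed by $(i_1,\ldots,i_p)$ equals $\classcoeff{r}{B}\cdot e_{\classind{r}{B}}$ for $r=?(b_{i_1},\ldots,b_{i_p})$ when $H[r]\neq \overline{0}$, and $\overline{0}$ otherwise.

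The base case is immediate: if $t=\sigma\in\Sigma_0$ then $\mu(t)=\mu_\sigma$ is exactly what the extraction sets, matching both clauses. For the inductive step write $t=?(t_1,\ldots,t_p)$ with $t_i\in T$, and first treat the case where $H[t_i]\neq \overline{0}$ for every $i$. By the induction hypothesis $\mu(t_i)=\classcoeff{t_i}{B}\cdot e_{\classind{t_i}{B}}$, so multilinearity of $\mu_?$ gives
\begin{equation*}
\mu(t)=\mu_?(\mu(t_1),\ldots,\mu(t_p))=\Big(\prod_{i=1}^{p}\classcoeff{t_i}{B}\Big)\,\mu_?\big(e_{\classind{t_1}{B}},\ldots,e_{\classind{t_p}{B}}\big).
\end{equation*}
Setting $r=?(\classrepr{t_1}{B},\ldots,\classrepr{t_p}{B})$, the extraction makes the last factor equal to $\classcoeff{r}{B}\cdot e_{\classind{r}{B}}$ (or $\overline{0}$). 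On the other hand Lemma~\ref{consistencyExpansion} gives $H[t]=\big(\prod_i\classcoeff{t_i}{B}\big)H[r]$, and since each $\classcoeff{t_i}{B}\neq 0$ we have $H[t]=\overline{0}$ iff $H[r]=\overline{0}$. When $H[r]\neq \overline{0}$, the co-linearity $H[r]=\classcoeff{r}{B}\,H[\classrepr{r}{B}]$ combined with the previous identity yields $H[t]=\big(\prod_i\classcoeff{t_i}{B}\big)\classcoeff{r}{B}\,H[\classrepr{r}{B}]$; by uniqueness of the basis representative co-linear to a nonzero row this forces $\classind{t}{B}=\classind{r}{B}$ and $\classcoeff{t}{B}=\big(\prod_i\classcoeff{t_i}{B}\big)\classcoeff{r}{B}$, so the displayed value of $\mu(t)$ is exactly $\classcoeff{t}{B}\cdot e_{\classind{t}{B}}$, as required; and when $H[r]=\overline{0}$ both sides are $\overline{0}$.

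It remains to handle the case where some child satisfies $H[t_i]=\overline{0}$. Then $\mu(t_i)=\overline{0}$ by the induction hypothesis, so multilinearity forces $\mu(t)=\overline{0}$, and zero-consistency of the table (applied to $t_i\in T$ and the tree $t\in\Sigma(T,t_i)$, together with $\context\in C$) guarantees $H[t]=\overline{0}$, so the second clause holds. I expect the main obstacle to be the bookkeeping around the class representative: one must argue that the representative and coefficient produced by the \emph{two-level} expansion (of $t$ through its children into $r$, and of $r$ into its own basis representative) agree with $\classind{t}{B}$ and $\classcoeff{t}{B}$. This is precisely where Lemma~\ref{consistencyExpansion}, the non-vanishing of the child coefficients, and the uniqueness of the basis element co-linear to a nonzero row must be combined; the zero-row subcases are then routine once zero-consistency is invoked.
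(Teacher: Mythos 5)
Your proof is correct and follows essentially the same route as the paper's: induction on the height of $t$, reading off $\mu_\sigma$ and the columns of $\mu_?$ from \ExtractCMTA, factoring the child coefficients out by multilinearity, and identifying the one surviving column of $\mu_?$ via Lemma~\ref{consistencyExpansion}. The only differences are minor: where the paper cites Lemma~\ref{lem:equal-prods-coef} to recognize the matrix entry as $\frac{\classcoeff{t}{B}}{\prod_{i}\classcoeff{t_i}{B}}$, you re-derive exactly that special case inline from Lemma~\ref{consistencyExpansion} together with uniqueness of the basis representative (and of its coefficient) for a nonzero row, and you handle the case of a zero-row child explicitly through the induction hypothesis and zero-consistency, a case the paper's induction step passes over tacitly.
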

\begin{proof}
	The proof is by induction on the height of $t$.
	For the base case, $t=\sigma$ is a leaf, for some $\sigma\in\Sigma$. If $H[t]\neq 0$, by Alg.~\AlgExtractCMTA, 
	we set $\sigma^{\classind{t}{B}}$ to be $\classcoeff{t}{B}$, and for every $j\neq\classind{t}{B}$ we set $\sigma^{j}$ to be $0$, 
	so $\mu(t)=\mu_{\sigma} =\classcoeff{t}{B}\cdot [t]$ as required. 
	Otherwise, if $H[t]=0$ then we set $\sigma^{i}$ to be $0$ for every $i$, so $\mu(t)=0$ as required.

	For the induction step,  $t$ is not a leaf. Then $t=?(t_{1},t_{2},...,t_{p})$. 
	If $H[t]\neq 0$, then since $H$ is zero-consistent, we have for every $1\leq i\leq p$ that 
	$H[t_{i}]\neq 0$. So for every $1\leq j\leq p$ by induction hypothesis we have 
	$\mu(t_{j})=\classcoeff{t_{j}}{B}\cdot [t_{j}]$. Hence
	\begin{equation*}
	\begin{array}{rl}
	\mu(t)=&\mu_{?}(\mu(t_{1}),~\ldots~,\mu(t_{p}))=\\
	=&\mu_{?}(\classcoeff{t_{1}}{B}\cdot [t_1],~\ldots~,\classcoeff{t_{p}}{B}\cdot [t_{p}]) \\
	\end{array}
	\end{equation*}
	Therefore we have
	\begin{equation*}
	\mu(t)[j]=\sum_{j_{1},...,j_{p}\in [n]^{p}}\sigma^{j}_{j_{1},...,j_{p}}\cdot\classcoeff{t_{1}}{B} [t_1][j_{1}]\cdots\classcoeff{t_{p}}{B} [t_p][j_{p}]
	\end{equation*}
	Note that for every $j_{1},j_{2},...,j_{p}\neq \classind{t_{1}}{B},\classind{t_{2}}{B},...,\classind{t_{p}}{B}$ we have $\classcoeff{t_{1}}{B} [t_{1}][j_{1}]~\cdots~\classcoeff{t_{p}}{B} [t_{p}][j_{p}]=0$, thus
	\begin{align*}
	\mu(t)[j]&=\sigma^{j}_{\classind{t_{1}}{B},...,\classind{t_{p}}{B}}\cdot~ \classcoeff{t_{1}}{B} [t_{1}][\classind{t_{1}}{B}]~\cdots~\classcoeff{t_{p}}{B} [t_{p}][\classind{t_{p}}{B}]\\
	&=\sigma^{j}_{\classind{t_{1}}{B},...,\classind{t_{p}}{B}}\cdot \classcoeff{t_{1}}{B}\cdot \classcoeff{t_{2}}{B}~\cdots~\classcoeff{t_{p}}{B}
	\end{align*}
	From Alg.~\AlgExtractCMTA, and Lemma~\ref{lem:equal-prods-coef} it follows that 	
	$$\sigma^{j}_{\classind{t_{1}}{B},\classind{t_{2}}{B},...,\classind{t_{p}}{B}}=
	\left\{ \begin{array}{l@{\quad }l} 
	0 & \mbox{if } j\neq \classind{t}{B} \\
	\frac{\classcoeff{t}{B}}{\prod_{j=1}^{p}{B}\classcoeff{t_{p}}{B}} & \mbox{if } j=\classind{t}{B}
	\end{array}\right.$$
		Hence we obtain
	\begin{align*}
	\mu(t)[\classind{t}{B}]&=\sigma^{j}_{\classind{t_{1}}{B},\classind{t_{2}}{B},...,\classind{t_{p}}{B}}\cdot \classcoeff{t_{1}}{B}\cdot \classcoeff{t_{2}}{B}\cdots \classcoeff{t_{p}}{B} \\ &=\frac{\classcoeff{t}{B}}{\prod_{j=1}^{p}\classcoeff{t_{p}}{B}}\cdot {\prod_{j=1}^{p}\classcoeff{t_{p}}{B}}=\classcoeff{t}{B}
	\end{align*}
	Thus $\mu(t)=\classcoeff{t}{B}\cdot [t]$ as required.
	
	If $H[t]=0$ then $\sigma^{i}_{\classind{t_{1}}{B},\classind{t_{2}}{B},...,\classind{t_{p}}{B}}=0$ for every $i$, and we obtain
	$\mu(t)=0$  
	as required.
\end{proof}

Next we show that  rows in the basis get a standard basis vector. 

\begin{lemma}\label{lemmaBase}
	For every $b_{i}\in B$, $\mu(b_{i})=e_{i}$ where $e_{i}$ is the $i$'th standard basis vector.
\end{lemma}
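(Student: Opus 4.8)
The plan is to derive this statement as an almost immediate specialization of Lemma~\ref{lem:mu-t-rel}, once three quantities associated with a basis row have been pinned down: the coefficient $\classcoeff{b_i}{B}$, the index $\classind{b_i}{B}$, and the fact that $H[b_i]\neq 0$. Since $b_i\in B\subseteq T$, Lemma~\ref{lem:mu-t-rel} applies to $t=b_i$, and it states that whenever $H[b_i]\neq 0$ we have $\mu(b_i)=\classcoeff{b_i}{B}\cdot e_{\classind{b_i}{B}}$ (reading the lemma's $[b_i]$ as the standard basis vector indexed by the co-linearity class of $b_i$). So the whole statement reduces to showing $H[b_i]\neq 0$, $\classind{b_i}{B}=i$, and $\classcoeff{b_i}{B}=1$.

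First I would establish $H[b_i]\neq 0$. Basis rows are, by definition, co-linearly independent, and the zero row is co-linear to no nonzero row (if $\bar 0=\alpha v$ with $\alpha\neq 0$ then $v=\bar 0$); hence a zero row cannot sit in $B$ alongside the genuinely independent nonzero rows. This matches the construction, where a tree $t$ with $H[t]=\bar 0$ is added to $T$ but never to $B$ (as illustrated by $t_4$ in the running example). With $H[b_i]\neq 0$ in hand, the unique basis element co-linear to $b_i$ must be $b_i$ itself: any other $b_j$ co-linear to $b_i$ would contradict the co-linear independence of $B$. Thus $\classrepr{b_i}{B}=b_i$, and therefore $\classind{b_i}{B}=i$. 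Finally, $b_i\treesColin{\alpha}{H}b_i$ unwinds to $H[b_i]=\alpha\cdot H[b_i]$, and since $H[b_i]\neq 0$ this forces $\alpha=1$, i.e. $\classcoeff{b_i}{B}=1$.

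Substituting these values into Lemma~\ref{lem:mu-t-rel} yields $\mu(b_i)=1\cdot e_i=e_i$, which is exactly the claim. I do not expect a genuine obstacle here: the result is essentially a corollary of the preceding lemma, and the only point requiring care is the bookkeeping that a basis element is its own co-linearity representative with coefficient $1$. That is precisely where the nonzero-row property of basis members is needed, since without $H[b_i]\neq 0$ the scalar $\alpha$ and the representative index would be ill-determined; everything else is direct substitution.
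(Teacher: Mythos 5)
Your proof is correct, but it takes a genuinely different route from the paper's. The paper proves Lemma~\ref{lemmaBase} by a self-contained induction on the height of $b_i$: the base case reads the definition of Alg.~\AlgExtractCMTA\ for leaves, and the induction step uses the structural fact that, by the closing procedure (Alg.~\AlgCloseCMTA), every child of a basis tree is itself a basis tree, so $b_i=\sigma(b_{i_1},\ldots,b_{i_p})$ with each $b_{i_j}\in B$, and the extracted entry $\sigma^{i}_{i_1,\ldots,i_p}=1$ then propagates standard basis vectors upward. You instead obtain the statement as a corollary of Lemma~\ref{lem:mu-t-rel} specialized to $t=b_i$, after pinning down three pieces of bookkeeping: $H[b_i]\neq 0$ (zero rows are never placed in $B$, as the paper's running example makes explicit with $t_4$), $\classrepr{b_i}{B}=b_i$ by the uniqueness clause in the definition of a basis, and consequently $\classcoeff{b_i}{B}=1$. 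This is legitimate and non-circular: Lemma~\ref{lem:mu-t-rel} precedes Lemma~\ref{lemmaBase}, and its proof (via Lemma~\ref{lem:equal-prods-coef} and Alg.~\AlgExtractCMTA) nowhere invokes the statement you are proving. What your route buys is brevity and the explicit isolation of two facts the paper leaves implicit, namely that basis rows are nonzero and are their own co-linearity representatives with coefficient $1$; what it costs is a dependence on reading the overloaded notation $[t]$ in Lemma~\ref{lem:mu-t-rel} as the standard basis vector $e_{\classind{t}{B}}$ rather than as a tree (your reading is the intended one, as that lemma's own proof shows), whereas the paper's induction is independent of that lemma and additionally surfaces the useful structural property that $B$ is closed under taking children.
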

\begin{proof}
	By induction on the height of $b_{i}$.
	For the base case, $b_{i}$ is a leaf, so $b=\sigma$ for $\sigma\in\Sigma_0$. By  Alg.~\AlgExtractCMTA we set $\sigma_{i}$ to be $1$ and $\sigma_{j}$ to be $0$ for every $j\neq i$, so $\mu(b_{i})=e_{i}$.
	
	For the induction step, $b_{i}$ is not a leaf. Note that by definition of the method $\Close$ (Alg.~\AlgCloseCMTA), all the children of $b_{i}$ are in $B$. So $b_{i}=\sigma(b_{i_{1}},b_{i_{2}},...,b_{i_{p}})$ for some base rows $b_{i_j}$'s. Let's calculate $\mu(b_{i})[j]$
	\begin{equation*}
	\mu(b_{i})[j]=\sum_{j_{1},j_{2},...,j_{p}\in [n]^{p}}\sigma^{j}_{j_{1},j_{2},...,j_{p}}\cdot\mu(b_{i_{1}})[j_{1}]\cdot\hdots\cdot\mu(b_{i_{p}})[j_{p}]
	\end{equation*}
	By the induction hypothesis, for every $1\leq j\leq p$ we have that $\mu(b_{i_{j}})[i_{j}]=1$, and $\mu(b_{i_{j}})[k]=0$ for $k\neq i_{j}$. So for every vector $j_{1},j_{2},...,j_{p}\neq i_{1},i_{2},...,i_{p}$ we obtain
	\begin{equation*}
	\mu(b_{i_{1}})[j_{1}]\cdot\hdots\cdot\mu(b_{i_{p}})[j_{p}]=0
	\end{equation*}
	For $j_{1},j_{2},...,j_{p}=i_{1},i_{2},...,i_{p}$ we obtain
	\begin{equation*}
	\mu(b_{i_{1}})[j_{1}]\cdot\hdots\cdot\mu(b_{i_{p}})[j_{p}]=1
	\end{equation*}
	Hence we have
	\begin{equation*}
	\mu(b_{i})[j]=\sigma^{i}_{i_{1},i_{2},...,i_{p}}
	\end{equation*}
	By Alg.~\AlgExtractCMTA we have that $\sigma^{i}_{i_{1},i_{2},...,i_{p}}=1$ and $\sigma^{j}_{i_{1},i_{2},...,i_{p}}=0$ for $j\neq i$, so $\mu(b_{i})[i]=1$ and $\mu(b_{i})[j]=0$ for $j\neq i$. Hence $\mu(b_{i})=e_{i}$ as required.
\end{proof}

The next lemma states 
for a tree $t=\sigma(b_{i_{1}},b_{i_{2}},...,b_{i_{p}})$ with
children in the basis, 
if $t\treesColin{\alpha}{H}b_i$ then  
 $\mu(t)=\alpha\cdot  e_{i}$ where $e_{i}$ 
 is the $i$'th standard basis vector.

\begin{lemma}\label{lemmaExtension}
	Let $t=\sigma(b_{i_{1}},b_{i_{2}},...,b_{i_{p}})$, s.t. ${b_{i_{j}}\in B}$ for ${1\leq j\leq p}$. Assume ${H[t]=\alpha\cdot H[b_{i}]}$ for some $i$. Then $\mu(t)=\alpha\cdot  e_{i}$.
\end{lemma}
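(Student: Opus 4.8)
The plan is to compute $\mu(t)$ directly from the definition of the induced transition function, exactly along the lines of the proof of Lemma~\ref{lemmaBase}; the only new feature is that here the root extension $t$ need not itself belong to the basis. First I would use the hypothesis that every child $b_{i_j}$ is a basis element, together with Lemma~\ref{lemmaBase}, to conclude that $\mu(b_{i_j}) = e_{i_j}$ for each $1 \le j \le p$. Since $t = \sigma(b_{i_1},\ldots,b_{i_p})$, the definition of the induced map then gives $\mu(t) = \mu_\sigma(e_{i_1},\ldots,e_{i_p})$.

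Next I would expand this value coordinatewise using the multilinear formula, writing, for each $j$,
\[
\mu(t)[j] = \sum_{(j_1,\ldots,j_p)\in\{1,\ldots,d\}^p} \sigma^j_{j_1,\ldots,j_p}\cdot e_{i_1}[j_1]\cdots e_{i_p}[j_p].
\]
Because $e_{i_l}[j_l]$ equals $1$ exactly when $j_l = i_l$ and is $0$ otherwise, every summand vanishes except the one indexed by $(j_1,\ldots,j_p) = (i_1,\ldots,i_p)$, leaving $\mu(t)[j] = \sigma^j_{i_1,\ldots,i_p}$. I would then read off this coefficient from the extraction procedure (Alg.~\AlgExtractCMTA): since $H[t] = \alpha\cdot H[b_i]$ with $\alpha\neq 0$ we have $t\treesColin{\alpha}{H} b_i$, so the algorithm sets $\sigma^i_{i_1,\ldots,i_p} = \alpha$ and $\sigma^j_{i_1,\ldots,i_p} = 0$ for every $j\neq i$. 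Hence $\mu(t)[i] = \alpha$ and $\mu(t)[j] = 0$ for $j\neq i$, i.e.\ $\mu(t) = \alpha\cdot e_i$, as required. (In the degenerate case $\alpha = 0$ we have $H[t] = 0$, the algorithm zeroes the whole column, and $\mu(t) = 0 = \alpha\cdot e_i$ still holds.)

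I do not expect a genuine obstacle: the argument is essentially a one-step computation resting entirely on Lemma~\ref{lemmaBase} and on the way Alg.~\AlgExtractCMTA fixes the coefficients of $\mu_\sigma$. The only point requiring a little care is justifying the collapse of the multilinear sum to a single term, which is precisely where the standard-basis form $\mu(b_{i_j}) = e_{i_j}$ of the children is invoked; this is the same simplification already performed in Lemma~\ref{lemmaBase}, now applied with an extension whose co-linearity scalar $\alpha$ is carried through to the final vector.
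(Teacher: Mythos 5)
Your proof is correct and takes essentially the same route as the paper's: invoke Lemma~\ref{lemmaBase} to replace each child by its standard basis vector, collapse the multilinear sum to the single term $\sigma^{j}_{i_{1},\ldots,i_{p}}$, and read that coefficient off from Alg.~\AlgExtractCMTA. The only differences are cosmetic — the paper treats the leaf case ($p=0$) as a separate first case while your computation subsumes it, and you additionally spell out the degenerate case $\alpha=0$, which the paper leaves implicit.
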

\begin{proof}
	If $t=\sigma$ is a leaf, then by definition we have $\sigma_{i}=\alpha$ and $\sigma_{j}=0$ for $j\neq i$, so $\mu(t)=\alpha\cdot e_{i}$.
	Otherwise, $t$ isn't a leaf. Assume $t=\sigma(b_{i_{1}},b_{i_{2}},...,b_{i_{p}})$. We thus have
	\begin{equation*}
	\mu(t)[j]=\sum_{j_{1},j_{2},...,j_{p}\in [n]^{p}}\sigma^{j}_{j_{1},j_{2},...,j_{p}}\cdot\mu(b_{i_{1}})[j_{1}]\cdot\hdots\cdot\mu(b_{i_{p}})[j_{p}]
	\end{equation*}
	By Lemma \ref{lemmaBase} we have that $\mu(b_{i_{j}})=e_{i_{j}}$ for $1\leq j\leq p$, hence using a similar technique to the one used in the proof of Lemma \ref{lemmaBase} we obtain that for every $1\leq j\leq p$:
	\begin{equation*}
	\mu(t)[j]=\sigma^{j}_{i_{1},i_{2},...,i_{p}}
	\end{equation*}
	By Alg.~\AlgExtractCMTA we have that $\sigma^{j}_{i_{1},i_{2},...,i_{p}}=\alpha$ for $i=j$ and $\sigma^{j}_{i_{1},i_{2},...,i_{p}}=0$ for $i\neq j$, thus $\mu(t)=\alpha\cdot  e_{i}$ as required.
\end{proof}

The following lemma generalizes the previous lemma to any tree $t\in T$.

\begin{lemma}\label{lem:Correctness}
	Let $H$ be a closed consistent sub-matrix of the Hankel Matrix. Then for every $t\in T$ s.t. $H[t]=\alpha\cdot H[b_{i}]$ we have $\mu(t)=\alpha\cdot e_{i}$
\end{lemma}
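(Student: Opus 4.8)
The plan is to obtain this essentially as a corollary of Lemma~\ref{lem:mu-t-rel}, which already computes $\mu(t)$ in terms of the canonical co-linear data that the basis $B$ attaches to $t$, namely $\classcoeff{t}{B}$ and $\classind{t}{B}$. Since Lemma~\ref{lem:mu-t-rel} applies to every $t\in T$, the whole task reduces to matching that canonical data against the hypothesis $H[t]=\alpha\cdot H[b_i]$. I would split on whether $H[t]$ is zero.

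First I would dispose of the degenerate case $H[t]=\overline{0}$. Basis rows are co-linearly independent, hence nonzero, so $H[b_i]\neq\overline{0}$; then $H[t]=\alpha\cdot H[b_i]=\overline{0}$ forces $\alpha=0$, and the second clause of Lemma~\ref{lem:mu-t-rel} gives $\mu(t)=\overline{0}=\alpha\cdot e_i$, as required. For the main case $H[t]\neq\overline{0}$, I would invoke the defining property of the basis: there is a \emph{unique} $b\in B$ with $t\treesColin{\classcoeff{t}{B}}{H} b$, and we set $b=\classrepr{t}{B}$ and $i'=\classind{t}{B}$. The hypothesis $H[t]=\alpha\cdot H[b_i]$ (with $\alpha\neq 0$, forced by $H[t]\neq\overline{0}$) exhibits $b_i$ as a basis row co-linear to $t$, so by uniqueness $b_i=\classrepr{t}{B}$, i.e. $i=i'=\classind{t}{B}$; and since $H[b_i]\neq\overline{0}$, the scalar is pinned down as well, $\classcoeff{t}{B}=\alpha$. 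Lemma~\ref{lem:mu-t-rel} then yields $\mu(t)=\classcoeff{t}{B}\cdot e_{\classind{t}{B}}=\alpha\cdot e_i$.

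The only content beyond bookkeeping is the uniqueness of the co-linear representative, i.e. that $t$ cannot be co-linear to two distinct basis rows. This is exactly where co-linear independence of $B$ is used: if $t$ were co-linear (with nonzero scalars) to both $b_i$ and $b_j$ for $i\neq j$, then since $\treesEquiv{H}$ is an equivalence relation $b_i$ and $b_j$ would be co-linear to each other, contradicting the independence of the basis. I expect this step---confirming that the pair $(i,\alpha)$ appearing in $H[t]=\alpha\cdot H[b_i]$ is precisely the canonical pair $(\classind{t}{B},\classcoeff{t}{B})$---to be the main (indeed essentially the only) point requiring care; everything else is a direct appeal to Lemma~\ref{lem:mu-t-rel}, which itself relies on the closedness and consistency of $H$.
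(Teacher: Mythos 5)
Your proof is correct, but it follows a genuinely different route from the paper's. The paper proves this lemma by a second induction on the height of $t$: the base case invokes Lemma~\ref{lemmaExtension}, and the inductive step uses prefix-closedness of $T$, multilinearity of $\mu_\sigma$, Lemma~\ref{lemmaBase}, consistency of the table, and Lemma~\ref{lemmaExtension} once more, ending with $\mu(t)=\beta\cdot\alpha\cdot e_{i}$ and $t\treesColin{\alpha\beta}{H} b_{i}$ for the \emph{canonical} representative of $t$. You instead treat the lemma as a corollary of Lemma~\ref{lem:mu-t-rel}, whose conclusion (reading $[t]$ there as the standard basis vector $e_{\classind{t}{B}}$, which is what its proof actually establishes) already computes $\mu(t)$ for every $t\in T$; the only remaining work is your case split on $H[t]=\overline{0}$ and the identification of the witnessing pair $(\alpha,i)$ with the canonical pair $(\classcoeff{t}{B},\classind{t}{B})$ via uniqueness of the co-linear representative. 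This is a leaner decomposition: it avoids duplicating an induction the paper has in effect already carried out in Lemma~\ref{lem:mu-t-rel} (the two proofs are largely parallel), and it makes explicit a step the paper leaves implicit --- the paper's induction concludes with co-linearity to the canonical basis row, and matching that against the arbitrary witness $(\alpha,i)$ quantified in the statement requires exactly your uniqueness argument. What the paper's version buys is a derivation that does not lean on the notationally ambiguous statement of Lemma~\ref{lem:mu-t-rel} and that shows directly how consistency propagates coefficients up the tree. One small imprecision on your side: the inference ``basis rows are co-linearly independent, hence nonzero'' is not forced by the paper's definitions, since under the stated convention a zero row is co-linear only to other zero rows, so a single zero row would vacuously be co-linearly independent of all nonzero ones. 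The fact you need --- $H[b_{i}]\neq\overline{0}$ for every $b_{i}\in B$ --- is nevertheless true in the paper's setup, because the algorithm never admits zero rows into $B$ (the paper notes this explicitly for $t_{4}$ in the running example, and the definition of a basis requires each nonzero row of $T$ to be co-linear to a unique basis element); so this is a presentational slip, not a gap.
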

\begin{proof}
	By induction on the height of $t$.
	For the base case $t$ is a leaf, and the claim holds by Lemma \ref{lemmaExtension}.
	Assume the claim holds for all trees of height at most $h$. Let $t$ be a tree of height $h$. Then $t=\sigma(t_{1},t_{2},...,t_{p})$. Since $T$ is prefix-closed, for every $1\leq j\leq p$ we have that $t_{j}\in T$. And from the induction hypothesis for every $1\leq j\leq p$ we have that $\mu(t_{j})=\alpha_{j}\cdot e_{i_{j}}$. Hence
	\begin{align*}
	\mu(t)&=\mu(\sigma(t_{1},t_{2},...,t_{p}))=\mu_{\sigma}(\mu(t_{1}),\mu(t_{2}),...,\mu(t_{p}))\\
	&=\mu_{\sigma}(\alpha_{1}\cdot e_{i_{1}},\alpha_{2}\cdot e_{i_{2}},...,\alpha_{p}\cdot e_{i_{p}})\\
	&=\prod_{j=1}^{p}\alpha_{j} \cdot \mu_{\sigma}(e_{i_{1}},e_{i_{2}},...,e_{i_{p}})
	\end{align*}
	Let $t'=\sigma(b_{i_{1}},b_{i_{2}},...,b_{i_{p}})$. From Lemma \ref{lemmaBase} we have
	\begin{align*}
	\mu(t)&=\prod_{j=1}^{p}\alpha_{j}\cdot\mu_{\sigma}(e_{i_{1}},e_{i_{2}},...,e_{i_{p}})\\
	&= \prod_{j=1}^{p}\alpha_{j}\cdot \mu_\sigma(\mu(b_{i_{1}}),\mu(b_{i_{2}}),...,\mu(b_{i_{p}}))\\
	&=\prod_{j=1}^{p}\alpha_{j}\cdot\mu(t')
	\end{align*}
	Since the table is consistent, we know that for each $1\leq j\leq p$ and $c\in C$:
	\begin{align*}
	H[\sigma(t_{1},t_{2},...,t_{j-1},t_{j},t_{j+1},...,t_{p})][c]=
	\alpha_{j}\cdot H[\sigma(t_{1},t_{2},...,t_{j-1},b_{i_{j}},t_{j+1},...,t_{p})][c]
	\end{align*}
	We can continue using consistency to obtain that
	\begin{align*}
	H[t][c]&=H[\sigma(t_{1},t_{2},...,t_{p})][c]\\
	&=\prod_{j=1}^{p}\alpha_{j}\cdot H[\sigma(b_{1},b_{2},...,,b_{p})][c]\\
	&=\prod_{j=1}^{p}\alpha_{j}\cdot H[t'][c]
	\end{align*}
	Thus $H[t]=\prod_{j=1}^{p}\alpha_{j}\cdot H[t']$. Let $\beta=\prod_{j=1}^{p}\alpha_{j}$, then $t\treesColin{\beta}{H} t'$. Let $b$ be the element in the base s.t. $t'\treesColin{\alpha}{H} b_{i}$. From Lemma \ref{lemmaExtension} we have that $\mu(t')=\alpha\cdot e_{i}$. Therefore $\mu(t)=\beta\cdot\alpha\cdot e_{i}$. 
	We have $\mu(t)=\beta\cdot\alpha\cdot e_{i}$ and $t\treesColin{\alpha\cdot\beta}{H} b_{i}$. Therefore the claim holds.
\end{proof}

We are now ready to show that for every tree $t\in T$ and context $c\in C$ the obtained CMTA
agrees with the observation table.

\begin{lemma}
	For every $t\in T$ and for every $c\in C$ we have that $\mathcal{A}(\contextConcat{c}{t})=H[t][c]$.
\end{lemma}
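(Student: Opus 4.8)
The plan is to prove the statement by induction on the depth of the hole $\context$ in the context $c$, with the claim quantified over all $t\in T$ at once. The two workhorses are Lemma~\ref{lem:Correctness} (which tells us $\mu(t)=\classcoeff{t}{B}\cdot e_{\classind{t}{B}}$, i.e.\ $\mu(t)$ is a scalar multiple of a standard basis vector, with the scalar and index read off from the co-linearity class of $H[t]$) and the replacement Lemma~\ref{replacementLemma} (which pulls such a scalar through an arbitrary context). I will also use a structural property of the column set $C$ produced by $\Consistent$: apart from $\context$, every $c\in C$ decomposes as $c=\contextConcat{c'}{c_0}$ where $c'\in C$ has strictly smaller hole-depth and $c_0=\sigma(t_1,\ldots,t_{j-1},\context,t_{j+1},\ldots,t_k)$ is a one-level context all of whose side-trees $t_m$ lie in $T$. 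This invariant follows from the way columns are added (a violating column is always the composition of a pre-existing column with a one-level context from $\Sigma(T,\context)$), and it is exactly what keeps the induction inside $C$ and keeps the formed trees inside $\Sigma(T)$.

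For the base case $c=\context$ we have $\contextConcat{\context}{t}=t$, so $\mathcal{A}(\contextConcat{c}{t})=\lambda\cdot\mu(t)$. By Lemma~\ref{lem:Correctness}, $\mu(t)=\classcoeff{t}{B}\cdot e_{\classind{t}{B}}$ (and $\mu(t)=0$ when $H[t]=0$), so $\lambda\cdot\mu(t)=\classcoeff{t}{B}\cdot\lambda[\classind{t}{B}]$. Since $\ExtractCMTA$ sets $\lambda[i]=H[b_i][\context]$, and since $H[t]=\classcoeff{t}{B}\cdot H[\classrepr{t}{B}]$ yields $H[t][\context]=\classcoeff{t}{B}\cdot H[\classrepr{t}{B}][\context]$, the two sides coincide; the case $H[t]=0$ is trivial as both sides vanish.

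For the inductive step write $c=\contextConcat{c'}{c_0}$ as above and set $w=\contextConcat{c_0}{t}=\sigma(t_1,\ldots,t_{j-1},t,t_{j+1},\ldots,t_k)$, so that $\contextConcat{c}{t}=\contextConcat{c'}{w}$. Because $t\in T$ and every $t_m\in T$, we have $w\in\Sigma(T)$; since the table is closed, $H[w]$ restricted to $C$ is co-linear to some basis row, say $H[w][C]=\gamma\cdot H[b_l][C]$ (or identically $0$). Lemma~\ref{lem:Correctness}, whose inductive argument only uses that the children of $w$ lie in $T$ and that the table is closed and consistent, then yields $\mu(w)=\gamma\cdot e_l=\gamma\cdot\mu(b_l)$. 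Applying the replacement Lemma~\ref{replacementLemma} to $c'$ gives $\mu(\contextConcat{c'}{w})=\gamma\cdot\mu(\contextConcat{c'}{b_l})$, whence
\[
\mathcal{A}(\contextConcat{c}{t})=\lambda\cdot\mu(\contextConcat{c'}{w})=\gamma\cdot\lambda\cdot\mu(\contextConcat{c'}{b_l})=\gamma\cdot\mathcal{A}(\contextConcat{c'}{b_l})=\gamma\cdot H[b_l][c'],
\]
the last equality by the induction hypothesis applied to $b_l\in B\subseteq T$ and the shorter context $c'\in C$. On the other hand, $H[t][c]$ is the Hankel value of $\contextConcat{c}{t}=\contextConcat{c'}{w}$, which is exactly the filled entry $H[w][c']$, and co-linearity on the column $c'\in C$ gives $H[w][c']=\gamma\cdot H[b_l][c']$. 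The two computations agree, completing the step (the zero case $\gamma=0$ again makes both sides vanish).

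The part I expect to be the crux is matching the scalar $\gamma$ and index $l$ produced by the \emph{automaton} on $w$ with those governing the \emph{target} values. This matching is not automatic; it is precisely what closedness and consistency buy us, since Lemma~\ref{lem:Correctness} derives $\mu(w)=\gamma\,e_l$ from the \emph{same} relation $H[w][C]=\gamma\,H[b_l][C]$ that I use to rewrite the target entry $H[w][c']$. A secondary point needing care is the structural invariant on $C$ that licenses the decomposition $c=\contextConcat{c'}{c_0}$ with $c'\in C$ and side-trees in $T$: without it, neither the induction hypothesis on $c'$ nor the membership $w\in\Sigma(T)$ (needed to invoke closedness) would be available.
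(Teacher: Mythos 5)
Your proof is correct, and while it shares the paper's skeleton --- induction on the depth of $\context$ in $c$, the identical base case, and the same two workhorses (Lemma~\ref{lem:Correctness} and Lemma~\ref{replacementLemma}) --- your inductive step uses a genuinely different decomposition. The paper normalizes \emph{all} children of the one-level tree $\hat t=\contextConcat{c_0}{t}$ to their basis representatives, forming $\widetilde t=\sigma(b_{i_1},\ldots,b,\ldots,b_{i_p})\in\Sigma(B)$; it applies the induction hypothesis to $\widetilde t$, relates $H[\hat t][c']$ to $\beta\cdot H[\widetilde t][c']$ by inline consistency reasoning, and handles the $\mu$ side by raw multilinearity plus Lemma~\ref{replacementLemma}. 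You instead collapse $w=\contextConcat{c_0}{t}$ to a \emph{single} basis row $b_l$ via closedness ($H[w][C]=\gamma\cdot H[b_l][C]$), apply the induction hypothesis to $b_l$, and let Lemma~\ref{lem:Correctness} absorb the consistency reasoning. Each route has a cost. The paper must assert $\widetilde t\in\Sigma(B)$ \emph{hence} $\widetilde t\in T$ so that the induction hypothesis applies to it --- a claim that depends on conventions about which rows the table carries; your version sidesteps this entirely, since $b_l\in B\subseteq T$ unambiguously. Conversely, you must invoke Lemma~\ref{lem:Correctness} on $w\in\Sigma(T)$ rather than on a member of $T$ as the lemma is stated; you flag this, and your justification is sound --- that lemma's induction needs only that the children of $w$ lie in $T$ (true by definition of $\Sigma(T)$), that closedness spans $w$, and Lemma~\ref{lemmaExtension} at the bottom --- whereas the paper stays strictly within the stated hypotheses by applying the lemma only to trees in $T$. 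A further merit of your write-up is that it makes explicit the structural invariant on $C$ (every non-trivial column is an existing column composed with a one-level context whose side-trees lie in $T$), which the paper uses silently when it writes $c=\contextConcat{c'}{\sigma(t_1,\ldots,\context,\ldots,t_p)}$ with $t_j\in T$ and $c'\in C$; both proofs need this invariant, and it indeed follows from how $\Consistent$ adds columns.
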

\begin{proof}
	Let $t\in T$. Since the table is closed, there exists $b_{i}\in B$ such that $t\treesColin{\alpha}{H} b_{i}$ for some $\alpha\in\mathbb{R}_+$.
	The proof is by induction on the depth of $\context$ in $c$. 
	For the base case, the depth of $c$ is $1$, so $c=\context$, and by Lemma \ref{lem:Correctness} we have that $\mu(\contextConcat{c}{t})=\mu(t)=\alpha \cdot e_{i}$. Therefore $\mathcal{A}(t)=\alpha\cdot e_{i}\cdot\lambda$.  By Alg.~\AlgExtractCMTA we have that $\lambda[i]=H[b_{i}][\context]$. Thus $\mathcal{A}(t)=\alpha\cdot H[b_{i}][\context]=H[t][\context]$ as required. 
	
	For the induction step, let $c$ be a context such that the depth of $\context$ is $h+1$. Hence $c=\contextConcat{c'}{\sigma(t_{1},t_{2},...,t_{i-1},\context,t_{i},...,t_{p})}$ for some trees $t_j\in T$, and some context $c'$  of depth  $h$.  For each $1\leq j\leq p$, let $b_{i_{j}}$ be the element in the base, s.t. $t_{j}\equiv_{H} b_{i_{j}}$, with co-efficient $\alpha_{j}$. Let $b$ be the element in the base s.t. $t\equiv_{H} b$ with coefficient $\alpha$. Let $\widetilde{t}$ be the tree:
	\begin{equation*}
	\widetilde{t}=\sigma(b_{i_{1}},b_{i_{2}},...,b_{i_{k-1}},b,b_{i_{k}},...,b_{i_{p}})
	\end{equation*}
	Note that $\widetilde{t}\in \Sigma(B)$ and hence $\widetilde{t}\in T$. From the induction hypothesis, we obtain:
	\begin{equation*}
	\mathcal{A}(\contextConcat{c'}{\widetilde{t}})=H[\widetilde{t}][c']
	\end{equation*}
	Since the table is consistent, we have:
	\begin{align*}
	H[t][c]&=H[\sigma(t_{i_{1}},t_{i_{2}},...,t_{i_{k-1}},t,t_{i_{k}},...,t_{i_{p}})][c']=\alpha\cdot\prod_{i=1}^{p}\alpha_{i}\cdot H[\widetilde{t}][c']
	\end{align*}
	Let $\beta=\alpha\cdot\prod_{i=1}^{p}\alpha_{i}$.
	By definition of $\aut{A}$ we have: 
	\begin{align*}
	\aut{A}(\contextConcat{c'}{\sigma(b_{i_{1}},b_{i_{2}},...,b_{i_{k-1}},b,b_{i_{k}},...,b_{i_{p}})})=
	\mu(\contextConcat{c'}{\sigma(b_{i_{1}},b_{i_{2}},...,b_{i_{k-1}},b,b_{i_{k}},...,b_{i_{p}})}))\cdot\lambda
	\end{align*}
	Since each $t_{i_{j}}$ is in $T$, from Proposition \ref{lem:Correctness} we have that ${\mu(t_{i_{j}})=\alpha_{j}\cdot b_{i_{j}}}$, and that \linebreak[4]
  ${\mu(t)=\alpha\cdot\mu(b)}$.
	Let $\hat{t}=\sigma(t_{i_{1}},t_{i_{2}},...,t_{i_{k-1}},t,t_{i_{k}},...,t_{i_{p}}))$.
	 So
	\begin{align*}
	\mu(\hat{t}) 
	&=\mu(\sigma(t_{i_{1}},t_{i_{2}},...,t_{i_{k-1}},t,t_{i_{k}},...,t_{i_{p}})))\\
	&=\mu_{\sigma}(\alpha_{1}\cdot b_{i_{1}},...,\alpha_{k-1}\cdot b_{i_{k-1}},\alpha\cdot b,\alpha_{k}\cdot b_{i_{k}},...,\alpha_{p}\cdot b_{i_{p}})\\
	&=\alpha\cdot\prod_{j=1}^{p}\alpha_{i}\cdot\mu_{\sigma}(b_{i_{1}},...,b_{i_{k-1}},b, b_{i_{k}},...,b_{i_{p}})
	=\beta\cdot\mu(\widetilde{t})
	\end{align*}
	By Lemma \ref{replacementLemma} we have that $$\mu(\contextConcat{c'}{\sigma(t_{i_{1}},t_{i_{2}},...,t_{i_{k-1}},t,t_{i_{k}},...,t_{i_{p}})})=\beta\cdot\mu(\contextConcat{c'}{\widetilde{t}})$$
	Hence
	\begin{align*}
	\mathcal{A}(\contextConcat{c}{t})&=\mathcal{A}(\contextConcat{c'}{\sigma(t_{i_{1}},t_{i_{2}},...,t_{i_{k-1}},t,t_{i_{k}},...,t_{i_{p}})})\\
	&=\mu(\contextConcat{c'}{\sigma(t_{i_{1}},t_{i_{2}},...,t_{i_{k-1}},t,t_{i_{k}},...,t_{i_{p}})})\cdot\lambda
	=\beta\cdot\mu(\contextConcat{c'}{\widetilde{t}})\cdot\lambda
	\end{align*}
	Note that all the children of $\widetilde{t}$ are in $B$, and so $\widetilde{t}\in T$. Hence, from the induction hypothesis we have
	\begin{equation*}
	H[\widetilde{t}][c']=\mathcal{A}(\contextConcat{c'}{\widetilde{t}})=\mu(\contextConcat{c'}{\widetilde{t}})\cdot\lambda
	\end{equation*}
	Thus
	\begin{equation*}
	\mathcal{A}(\contextConcat{c}{t})=\beta\cdot H[\widetilde{t}][c']=H[t][c]
	\end{equation*}
	as required.
\end{proof}

\section{Discussion}
The quest to learn probabilistic automata and grammars is still ongoing. 
Because of the known hardness results some restrictions need to be applied.
Recent works include an  \lstar\ learning algorithm for MDPs~\cite{TapplerA0EL21}
(here the assumption is that states of the MDPs generate an observable output that allows identifying the current state based on the generated input-output sequence),
a passive learning algorithm for a subclass of PCFGs obtained by imposing several structural restrictions~\cite{ClarkF20,pmlr-v153-clark21a}, 
and using PDFA learning to obtain an interpretable model 
of practically black-box models such as recurrent neural networks~\cite{WeissGY19}.
 
 We have presented an algorithm for learning structurally unambiguous PCFGs from a given black-box language model using structured membership and equivalence queries.
To our knowledge this is the first algorithm provided for this question. 
Following the motivation of~\cite{WeissGY19}, the present work offers obtaining intrepretable models also in cases where the studied
object exhibits non-regular (yet context-free) behavior.
For future work, we think that improving our method to be more noise-tolerant would make the algorithm able to learn complex regular and context-free grammars from recurrent neural networks.

\commentout{
\paragraph{New Related Work}\dana{move}

The paper provides an

Probabilistic automata have also been recently investigated in the passive learning paradigm~\cite{ClarkF20,}.
These work consider anchored PCFGs which pose three restrictions on PCFGs: .........
They prove convergence with (approximately) correct probabilities to an isomorphic PCFG is guaranteed (no complexity analysis is given).\dana{There are probably more references on passive learning PCFGs}
}

\bibliographystyle{alphaurl}
\bibliography{bib}

\end{document}